\documentclass[11pt]{article}
\usepackage{style}

\newtheorem{theorem}{Theorem}[section]

\usepackage{aliascnt}
\newaliascnt{definition}{theorem}
\newtheorem{definition}[definition]{Definition}
\aliascntresetthe{definition}

\newaliascnt{lemma}{theorem}
\newtheorem{lemma}[lemma]{Lemma}
\aliascntresetthe{lemma}

\newaliascnt{proposition}{theorem}
\newtheorem{proposition}[proposition]{Proposition}
\aliascntresetthe{proposition}

\newaliascnt{conjecture}{theorem}
\newtheorem{conjecture}[conjecture]{Conjecture}
\aliascntresetthe{conjecture}

\newtheorem{remark}{Remark}

\def\authorNote{} % comment this line to remove the author note

\ifdefined\authorNote
\newcommand{\yc}[1]{{\color{magenta} [{YC:} #1]}}
\newcommand{\nai}[1]{{\color{brown} [{Nai:} #1]}}
\newcommand{\atsuya}[1]{{\color{OliveGreen} [{Atsuya:} #1]}}
\newcommand{\francois}[1]{{\color{blue} [{Francois:} #1]}}

\else
\newcommand{\yc}[1]{}%{{\color{blue} [{Someone:} #1]}}
\newcommand{\nai}[1]{}%{{\color{brown} [{Nai:} #1]}}
\newcommand{\atsuya}[1]{}
\newcommand{\francois}[1]{}
\fi

\title{Fine-Grained Complexity for Quantum Problems from Size-Preserving Circuit-to-Hamiltonian Constructions}

\author[1]{Nai-Hui Chia\thanks{nc67@rice.edu}}
\author[2]{Atsuya Hasegawa\thanks{atsuya.hasegawa@math.nagoya-u.ac.jp}}
\author[2]{Fran{\c{c}}ois Le Gall\thanks{legall@math.nagoya-u.ac.jp}}
\author[1]{Yu-Ching Shen\thanks{ycshen@rice.edu}}

\affil[1]{\textit{Department of Computer Science, Rice University, USA}}
\affil[2]{\textit{Graduate School of Mathematics, Nagoya University, Japan}}

\date{}

\begin{document}

\maketitle

\begin{abstract}
    The local Hamiltonian (LH) problem is the canonical $\mathsf{QMA}$-complete problem introduced by Kitaev. In this paper, we show its hardness in a very strong sense: we show that the 3-local Hamiltonian problem on $n$ qubits cannot be solved classically in time $O(2^{(1-\varepsilon)n})$ for any $\varepsilon>0$ under the Strong Exponential-Time Hypothesis (SETH), and cannot be solved quantumly in time $O(2^{(1-\varepsilon)n/2})$ for any $\varepsilon>0$ under the Quantum Strong Exponential-Time Hypothesis (QSETH). These lower bounds give evidence that the currently known classical and quantum algorithms for LH cannot be significantly improved.

    Furthermore, we are able to demonstrate fine-grained complexity lower bounds for approximating the quantum partition function (QPF) with an arbitrary constant relative error. Approximating QPF with relative error is known to be equivalent to approximately counting the dimension of the solution subspace of $\mathsf{QMA}$ problems. We show the SETH and QSETH hardness to estimate QPF with constant relative error. We then provide a quantum algorithm that runs in $O(\sqrt{2^n})$ time for an arbitrary $1/\mathrm{poly}(n)$ relative error, matching our lower bounds and improving the state-of-the-art algorithm by Bravyi, Chowdhury, Gosset, and Wocjan (Nature Physics 2022) in the low-temperature regime.
	
    To prove our fine-grained lower bounds, we introduce the first size-preserving circuit-to-Hamiltonian construction that encodes the computation of a $T$-time quantum circuit acting on $N$ qubits into a $(d+1)$-local Hamiltonian acting on $N+O(T^{1/d})$ qubits. This improves the standard construction based on the unary clock, which uses $N+O(T)$ qubits.
\end{abstract}

\clearpage

\tableofcontents

 \clearpage

\section{Introduction}

\subsection{Background}

\paragraph{Local Hamiltonian (LH) problem.}
Kitaev \cite{Kitaev+02} introduced the $k$-local Hamiltonian ($\kLH$) problem, which asks to estimate the ground-state energy of a $k$-local Hamiltonian 
\[
	H=\sum_{i=1}^m H_i\,,
\]
acting on $n$ qubits,
where $m=\poly(n)$ and $\norm{H}\le \poly(n)$, with constant additive error.\footnote{Equivalently, we can define the local Hamiltonian problem as $||H|| \leq 1$ and $1/\poly(n)$ additive error. Our definition is the same as \cite{Kempe+06}.} Kitaev, at the same time, showed that the problem is $\QMA$-complete, i.e., any efficient quantum verification procedure can be embedded into the local Hamiltonian problem. The embedding procedure is called the circuit-to-Hamiltonian construction, which is a quantum analog of the Cook-Levin theorem \cite{Cook1971,Levin1973}. The local Hamiltonian problem is in the interface between physics and computer science, and has been the central problem in quantum complexity theory. Subsequent work showed that, when we reduce the locality to 2, restrict the interaction of local terms to geometrically local, and consider some specific type of Hamiltonians, the problem is still $\QMA$-hard \cite{Biamonte+08,Cubitt+16,Kempe+06,Oliveira+08,Piddock+17}.

The $k$-local Hamiltonian problem can be solved classically in $\myO{2^n}$ time using the power method or its variant the Lanczos method \cite{KW92,Lanczos50}, for $k=O(\log n)$.\footnote{In this paper the notation $\myO{\cdot}$ suppresses $\poly(n)$ factors.} 
On a quantum computer, we can use Grover search \cite{GroverSTOC96} combined with other techniques to achieve running time $\myO{2^{n/2}}$ \cite{Apeldoorn+20,Ge+19,Gilyen+STOC19,Kerzner+24,Lin+20,Martyn+21,Poulin+09}.\footnote{We note that some previous works showed how to obtain larger (``super-Grover'') speedups for certain applications in combinatorial optimization \cite{Hastings18,Dalzell+STOC23} or for computing a coarser approximation of the ground-state energy \cite{Buhrman+PRL25}.}

\paragraph{Approximating quantum partition function with relative error.} Computing the quantum partition function is an even harder problem. 
The partition function is a function of the temperature and the Hamiltonian of the system. 
It is defined by
\[
    Z:=\tr[e^{-\beta H}],
\]
where $\beta$ is the inverse of the temperature and $H$ is the Hamiltonian.
The \emph{Quantum Partition Function (QPF)} problem asks us to compute the value of $Z$ for a given Hamiltonian at a specified temperature with relative error.\footnote{It is crucial here to consider relative error estimation. When we consider the problem to estimate the (normalized) QPF with additive error $1/\poly(n)$, it is known to be $\mathsf{DQC1}$-complete \cite{Brandao08,Chowdhury+21} where $\mathsf{DQC1}$ is the set of problems that can be solved efficiently with only ``one-clean qubit'' \cite{Knill+98}.}
Intuitively, QPF is harder than LH because evaluating the partition function requires information about the \emph{entire} spectrum of the Hamiltonian, whereas solving LH only involves the ground-state energy. 
In fact, we can show that LH reduces to approximating the QPF, even with a constant relative error, and thus approximating QPF with constant-relative error is $\QMA$-hard.\footnote{To the best of our knowledge, \cite{Bravyi+22} first mentioned that approximating QPF with a relative error is QMA-hard; we provide a self-contained formal proof in \Cref{subsec:seth-hardness of QPF}.}
Furthermore, Bravyi et~al.~\cite{Bravyi+22} show that approximating the QPF up to constant-relative error is computationally equivalent to approximately counting the number of witnesses accepted by a $\QMA$ verifier.

For algorithms, Poulin and Wocjan \cite{PW09b}
demonstrated a quantum algorithm that approximates the partition function $Z$ of an $n$-qubit system up to relative error $\delta$ in $O^\ast\left(\frac{1}{\delta}\sqrt{\frac{2^n}{Z}}\right)$ time. Later Bravyi, Chowdhury, Gosset, and Wocjan ~\cite{Bravyi+22} proposed another algorithm achieving the same running time while improving the space complexity.

When the temperature is not too low (i.e., $\beta$ is not too large), $Z$ is not exponentially small (for example, $Z\approx O(1)$ when $\beta\approx0.7n$) and the running time is then roughly $O^\ast(2^{n/2})$. \vspace{-2mm}

\paragraph{Our main question.}
In summary, existing classical and quantum algorithms for LH run in $O^*(2^{n})$ and $O^*(2^{n/2})$ time, respectively, suggesting a fundamental barrier to further improvements. The situation for QPF is even more mysterious: no $O(2^n)$-time classical algorithm is currently known to the best of our knowledge, and existing quantum algorithms fail to achieve the expected quadratic speedup in the low-temperature regime. These observations naturally motivate the following questions:\vspace{-2mm}

\begin{center}
\emph{Do there exist classical or quantum algorithms that are significantly faster than $O(2^{n})$ or $O(2^{n/2})$ for computing the ground-state energy? What is the best possible algorithms for approximating quantum partition function even in the low-temperature regime? }
\end{center}

\paragraph{Fine-grained complexity, SETH and QSETH.}
Our goal is to provide a $\Omega(2^{n})$ classical (resp. $\Omega(2^{n/2})$ quantum) lower bound.
We are going to use the language of fine-grained complexity \cite{VassilevskaWilliams15,VassilevskaWilliams18,Nederlof26}, which is a relatively recent field of computational complexity theory that investigates complexity in a more fine-grained way than the conventional class like $\mathsf{P}$ and $\NP$. 

The $k$-local Hamiltonian problem is a generalization of $\kSAT$, the central problem in classical complexity theory and its fine-grained complexity has been well studied. 
The Strong Exponential Time Hypothesis (SETH) \cite{Impagliazzo+01} states that $\kSAT$ needs roughly $2^n$ time for large~$k$ (see \cref{conjecture:SETH}).\footnote{We stress that (Q)SETH are about the complexity of $\kSAT$ for large $k$. For small values of~$k$, an algorithms better than $O(2^{n/2})$ are known. For example, there exists a $1.32793^n$-time classical algorithm for $3\textrm{-}\mathrm{SAT}$ \cite{LiuICALP18}.} More recently, a quantum analog was proposed in \cite{Aaronson+CCC20,Buhrman+21} as Quantum Strong Exponential-Time Hypothesis (QSETH). The conjecture is to claim that any quantum algorithm to solve $\kSAT$ needs $\Omega(2^{n/2})$ time (see \cref{conjecture:QSETH}).
SETH and QSETH are regarded as plausible computational conjectures and serve as a useful tool for deriving conditional lower bounds for various fundamental problems. 
First, the SAT problem has been studied for decades, and despite this extensive effort, the $\Omega(2^{n})$ (resp. $\Omega(2^{n/2})$) lower bound for classical (resp. quantum) algorithms has not been broken; designing quantum algorithms that refute these conjectures would likely require fundamentally new techniques and could lead to major breakthroughs in algorithm design. 
In addition, (Q)SETH has been used to establish conditional lower bounds for a wide range of problems, such as the orthogonal vectors problem~\cite{BRSV17, Aaronson+CCC20}, the closest pair problem~\cite{DSL19, KM20,  Aaronson+CCC20,Buhrman+21}, the edit distance problem~\cite{BI15, Buhrman+21}, lattice problems~\cite{CCK+23,HKW24}. These results highlight the value of (Q)SETH as a powerful framework for studying the hardness of computational problems.\vspace{-2mm}

\paragraph{Barriers to fine-grained complexity lower bounds for $\boldsymbol{O(1)}$-local Hamiltonians.}

We aim to prove the fine-grained complexity lower bound for LH and QPF assuming SETH and QSETH. One straightforward approach is to map the clauses in the $\kSAT$ instance $\Phi$ to projectors that act on local qubits. The sum of these local projectors is a $k$-local Hamiltonian $H$ whose ground subspace encodes the solutions for $\Phi$ 
(see \cref{sec:appexdixA} for the construction). In this reduction, the number of qubits of $H$ is equal to the number of variables in $\Phi$ and therefore directly gives a $\Omega({2^n})$ (resp. $\Omega(2^{n/2})$) lower bound assuming SETH (resp. QSETH).
However, the locality of the induced Hamiltonian is precisely $k$, where $k$ is the parameter of the $\kSAT$ hard instance; this value depends on $\varepsilon$, as characterized by SETH and QSETH. Hence, this approach fails to establish  $\Omega({2^n})$ (resp. $\Omega(2^{n/2})$) lower bounds for Hamiltonians with fixed locality (e.g., 3-local).

Another approach is to use known circuit-to-Hamiltonian reductions,
in which a verification circuit $U=U_T\cdots U_1$ for a $\QMA$ problem (e.g., a verification circuit of size $T=\poly(n)$  for a $\kSAT$ instance) is embedded into a local Hamiltonian $H$.
In particular, for any verification circuit, there are reductions to make the induced Hamiltonian 5-local \cite{Kitaev+02} or 2-local \cite{Kempe+06}, which achieves our purpose of fixed locality. Following the idea by Feynman \cite{Fey86}, the standard circuit-to-Hamiltonian construction \cite{Kitaev+02,Kempe+06} introduces a clock register, which is used to represent $T+1$ orthogonal clock states.
The ground state of the resulting Hamiltonian is the history state 
\[
\frac{1}{\sqrt{T+1}} \sum_{i=0}^T U_i\cdots U_0 \ket{\psi}_\textrm{circuit} \otimes \ket{i}_\textrm{clock},
\]
where $\ket{\psi}_\textrm{circuit}$ denotes the initial state of the circuit (including the witness) and $\ket{i}_\textrm{clock}$ denotes the clock states. However, this standard reduction results in significant size overhead due to the number of additional qubits needed to implement the clock states: 
in the standard reduction, a unary clock \cite{Kitaev+02,Kempe+06} is used, which requires $O(T)$ qubits. Note that the analysis to obtain fine-grained lower bounds is very sensitive to the size overhead between the circuit and the local Hamiltonian. In particular, in order to obtain the desired lower bounds based on (Q)SETH, the clock need to be implementable in $o(n)$ qubits.
Although we could use the binary clock \cite{Kitaev+02} to reduce the size of the clock register from $O(T)$ qubits to $O(\log T)$ qubits, the locality would then increase to $O(\log T)$.

In summary, we require a $\kSAT$-to-LH reduction that simultaneously satisfies the following two conditions---a combination that existing methods fail to achieve:

\begin{enumerate}
    \item[(1)] \emph{Size-preserving}: the resulting Hamiltonian acts on $n + o(n)$ qubits, and
    \item[(2)] \emph{Locality-independent}: its locality is a constant independent of~$k$.
\end{enumerate}

\subsection{Our results}

In this paper, to investigate the fine-grained complexity for $\QMA$-hard problems, we show the first size-preserving circuit-to-Hamiltonian construction. We denote by $\lambda(H)$ the smallest eigenvalue (ground-state energy) of a local Hamiltonian $H$.

\begin{theorem}[Informal version of \cref{theorem:main}]\label{th:main-inf}
    Let $L = (L_\mathrm{yes},L_\mathrm{no})$ be a promise language in $\QMA$, and $x \in \{0,1\}^n$ be an input for the language $L$. Let $U_x$ be a verification circuit acting on $N$ qubits and $T$ be the number of elementary gates of $U_x$. Then, for any integer $d\geq1$, there exists a $(d+1)$-local Hamiltonian $H_x$ acting on $N + O(T^{1/d})$ qubits satisfying the following properties;
    \begin{itemize}
        \item $||H_x|| \leq \poly(n)$,
        \item if $x \in L_\mathrm{yes}$, $\lambda(H_x) \leq E_\mathrm{yes}$,
        \item if $x \in L_\mathrm{no}$, $\lambda(H_x) \geq E_\mathrm{no}$,
        \item $E_\mathrm{no} - E_\mathrm{yes} \geq \Omega(1)$.
    \end{itemize}
\end{theorem}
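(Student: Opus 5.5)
We would follow Kitaev's construction and replace the unary clock with a ``$d$-dimensional unary'' clock. Set $m=\lceil (T+1)^{1/d}\rceil$, so that $m^d\ge T+1$. The clock register consists of $d$ unary counters $C_1,\dots,C_d$, each made of $m$ qubits, for $dm=O(T^{1/d})$ clock qubits in total (with $d$ treated as a constant). A time step $t\in\{0,\dots,T\}$ is written in base $m$ as $(t_1,\dots,t_d)$, and counter $C_j$ holds $t_j$ in unary, $\ket{1^{t_j}0^{m-t_j}}$.

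Because $d$-digit base-$m$ increments with carries are not local, we would order the time steps along a \emph{snake}, i.e.\ reflected Gray code, ordering of $[m]^d$. Consecutive times then differ in exactly one digit by $\pm1$, and the digits that must be inspected to see whether $t_j$ is moving up or down are the parities of the higher digits. To keep the locality at $d+1$, we would pad the circuit with identity gates, at only a constant-factor cost in $T$, so that the real gates occur only at time steps where changing digit $j$ requires checking $O(1)$ clock qubits of the other counters. One natural choice is to use the digit-$1$ moves for real gates and let the higher-digit ``carry'' moves apply the identity.

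The Hamiltonian is $H=H_{\rm in}+H_{\rm out}+H_{\rm clock}+H_{\rm prop}$:
\begin{itemize}
\item $H_{\rm clock}$ penalizes illegal unary patterns, using terms $\ket{01}\bra{01}$ on adjacent qubits within each counter. These are 2-local.
\item $H_{\rm in}$ penalizes nonzero ancillas at time $0$. Time $0$ is identified by the first qubit of every counter being $0$; this is $d$ clock qubits plus one ancilla, hence $(d+1)$-local.
\item $H_{\rm out}$ similarly checks rejection at the final time.
\item $H_{\rm prop}=\sum_t H_t$ follows the standard form $\tfrac12\bigl(\ket{t}\bra{t}+\ket{t{-}1}\bra{t{-}1}\bigr)\otimes I-\ket{t}\bra{t{-}1}\otimes U_t-{\rm h.c.}$.
\end{itemize}

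Within the legal clock subspace, a time $t$ is identified by reading two or three qubits from the moving counter (as in Kempe--Kitaev--Regev) plus one qubit from each other counter. One qubit suffices for a non-moving counter because, at the chosen transition positions, the other digits are extreme ($0$ or $m-1$) or only their parity matters. We would arrange the snake so that parity is encoded through the boundary qubit. Together with the gate's own qubits, this gives the target locality $d+1$; a rough count of the clock qubits must be checked carefully against this budget.

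The analysis follows the standard route. On the legal-clock subspace, $H_{\rm prop}$ is unitarily equivalent, via the transformation $W=\sum_t U_t\cdots U_1\otimes\ket{t}\bra{t}$, to the path-graph Laplacian on $T+1$ vertices, with spectral gap $\Omega(1/T^2)$. $H_{\rm clock}$ has gap $1$ on the illegal subspace, and $H_{\rm prop}$ preserves the legal subspace provided each term acts correctly on legal states. Kitaev's geometric lemma, or the projection lemma of Kempe--Kitaev--Regev, then gives the following. In the yes case, the history state has energy at most $\epsilon/(T+1)$. In the no case, the energy is at least $\Omega((1-\sqrt{\epsilon})/T^3)$. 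Finally, we amplify soundness first, so that $\epsilon$ is small, and rescale $H$ by $\poly(T)$. This makes the promise gap $E_{\rm no}-E_{\rm yes}=\Omega(1)$ while keeping $\|H\|\le\poly(n)$.

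The main obstacle is the local checkability of the clock transitions with only $d+1$ qubits. The terms of $H_{\rm prop}$ must act only on legal consecutive clock pairs, must not map legal states into illegal ones in ways that break invariance, and must identify $t$ uniquely using few clock qubits. The carry and snake-direction logic is where the locality budget could overflow, so we would first verify carefully that the one-qubit-per-other-counter identification is unambiguous on legal states. If it overflows, we would fall back to spending an additional constant number of qubits per counter and accept locality $d+O(1)$.
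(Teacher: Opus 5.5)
Your proposal has a genuine gap. The step that fails is the claim that, on legal clock states, each non-moving counter can be identified by reading one qubit.

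\textbf{Why unary slow counters cannot be read with one qubit.} For a unary counter in state $\ket{1^{v}0^{m-v}}$, a one-qubit operator on position $i$ only distinguishes $v\ge i$ from $v<i$. It isolates a single value only at the extremes $v=0$ or $v=m$. An interior value needs the two-qubit pattern $\ket{10}\bra{10}$ on positions $v,v+1$.

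\textbf{Why parity or direction is not enough.} Every propagation term must annihilate all legal clock states other than the intended one. This is the legal-subspace content of conditions (C4)--(C6) in the paper. Otherwise the term applies $U_t$ at the same fast-digit position in other sweeps, and the legal-subspace restriction is no longer the path-graph Laplacian conjugated by $W$. During the fast-digit moves, where all real gates sit, the slow digits take arbitrary interior values. Confining real gates to times where the slow digits are extreme leaves only $O(m)$ usable steps instead of $\Theta(m^d)$.

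\textbf{Consequence.} With $d$ unary counters your terms are $2d$-local. For $d=2$ that is already $4$-local, which would not give the 3-local lower bound. Your fallback does not recover $d+O(1)$ either, since the overhead is one extra qubit per slow counter.

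\textbf{A second, smaller gap: two-qubit gates.} A Kitaev-style term $\ket{t}\bra{t-1}\otimes U_t$ with a two-qubit $U_t$ touches two circuit qubits. That overflows the $(d+1)$ budget even with optimal clock identification. The paper avoids this with the Kempe--Kitaev--Regev Section~5 gadget:
\begin{itemize}
\item controlled-phase gates are handled through $H_{\mathrm{qubit},t}$ and $H_{\mathrm{time},t}$, each acting on at most one circuit qubit;
\item the unary transition is the one-qubit $\ket{1}\bra{0}$, not a full identification of both $\ket{t-1}$ and $\ket{t}$, with its illegal outputs absorbed by the projection lemma.
\end{itemize}

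\textbf{The missing idea.} You need an encoding of the slow digits in which a legal value is certified by reading only its $1$-positions. The paper replaces your $d-1$ slow unary counters by a single $(a,d-1)$ Johnson-graph clock:
\begin{itemize}
\item legal states are the weight-$(d-1)$ strings along a Hamiltonian path of $J(a,d-1)$;
\item each is identified by the $(d-1)$-local projector $P'_{t_1}$ onto $1^{d-1}$ on its support;
\item it is advanced by the $d$-local operator $F'_{t_1+1}$;
\item wrong weights are penalized by the $d$-local $H'_{\mathrm{stab}}$.
\end{itemize}
A single unary counter on $a$ qubits zigzags $0\to a\to 0$. Its interior values cost two qubits, but the Johnson clock only steps at the turning points. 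There one unary qubit suffices, and identity gates are inserted (assumption $(\star)$). Every term is then $(d+1)$-local, and $\binom{a}{d-1}(a+1)$ time steps give $a=O(T^{1/d})$.

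Your product scheme can be repaired in the same spirit by encoding each slow digit one-hot rather than unary. With weight-one counters, each slow digit is read with one qubit and moved with two, which is a product version of the same trick. The rest of your analysis (projection lemma, soundness amplification, polynomial rescaling) then goes through.
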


Our construction is the first construction to achieve both the $O(1)$-locality and the $o(T)$-qubit size for the clock register. See \cref{tab:size and locality tradeoff} for a comparison between our construction and the standard constructions.

\begin{table}[h]
    \centering
    \begin{tabular}{c|c|cc}
         Clock & Locality & Size & \\ \hline \hline
         Binary \cite{Kitaev+02} & $O(\log T) $ & $N+O(\log T)$ & \\ \hline
         Unary \cite{Kitaev+02,Kempe+06} & $2$ &  $N+O(T)$ & \\ \hline
         Our work (\cref{theorem:main}) & $d+1$ & $N+O(T^{1/d})$ & \\
    \end{tabular}
    \caption{Comparison between our result and standard circuit-to-Hamiltonian constructions. Locality and Size indicate the locality and size of the local Hamiltonian reduced from a quantum circuit of $T$ gates. $N$ denotes the qubit size of verification quantum circuits. Our result holds for any integer $d\geq1$.}
    \label{tab:size and locality tradeoff}
\end{table}

We then show $\kSAT$ has a very efficient verification procedure as a unitary circuit (\cref{lem:U_compute_Phi}). By combining this with our size-preserving circuit-to-Hamiltonian construction, we obtain the SETH- and QSETH-hardness for the local Hamiltonian problem.

\begin{theorem}[Informal version of \cref{thm:SETH hardness for 3LH}]\label{th:main-classical}
	Assuming SETH (resp. QSETH), estimating the ground-state energy of a $3$-local Hamiltonian problem with any constant additive error cannot be solved by any classical algorithm in time $\myO{2^{(1-\varepsilon)n}}$ (resp. by any quantum algorithm in time $\myO{2^{(1-\varepsilon)n/2}}$) for any constant $\varepsilon>0$. 
\end{theorem}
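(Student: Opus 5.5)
The plan is a fine-grained reduction from $\kSAT$ to $3$-LH that adds only $o(n)$ qubits and keeps a constant promise gap. Fix $\varepsilon>0$. By SETH (resp. QSETH), there is a $k=k(\varepsilon)$ such that $\kSAT$ on $n$ variables and $m=\poly(n)$ clauses has no $O^\ast(2^{(1-\varepsilon/2)n})$ classical (resp. $O^\ast(2^{(1-\varepsilon/2)n/2})$ quantum) algorithm. By sparsification, we may take $m=O(n)$ for fixed $k$. That linear clause count matters because the clock size depends on $T$.

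\textbf{Step 1: the verification circuit.} Given a formula $\Phi$, use \cref{lem:U_compute_Phi} to get a unitary verifier $U_\Phi$. It acts on the $n$ witness qubits plus a small workspace, so $N=n+o(n)$; the natural choice is $O(\log m)$ or $O(1)$ ancillas with clauses evaluated reversibly in place. Its size is $T=\poly(n)$, ideally $T=O(m\cdot\mathrm{poly}(k))$. The circuit is classical and reversible, so its acceptance probability on a basis input is $0$ or $1$. On a superposed witness, acceptance is at least $1$ if $\Phi$ is satisfiable and $0$ otherwise. This gives perfect completeness and soundness $0$.

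\textbf{Step 2: the size-preserving Hamiltonian.} Apply \cref{theorem:main} with $d=2$. This yields a $3$-local Hamiltonian $H_\Phi$ on
\[
n' = N+O(T^{1/2}) = n+o(n)
\]
qubits, provided $T=o(n^2)$. Sparsification gives $T=O(n)$ up to factors depending on $k$, so the clock uses $O(\sqrt{n})$ qubits. The Hamiltonian satisfies $\|H_\Phi\|\le\poly(n)$ and has a spectral gap $E_\mathrm{no}-E_\mathrm{yes}\ge\Omega(1)$ between the yes and no cases. Any algorithm that estimates $\lambda(H_\Phi)$ to a sufficiently small constant additive error therefore decides $\Phi$. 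To handle an \emph{arbitrary} constant error $c$, I rescale $H_\Phi$ by the constant $2c/(E_\mathrm{no}-E_\mathrm{yes})$. This keeps $3$-locality and polynomial norm.

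\textbf{Step 3: the runtime contradiction.} An algorithm for $3$-LH running in $O^\ast(2^{(1-\varepsilon)n'})$ would decide $\kSAT$ in time
\[
O^\ast\!\left(2^{(1-\varepsilon)(n+o(n))}\right)\le O^\ast\!\left(2^{(1-\varepsilon/2)n}\right)
\]
for large $n$, contradicting SETH. The quantum case is identical with exponent $n'/2$. The reduction itself takes polynomial time.

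\textbf{Main obstacle.} The delicate point is keeping the clock at $o(n)$ qubits while the gap stays $\Omega(1)$ rather than $1/\poly(T)$. Standard Kitaev gaps scale as $1/T^2$, so normalization has to be absorbed into $\|H\|\le\poly(n)$, which the informal theorem permits. The other requirement is $T^{1/d}=o(n)$. Without sparsification, $m$ could be about $n^k$. Then I would either choose $d>k$, which breaks $3$-locality, or rely on sparsification. So I would rely on sparsification and verify that $U_\Phi$ uses $O(m)$ gates with $o(n)$ ancillas.
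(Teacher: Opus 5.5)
Your proposal is correct and follows the same route as the paper: the counter-based verifier of \cref{lem:U_compute_Phi} with $O(\log n)$ ancillas for $O(n)$-clause formulas, then \cref{theorem:main} with $d=2$ to get a $3$-local Hamiltonian on $n+o(n)$ qubits with constant promise gap, then absorbing the $2^{o(n)}$ overhead into the exponent to contradict (Q)SETH. One small inaccuracy is that the lemma gives $T=O(n\log^2 n)$ rather than $O(n)$, so the clock has $O(\sqrt{n}\log n)$ qubits; this still satisfies your condition $T=o(n^2)$, and your constant rescaling to handle an arbitrary additive error is a harmless detail the paper leaves implicit.
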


Our lower bound for the local Hamiltonian problem matches the performance of the best known classical and quantum algorithms.

\begin{remark}
Recently, Buhrman et al. \cite{Buhrman+PRL25} constructed classical and quantum algorithms running in time $2^{n(1-\Omega(\delta/M))}$ and $2^{n(1-\Omega(\delta/M))/2}$, respectively, where $M:=\sum_{i}\|H_i\|$, to estimate the ground-state energy of a $O(1)$-local Hamiltonian up to an additive error $\delta$. These upper bounds do not contradict our conditional lower bounds since \cref{th:main-classical} is about estimating the ground-state energy within constant additive error, and the upper bounds given by \cite{Buhrman+PRL25} become $O(2^{n(1-1/\poly(n))})$ and $O(2^{n(1-1/\poly(n))/2})$ in this regime.
\end{remark}

We further investigate fine-grained complexity of quantum partition functions. We first show the SETH- and QSETH-hardness to estimate QPF with constant relative error.

\begin{theorem}
    [Informal version of \Cref{thm:main_qpf}]
    \label{thm:main_qpf_informal}
    Assuming SETH (resp. QSETH), approximating the quantum partition function for 3-local Hamiltonians to within any constant relative error cannot be done by any classical algorithm in time $O\bigl(2^{n(1-\varepsilon)}\bigr)$ (resp. quantum algorithm in time $O\bigl(2^{n(1-\varepsilon)/2}\bigr)$) for any constant $\varepsilon > 0$. 
\end{theorem}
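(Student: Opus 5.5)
The plan is to reduce the 3-local Hamiltonian problem of \cref{th:main-classical} to constant-relative-error estimation of the partition function, and to keep the number of qubits unchanged in that reduction. The SETH and QSETH lower bounds then carry over directly. Concretely, start from a $k$-SAT instance $\Phi$ on $n$ variables. Apply the size-preserving construction of \cref{th:main-inf} with a suitable $d=2$ to the efficient unitary verifier for $\Phi$ (\cref{lem:U_compute_Phi}). This gives a 3-local Hamiltonian $H$ on $n' = n+o(n)$ qubits with a constant promise gap: $\lambda(H)\le E_\mathrm{yes}$ if $\Phi$ is satisfiable, and $\lambda(H)\ge E_\mathrm{no}$ otherwise, with $E_\mathrm{no}-E_\mathrm{yes}\ge \Omega(1)$. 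The SAT verifier has polynomial size $T$, so the clock uses $O(\sqrt{T})$ qubits. We therefore need $T^{1/2}=o(n)$, or else we choose $d$ to be a large constant and use locality $d+1$. To obtain exactly 3-local I would follow whatever parameter choice \cref{thm:SETH hardness for 3LH} uses and take its $H$ as a black box.

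The key step is to turn the constant gap in ground energy into a constant-factor gap in $Z$. First shift $H$ by $-E_\mathrm{yes}$, so that in the yes case $\lambda\le 0$ and in the no case $\lambda\ge \Delta=\Omega(1)$. Then choose an inverse temperature $\beta=\Theta(n')$, for example $\beta = c\, n'/\Delta$ with a constant $c>\ln 2$. In the yes case, $Z\ge e^{-\beta\lambda}\ge 1$. In the no case, every eigenvalue is at least $\Delta$, so $Z\le 2^{n'}e^{-\beta\Delta} = 2^{n'}e^{-cn'}$, which is exponentially small. Any constant relative error $\delta<1$ therefore distinguishes the two cases: in the yes case the estimate is at least $1-\delta$, and in the no case it is at most $(1+\delta)2^{n'}e^{-cn'}\ll 1-\delta$. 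Taking $\beta$ polynomial keeps the instance of polynomial description size, matching the QPF problem format defined later.

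Finally, suppose some algorithm estimates QPF in time $O(2^{(1-\varepsilon)n'})$ (classically) or $O(2^{(1-\varepsilon)n'/2})$ (quantumly). Since $n'=n+o(n)$, this solves $k$-SAT in time $2^{(1-\varepsilon)(n+o(n))}\poly(n)\le O(2^{(1-\varepsilon/2)n})$, and similarly in the quantum case. The reduction itself takes polynomial time. This contradicts SETH or QSETH for $k$ large enough relative to $\varepsilon/2$.

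The main obstacle I expect is bookkeeping. The first point is that the gap must be stated relative to a known threshold, so that the shift $E_\mathrm{yes}$ is computable. The second is that the norm bound $\|H\|\le\poly(n)$ and the constant gap must survive for the chosen $d$, so that the locality is really 3 while the qubit overhead remains $o(n)$. If the verifier's size $T$ is only $O(n\,\mathrm{polylog}\, n)$ or similar, then $d=2$ gives overhead $\tilde O(\sqrt n)=o(n)$ and the locality is 3, which is the intended instantiation. I would verify this against \cref{lem:U_compute_Phi}.
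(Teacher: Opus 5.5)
Your proposal is correct and follows essentially the same route as the paper. It keeps the 3-local Hamiltonian from the SETH/QSETH hardness of 3-LH and takes $\beta=\Theta(n/(E_{\mathrm{no}}-E_{\mathrm{yes}}))$; the paper uses $\beta\ge n/(E_{\mathrm{no}}-E_{\mathrm{yes}})$, i.e.\ your $c=1>\ln 2$, via $2^n<e^{0.7n}$. It separates $Z\ge e^{-\beta E_{\mathrm{yes}}}$ from $Z\le 2^n e^{-\beta E_{\mathrm{no}}}$ so that any constant $\delta$ works for large $n$, and the shift by $-E_{\mathrm{yes}}$ and your $d=2$ parameter check ($O(n)$ clauses, $T=O(n\log^2 n)$) match the paper's instantiation.
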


We then show a new algorithm for QPF.

\begin{theorem}[Informal version of \cref{thm:main_qpf_alg}]
    There exists a quantum algorithm whose running time is $\myO{2^{n/2}}$ to approximate the quantum partition function for a constant-local Hamiltonian on $n$ qubits up to a relative error ${1}/{\poly(n)}$.
\end{theorem}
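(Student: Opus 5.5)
The plan is to estimate $Z=\tr[e^{-\beta H}]$ with quantum amplitude estimation on a \emph{single} trace-estimation quantity whose amplitude is at least $2^{-n}\cdot\Omega(1)$ after a shift. The obstruction in \cite{PW09b,Bravyi+22} is the factor $\sqrt{2^n/Z}$, which is bad when $Z\ll 1$ (low temperature). So we first normalize by the ground energy. The algorithm has three steps.

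\textbf{Step 1: estimate $\lambda(H)$.} We estimate $\lambda(H)$ to additive precision $\eta=1/(\beta\,\poly(n))$ using the known $O^*(2^{n/2})$ quantum ground-energy algorithms (e.g.\ \cite{Lin+20,Gilyen+STOC19}). These combine Grover search or amplitude amplification over the maximally mixed input with phase estimation and QSVT. This step requires $\beta\le\poly(n)$, which we assume as part of the input regime. We then set $H'=H-\tilde\lambda I$. Its spectrum lies in $[-\eta,\poly(n)]$, and $Z=e^{-\beta\tilde\lambda}Z'$ with $Z'=\tr[e^{-\beta H'}]\ge e^{-\beta\eta}\ge 1-1/\poly(n)$.

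\textbf{Step 2: block-encode $e^{-\beta H'/2}$.} Using QSVT on a block encoding of $H'/\norm{H}$, we build a block encoding of $e^{-\beta H'/2}/c$ with $c=e^{\beta\eta/2}=O(1)$ to polynomial precision. The polynomial degree is $\poly(n,\beta)$, by the standard exponential approximations.

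\textbf{Step 3: amplitude estimation on the purified state.} We apply the block encoding to one half of a maximally entangled state on $2n$ qubits. The success amplitude squared is then $p=\tr[e^{-\beta H'}]/(c^2 2^n)=Z'/(c^22^n)\ge\Omega(2^{-n})$. Amplitude estimation to relative error $\delta$ takes $O(\frac{1}{\delta\sqrt p})=O^*(2^{n/2}/\delta)$ applications, which is $O^*(2^{n/2})$ for $\delta=1/\poly(n)$. Multiplying by $c^22^ne^{-\beta\tilde\lambda}$ recovers $Z$.

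\textbf{Error analysis.} We must check that the additive polynomial-approximation error $\epsilon$ in Step 2 contributes relative error at most $2^n\epsilon/Z'$. Since $Z'\ge\Omega(1)$, choosing $\epsilon=2^{-n}/\poly(n)$ suffices. This costs only a factor $O(n)$ in degree, because the degree scales as $\log(1/\epsilon)$.

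\textbf{Main obstacle.} I expect the hardest point to be Step 1. That step must succeed with an additive error that is $1/\poly$ relative to $1/\beta$ while staying within $O^*(2^{n/2})$, and its error must not make $Z'$ exponentially small. A one-sided estimate $\tilde\lambda\le\lambda+\eta$ is enough for the lower bound on $Z'$. An overestimate only shifts the spectrum slightly negative, which we absorb into the constant $c$. So the ground-energy algorithm only needs to return a value within $\pm\eta$ with high probability.
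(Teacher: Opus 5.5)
Your argument is correct, but it takes a genuinely different route from the paper's.

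\textbf{The paper's route.} The paper never estimates the ground energy. It slices $[0,1)$ into roughly $L=4n^c\beta$ energy windows. On each window it runs phase estimation (via Hamiltonian simulation) on half of an EPR pair, then uses quantum counting to estimate the number $M_{k,\ell}$ of eigenvalues in that window to relative error $1/(4n^c)$. It outputs $\sum_\ell \widetilde M_{k,\ell}e^{-\beta\mathcal{E}_{k,\ell}}$. Eigenvalues that phase estimation misplaces near window boundaries are handled by repeating over $L$ shifted partitions and taking the minimum. An averaging argument shows that some shift has only an $O(1/L)$ fraction of $Z$ in its boundary strips. The $\sqrt{2^n/Z}$ factor of \cite{PW09b,Bravyi+22} never arises, because the Boltzmann weights are applied classically after counting. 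Each count therefore costs $O^*(2^{n/2})$ no matter how small $e^{-\beta\mathcal{E}_{k,\ell}}$ is.

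\textbf{Your route.} You remove that obstruction at its source. Since $Z=e^{-\beta\tilde\lambda}Z'$ is an exact identity, the error in $\tilde\lambda$ never enters the final estimate. It only has to be small enough ($\eta=O(1/\beta)$ already suffices) to guarantee $Z'=\Omega(1)$ and to keep $e^{-\beta H'/2}/c$ a contraction. After that, a single amplitude estimation with $p=\Omega(2^{-n})$ does the job. The only error analysis needed is the trace-perturbation bound of order $2^n\epsilon$.

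\textbf{What each buys.} Your approach is shorter and more modular. It shows that the existing trace-estimation algorithms already run in $O^*(2^{n/2})$ once they are preceded by an $O^*(2^{n/2})$ ground-energy search. It also avoids any discretization of the spectrum, so the boundary and averaging issue disappears.

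The price is two heavier black boxes. First, you need a ground-energy estimator that works with overlap $2^{-n/2}$; for example, \cite{Lin+20} run on $H\otimes I$ with the EPR state, whose weight on the ground space is at least $2^{-n}$. Second, you need a QSVT polynomial for the exponential that is bounded on $[-1,1]$ and accurate to $2^{-n}/\poly(n)$. The paper uses only phase estimation, Hamiltonian simulation and counting. As a by-product it yields coarse density-of-states information, which fits the counting-of-QMA-witnesses viewpoint of \cite{Bravyi+22}.

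\textbf{A small slip.} The lower bound $Z'\ge e^{-\beta\eta}$ requires $\tilde\lambda\ge\lambda-\eta$, i.e.\ no large underestimate. The condition $\tilde\lambda\le\lambda+\eta$ is instead what makes $c=e^{\beta\eta/2}$ a valid normalization. Your remark attaches each condition to the wrong purpose. Since you end up requiring the two-sided guarantee anyway, the argument is unaffected.
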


We now compare our algorithm with the one proposed in~\cite{PW09b} and~\cite{Bravyi+22}.
The running time of their algorithm is $O^\ast\left(\frac{1}{\delta}\sqrt{\frac{2^n}{Z}}\right)$,
which depends on the value of the partition function $Z$. 
When the inverse temperature $\beta = O(n^c)$ for some large constant $c$ (i.e., at low temperatures), the running time may exceed $O(2^{n/2})$. 
In contrast, the running time of our algorithm does not depend on $Z$: it runs in $O^\ast(2^{n/2})$ for all $\beta = \poly(n)$.

\subsection{Our techniques}

\paragraph{Compressed clock state and how to use it for the circuit-to-Hamiltonian construction.}

To ensure that our reduction works, we need a construction of the clock state associated with the induced Hamiltonian that satisfies the following three conditions: 
\begin{enumerate}
    \item[(i)] The clock state uses $o(n)$ of qubits and can encode $\poly(n)$ of computation steps,
    \item[(ii)] the operation on the clock state needs to be constant-local, with locality is independent of the circuit, and
    \item[(iii)] the computation process of $U$ is encoded in the ground state of the induced Hamiltonian $H$.  
\end{enumerate}

The unary clock satisfies conditions (ii) and (iii), but does not satisfy condition (i).
In~\cite{Chia+CCC23}, a clock construction is proposed that satisfies conditions (i) and (ii). To represent $O(a^d)$ time on an $a$-qubit register for integers $a \geq d \geq 1$, they use a Johnson graph where the vertices of the Johnson graph $(a,d)$ are the $d$-element subset of an $a$-element set and two vertices are adjacent when the intersection of the two vertices (subsets) contains $(d-1)$-elements (see \cref{def:Yao-Ting clock} for a formal definition\footnote{Not to confuse readers, in the introduction we denote by $a$ the register size of the $(a,d)$-clock state and by $n$ the input size.}).
However, unlike the circuit-to-Hamiltonian constructions \cite{Kitaev+02,Kempe+06}, the circuit-to-Hamiltonian reduction in~\cite{Chia+CCC23} encodes the computation in the \emph{time evolution} of $H$, rather than in its ground state. As a result, there is no guarantee regarding the structure or properties of the ground state in their construction.

We build on the clock-state construction from~\cite{Chia+CCC23} and a 2-local Hamiltonian construction from Section 5 in \cite{Kempe+06}.\footnote{Section 6 of \cite{Kempe+06} introduces another approach to obtain a 2-local Hamiltonian using the perturbation theory.}
By introducing a carefully designed penalty term into $H$, we enforce that the computation history is stored in the ground state. 
This yields a novel circuit-to-Hamiltonian reduction that simultaneously preserves the size of the Hamiltonian, maintains constant locality, and ensures that the circuit $U$ is encoded in the ground state of the resulting Hamiltonian $H$. 
Our construction satisfies all three requirements (i), (ii) and~(iii), overcoming the limitations that neither of the existing reductions can address individually.

\paragraph{Additional techniques to reduce the locality.}

From the approach above, we have a $d+2$-local Hamiltonian from a quantum circuit whose number of gates is $O(a^d)$. This is because we need a local term operating on $d+1$ qubits in the clock register and 1 qubit in the circuit register for the propagation Hamiltonian.\footnote{We do not need to consider 2-local gates thanks to the technique in Section 5 of \cite{Kempe+06}.}

To reduce the locality further, we introduce a new clock construction that combines the $(a,d-1)$-clock state from \cite{Chia+CCC23} and the unary clock on $a$ qubits. Crucial properties of the two clocks are:
        \begin{itemize}
	       \item for the $(a,d-1)$ clock, time can be identified by looking at $(d-1)$ position ($(d-1)$-locality), while increasing or decreasing time by one unit can be done by modifying $d$ positions ($d$-locality);
	       \item for the unary clock, increasing or decreasing time by one unit can be done by modifying one position (1-locality). Moreover, if time is $0$ or $a$, it can be identified by looking at $1$ position ($1$-locality) (see \cref{tab:operators in KKR06} in \cref{subsec:KKR}).
        \end{itemize}
By making the unary clock move periodically from $0$ to $a$ and from $a$ to $0$, we increase and decrease most of the clock time by $d$-local operators. By carefully analyzing the locality of all local Hamiltonian, we have a $(d+1)$-local Hamiltonian from a circuit of $O(a^d)$ gates using the $2a$-qubit clock register.

The remaining task is to construct a verification circuit that uses $o(n)$ ancilla qubits for any $k$-SAT instance whose number of variables is $n$. 
To achieve this, we use a counter to record the number of clauses satisfied by a given assignment. 
The $k$-SAT formula is satisfied if and only if the value stored in the counter equals $m$. 
Since counting up to $m$ requires only $\log m$ qubits, the number of ancilla qubits needed is $n_a = O(\log m) = o(n)$, as required. The number of gates for the verification circuit is $O(n  \log^2 n)$ (\cref{lem:U_compute_Phi}) and thus we can use $d=2$ and obtain the SETH- and QSETH-hardness for the 3-local Hamiltonian problem.

\paragraph{Fine-grained complexity and quantum algorithms for the QPF problem.}

The lower bound for QPF follows directly from the lower bound for LH. When $\beta$ is a sufficiently large polynomial, the partition function is dominated by low-energy states. If the ground-state energy is at most $\EnergyYes$, then by ignoring all other states except the ground state, we obtain $Z \ge e^{-\beta \EnergyYes}$. Conversely, if the ground-state energy is at least $\EnergyNo$, then by treating all eigenstates as having energy $\EnergyNo$, we get $Z \le 2^n e^{-\beta \EnergyNo}$. Therefore, by approximating $Z$, we can distinguish between the two cases and decide the ground-state energy. To ensure this distinction, it suffices that $(1 - \delta) e^{-\beta \EnergyYes} > (1 + \delta) 2^n e^{-\beta \EnergyNo}$. In our reduction, we have observed that for any $\delta > 0$, this inequality holds for all sufficiently large~$n$.

To obtain an $O(\sqrt{2^n})$-time quantum algorithm for the QPF with $1/\poly(n)$ relative error, we adopt a technique from the proof of the equivalence between approximating the number of witnesses for a QMA verifier and approximating the quantum partition function~\cite{Bravyi+22}. Briefly, the approach involves partitioning the energy spectrum into polynomially many intervals and selecting a representative energy value $\mathcal{E}_j$ for each interval $j$. By estimating the number of eigenstates $M_j$ within each interval, the partition function $Z$ can be approximated as:
\[
\tilde{Z} = \sum_j M_j e^{-\beta \mathcal{E}_j}.
\]

We now explain how to count $M_j$. 
First, we construct an energy estimation circuit $U_{EE}$ that outputs the corresponding eigenvalue $E_p$ for each eigenstate $\ket{\psi_p}$ of $H$. 
In other words, $U_{EE}\ket{\psi_p} = \ket{E_p}\ket{\psi_p}$. 
The circuit $U_{EE}$ can be implemented using Hamiltonian simulation together with phase estimation. Then, we apply a circuit $U_{dec}$ that decides whether $E_p$ is in the interval $j$.
Let $U_j:=U_{dec}U_{EE}$. Together, we have $U_j\ket{\psi_p}=\ket{1}\ket{\phi_p}$ if $E_p$ in the interval $j$ and $U_j\ket{\psi_p}=\ket{0}\ket{\phi_p}$ otherwise, where $\ket{\phi_p}$ is some quantum state corresponding to $\ket{\psi_p}$. 

Now, let us suppose that the following two ``too-good-to-be-true'' assumptions hold: 
\begin{enumerate}
    \item A uniform superposition of all eigenstates can be efficiently prepared, and
    \item the energy estimation circuit, $U_{EE}$, can compute eigenvalues with zero error.
\end{enumerate}
Then, by applying $U_j$ on the state, we obtain
\begin{equation}
    \label{eq:tech_count}
    U_j\sum_{p}\frac{1}{\sqrt{N}}\ket{\psi_p} = \sqrt{\frac{M_j}{N}}\ket{1}\ket{\xi_1} + \sqrt{\frac{N-M_j}{N}}\ket{0}\ket{\xi_0},
\end{equation}
where $N=2^n$ is the number of the eigenstates and $\ket{\xi_1}$, $\ket{\xi_0}$ are two quantum states orthogonal to each other. Finally, by using the well-known amplitude estimation algorithm on \Cref{eq:tech_count}, we can obtain $M_j$. This algorithms runs in $O(\sqrt{2^n})$ time. 

We then remove the two ``too-good-to-be-true'' assumptions while keeping the $O(\sqrt{2^n})$ runtime. First, directly preparing the uniform superposition of all eigenstates is generally computationally hard since the eigenstates are unknown. To overcome this difficulty, a commonly used technique is to use EPR state (e.g., \cite{PW09b, Kerzner+24}). We have that uniform superposition over a complete basis, tensored with its complex conjugate, is equivalent to an EPR state.
Therefore, by applying $U_j$ to the EPR state, we obtain
\begin{equation}
    \label{eq:tech_count_epr}
    U_j \sum_{p} \frac{1}{\sqrt{N}}\ket{\psi_p}\ket{\psi_p^{\ast}}
    = \sqrt{\frac{M_j}{N}}\ket{1}\ket{\xi'_1} 
    + \sqrt{\frac{N - M_j}{N}}\ket{0}\ket{\xi'_0},
\end{equation}
where $\ket{\xi'_1}$ and $\ket{\xi'_0}$ are orthogonal quantum states. 
By applying the amplitude estimation algorithm to the state in~\Cref{eq:tech_count_epr}, we can approximate $M_j$.

Second, the quantum energy estimation circuit described above introduces an additive error of $1/\poly(n)$, which can significantly impact the approximation of the quantum partition function. Specifically, if the eigenstates are densely concentrated near the boundaries of the intervals in the selected partition (within the precision of the energy estimation), we may ``doubly count'' these states. Such inaccuracies in state counting translate directly into a constant relative error in the final estimation of the partition function.

To reduce the constant relative error to $1/\poly(n)$, we observe that the ``double-counting'' issue only arises when the chosen partition unluckily has significant fraction of eigenvalues lie near the boundaries of intervals. Intuitively, by considering polynomially many partitions with mutually disjoint boundaries, any specific eigenstate can be miscounted in at most a constant number of these partitions. Consequently, if we run the algorithm over all such partitions, an averaging argument ensures that at least one set will have a number of double-counted eigenstates that is at most an inverse polynomial fraction of the total. This approach yields an algorithm achieving $1/\poly(n)$ relative error in $O(\sqrt{2^n})$ time.

\subsection{Open questions}

\begin{enumerate}
    \item It is known that the 2-local Hamiltonian problem is $\QMA$-hard \cite{Kempe+06}. 
    \textbf{Can we show a size-preserving reduction to 2-local Hamiltonian and its (Q)SETH-hardness?} In our clock construction, we need to touch two clocks and simultaneously apply a 1-qubit gate, so we cannot improve the locality directly. Note that perturbation theory does not seem to work here either: the standard perturbation technique adds an ancilla qubit for each term, and thus does not give a size-preserving construction
    when applied directly to our 3-local Hamiltonian.
    
    \item  Even when we consider geometrically local \cite{Oliveira+08} or specific $2$-local terms such as Heisenberg interaction, antiferromagnetic $XY$ interactions \cite{Cubitt+16, Piddock+17}, $2$-local $ZX$ and $ZZXX$ Hamiltonian \cite{Biamonte+08}, the problem is still $\QMA$-hard.
    This leads to a natural question: \textbf{Does the (Q)SETH-hardness still hold for geometrically local Hamiltonians whose each local term is restricted to a specific form?}
    
    \item Our relative threshold gap $(\EnergyNo-\EnergyYes)/M$ in the $3$-local Hamiltonian problem is $1/p(n)$ where $M:=\sum_{i=i}^{m}\|H_i\|$ and $p(n)$ is a polynomial with large degree. 
    A natural question arises: \textbf{when the relative gap increases, for example, $\boldsymbol{(\EnergyNo-\EnergyYes)/M = \Omega(1)}$, what conditional lower bound can we have under plausible conjectures?} 
    Notably, the algorithm of~\cite{Buhrman+PRL25} runs in time $O^{\ast}(2^{\frac{n}{2}(1-\Omega(1))})$ when $(\EnergyNo-\EnergyYes)/M = \Omega(1)$, and therefore having the same lower bound $\Omega(2^{\frac{n}{2}(1-o(1))})$ for the $(\EnergyNo-\EnergyYes)/M=\Omega(1)$ regime as we have for $(\EnergyNo-\EnergyYes)/M=1/p(n)$ in our work is impossible.
    \item In this paper, we show the SETH-hardness of the 3-local Hamiltonian problem, which is a quantum analog of MAX-3-SAT, and it is known that there exists a faster exponential time algorithm for MAX-2-SAT, say $O^*(1.732^n)$-time algorithm by \cite{WilliamsTCS05}. \textbf{Can we show the fine-grained complexity of MAX-3-SAT?}
    
    \item  In this work, we show that $\LH$ can be reduced to $\QPF$, and the reduction is fine-grained even for an arbitrary constant relative error; therefore, we can obtain the $\Omega(2^{n/2})$ lower bound assuming (Q)SETH. Notably, the lower bound holds for any constant relative error and locality at least $3$. Along this line, we can ask \textbf{whether the lower bound continues to hold when considering $2$-local Hamiltonian, geometrical restrictions, and specified local terms, or we can design faster algorithms under these constraints.}
    
    \item Another interesting question is \textbf{whether the reverse reduction holds, namely, whether approximating QPF can be reduced to the LH problem.} At first glance, this seems unlikely, as QPF can be interpreted as estimating the dimension of the ground space in the low-temperature regime. However, since we are considering exponential lower bounds, such a reduction could, in principle, be allowed to run in super-polynomial time.
    
    \item Our lower bound for the 3QPF problem holds for $\beta = O(n)$ and $\|H\|$ being a large degree polynomial. A natural question to ask is \textbf{what happens as the temperature increases.} In other words, when $\beta\|H\|$ becomes small, does the lower bound still hold in this regime? Or, at what point do we observe a \emph{phase transition}?
\end{enumerate}

\subsection{Organization}

In \cref{sec:preliminary}, we give preliminaries for the rest of the paper. In \cref{sec:size-preserving circuit-to-Hamiltonian construction}, we show the size-preserving circuit-to-Hamiltonian construction by introducing a new clock. In \cref{sec:SETH-hardness of LH}, we show the SETH-hardness of the local Hamiltonian problem. In \cref{sec:fine-grained complexity of QPF new}, we show the fine-grained complexity to estimate the quantum partition function with relative error.

\paragraph{Previous version.}
A preliminary version of this work was submitted to ArXiv in October 2025 under a different title \cite{Chia+25}. The present version corrects an issue in one of the proofs, strengthens both the results and revises the presentation, placing greater emphasis on the size-preserving circuit-to-Hamiltonian construction and lower bounds based on the (classical) SETH.
% \newpage
\section{Preliminaries}\label{sec:preliminary}

We refer to \cite{NC10,deWolf19} for references in quantum computing, \cite{Gharibian2015} for a reference and survey in quantum Hamiltonian complexity.

\subsection{General notation}
For $n\in\mathbb{N}$, we use $[n]$ to denote the set collecting all the positive integers smaller than or equal to $n$, that is, $\{1,2,\dots,n\}$. 

For a bit string $x\in\{0,1\}^{*}$, we use $int(x)$ to denote the corresponding integer whose binary representation is $x$. For a nonnegative integer $n$, we use $bin(n)$ to denote the binary representation of $n$. For example, $int(01010111)=87$ and $bin(87)=01010111$.

For an $n$-bit string $x\in\{0,1\}^{n}$ and $i\in[n]$, we use $x[i]$ to denote the $i$th bit of $x$.
If $x$ is a binary representation of an integer, then $x[1]$ is the most significant bit and $x[n]$ is the least significant bit. 
Let $S\in[n]$, we use $x_S$ to denote a new string that concatenates $x$'s bits whose indices are in $S$.
For example, if $x=01010111$ and $S=\{1,3,7,8\}$, then $x_S=0011$.
The Hamming weight of $x$ is denoted by $wt(x)$, that is, $wt(x)$ is the number of 1's in $x$. 

The identity operator is denoted by $I$.
The Kronecker delta $\delta_{i,j}$ is defined by $\delta_{i,j}=1$ if $i=j$ and $\delta_{jk}=0$ if $i\neq j$. 
The indicator function is denoted by $\indicator_S(i)$, which is defined by $\indicator_S(i)=1$ if $i\in S$ and $\indicator_S(i)=0$ if $i\notin S$.

The big $O$ star notation $O^*(\cdot)$ hides the polynomial factors in the standard big $O$ notation.
For example, $8\cdot 2^\frac{n}{2}\cdot n^7 = O^*(2^{\frac{n}{2}})$.
The negligible functions $\negl(\cdot)$ are functions that are smaller than any inverse polynomial: if $\mu(n)\in\negl(n)$ then for all $c\in\mathbb{N}$,  there exists $n_0\in\mathbb{N}$ such that for all $n\ge n_0$, it holds that $\mu(n) < \frac{1}{n^c}$.

\subsection{Quantum computation}
\paragraph{Quantum register.}
We use the sans-serif font to denote quantum registers, \eg $\rgst{A}$, $\rgst{in}$, $\rgst{out}$, $\rgst{anc}$, $\rgst{clock}$.
Throughout this paper, a register consists of qubits.
A qubit may belong to different registers at the same time.
For example, a qubit in the output register $\rgst{out}$ of a quantum circuit is also in the ancilla register $\rgst{anc}$.
By an abuse of notation, sometimes we treat the registers as sets.
We say $\rgst{A}\subseteq\rgst{B}$ if any qubit in $\rgst{A}$ is also a qubit in $\rgst{B}$.
We use $\rgst{A}\cup\rgst{B}$ to denote the register that consists of the qubits that are in $\rgst{A}$ or in $\rgst{B}$, $\rgst{A}\cup\rgst{B}$ to denote the portion that belongs to $\rgst{A}$ and $\rgst{B}$ at the same time, and $\rgst{A}\setminus\rgst{B}$ to denote the portion that belongs to $\rgst{A}$ but not $\rgst{B}$.
When we specify a system (that can be a quantum circuit or a physical qubit system), we use $\overline{\rgst{A}}$ to denote the collection of qubits in the system that are not in $\rgst{A}$.
We use $|\rgst{A}|$ to denote the number of qubits in $\rgst{A}$.
When we specify a register $\rgst{A}$, we use $\rgst{A[i]}$ to denote the $i$th qubit in $\rgst{A}$ for $i\in[|\rgst{A}|]$.

We use $\ket{\psi}\reg{A}$ to denote the register $\rgst{A}$ is in the state $\ket{\psi}$. 
For a Hermitian or a unitary operator $X$, we use $X\reg{A}$ to denote $X$ acting on the register $\rgst{A}$.
When the registers are specified, the order of the tensor product of the operators is not sensitive. For example, $X\reg{A}\otimes Y\reg{B} = Y\reg{B}\otimes X\reg{A}$.
We use $X^{\otimes k}$ to denote the $k$th tensor power of $X$ for $k\in\mathbb{N}$. Namely, $X^{\otimes k}:=\underbrace{X\otimes X\otimes\cdots\otimes X}_{k}$.

\paragraph{Quantum circuit.} 
A quantum circuit is a unitary acting on the union of two disjoint register $\rgst{in}$ and $\rgst{anc}$, which are called input register and ancilla register respectively.
Also, there is an output register $\rgst{out}\subseteq\rgst{in}\cup\rgst{anc}$.
We require the ancilla register is $\ket{0^{n_a}}\reg{anc}$ initially, where $n_a=|\rgst{anc}|$.
That is, if the circuit $U$ takes a quantum state $\ket{\psi}$ as an input, then the final state of the circuit is $U\ket{\psi}\reg{in}\ket{0^{n_a}}\reg{anc}$.
We use the notation $x\gets U(\ket{\psi})$ to denote the event that we obtain the measurement outcome $x\in\{0,1\}^{|\rgst{out}|}$ when we measure on the output register $\rgst{out}$ at the end of the circuit $U$ that takes $\ket{\psi}$ as the input state.
We have $\Pr[x\gets U(\ket{\psi})]=\bra{\psi}\reg{in}\bra{0^{n_a}}\reg{anc}U^\dagger (\proj{x}\reg{out}\otimes I\reg{\overline{out}})U\ket{\psi}\reg{in}\ket{0^{n_a}}\reg{anc}$.

\paragraph{Universal gate set and elementary gates.}
A quantum circuit is implemented by a sequence of quantum gates.
There is a finite set of one-qubit and two-qubit quantum gates called the universal gate set such that for any unitary $U$, we can use the gates in the universal gate sets to implement a quantum circuit arbitrarily close to $U$. 
We call a member in the universal gate set an elementary gate. 
We choose $\{\textsc{hadamard}, \pi/8\textsc{-gate}, \notgate, \cnot\}$ as the universal gate set \cite{NC10}

The $\notgate$ gate acts on a one-qubit register.
The truth table of the $\notgate$ gate is defined by $\notgate\ket{x}=\ket{(1+x)\mod 2}$ where $x\in\{0,1\}$.

We also introduce a multi-control-NOT gate.
\begin{definition}[Multi-control-NOT gate and Toffoli gate]
    A multi-control gate, denoted by $C^k\notgate$, is a quantum gate that acts on $k+1$ qubits.
    Let $\rgst{C}$ be a $k$-qubit register and $\rgst{T}$ be a one-qubit register.
    The truth table of the $C^k\cnot$ gate is defined by 
    \[
    C^k\cnot\ket{x_1,x_2,\dots,x_k}\reg{C}\ket{x_{k+1}}\reg{T}=\ket{x_1,x_2,\dots,x_k}\reg{C}\ket{(x_1x_2\cdots x_k+x_{k+1})\mod 2}\reg{T},
    \]
    where $x_1,x_2,\dots,x_{k+1}\in\{0,1\}$.
    We call $\rgst{C}$ the control qubits and $\rgst{T}$ the target qubit.
    
    For $k=2$, we call $C^2\notgate$ a Toffoli gate.
\end{definition}

According to the definition of $C^k\notgate$, we have that  $\cnot$ is a special case of $C^k\notgate$ gate for $k=1$.

\begin{remark}[Decompose $C^k\notgate$ into Toffolis \cite{NC10}]
    A $C^k\notgate$ gate can be constructed by $2k-3$ Toffoli gates associated with $k-2$ ancilla qubits. The ancilla qubits are in all zero state initially and stay unchanged at the end of circuit.
\end{remark}

\begin{remark}[Decompose Toffoli into elementary gates \cite{NC10}]
    A Toffoli gate is identical to a circuit that acts on three qubits and is composed of 17 elementary gates.
\end{remark}

\subsection{Local Hamiltonian problem} 
A Hamiltonian is an Hermitian operator acting on a quantum register.
Let $H$ be a Hamiltonian, we call the eigenvalues of $H$ the energies of the Hamiltonian.
We use $\lambda(H)$ to denote the smallest eigenvalue of $H$. 
We call $\lambda(H)$ the ground-state energy of $H$, and we call the corresponding eigenstate(s) the ground state(s) of $H$. 
We use $\|H\|$ to denote the operator norm of $H$,
that is, the largest absolute value of the eigenvalues of $H$.

The $k$-local Hamiltonian is defined as follows.
\begin{definition}[Local Hamiltonian]\label{def:klocalH}
We say a Hamiltonian $H$ acting on $n$ qubits is $k$-local if $H$ can be written as $H = \sum_{i=1}^m H_i$ for $m=\poly(n)$ and for all $i$, the following holds.
\begin{itemize}
    \item $H_i$ is a Hamiltonian.
    \item $H_i$ non-trivially acts on at most $k$ qubits.
    \item $\|H_i\|\le \poly(n)$.
\end{itemize} 
\end{definition}

We call $k$ the locality of $H$, and define the local Hamiltonian problem formally.

\begin{definition}[$k$-local Hamiltonian ($\kLH$) problem]
    \label{def:LHP}
    The local Hamiltonian problem is a decision problem that asks whether the ground-state energy of a $k$-local Hamiltonian is greater or less than given energy thresholds.
    \begin{itemize}
        \item \textbf{Inputs:} a $k$-local Hamiltonian $H$ such that acting on $n$ qubits where $k=O(1)$, and two energy thresholds $\EnergyYes,\EnergyNo$ satisfying $\EnergyNo-\EnergyYes\ge \Omega(1)$.
        \item  \textbf{Outputs:}
        \begin{itemize}
            \item YES, if there exists a quantum state $\ket{\psi}\in\mathbb{C}^{2^n}$ such that $\expec{\psi}{H}\le \EnergyYes$.
            \item NO, if $\expec{\psi}{H}\ge \EnergyNo$ for all $\ket{\psi}\in\mathbb{C}^{2^n}$.
        \end{itemize}
    \end{itemize}
    In other words, the $\kLH$ problem asks to decide whether $\lambda(H)\le \EnergyYes$ or $\lambda(H)\ge \EnergyNo$.
    We say an algorithm $\mathcal{A}_{\LH}$ solves $\kLH(H, \EnergyYes, \EnergyNo)$ if $\mathcal{A}_{\LH}$ decides $(H,\EnergyYes,\EnergyNo)$ correctly.
\end{definition}

\subsection{Quantum partition function}

Let us introduce the quantum partition function formally.

\begin{definition}[Approximating quantum partition function of $k$-local Hamiltonian ($\kQPF$) problem]
    \label{def:QPF}
    The quantum partition function problem is to approximate the partition function of a $k$-local Hamiltonian up to a multiplicative error under a certain temperature.
    \begin{itemize}
        \item \textbf{Inputs:} a $k$-local Hamiltonian $H$ acting on $n$ qubits where $k=O(1)$, an inverse temperature $\beta\le \poly(n)$, and an error parameter $\delta\in(0,1)$.
        \item \textbf{Outputs:} $\zout\in\mathbb{R}$ such that
        \begin{equation}
            (1-\delta)Z \leq \zout \leq (1+\delta)Z,
        \end{equation}
        where $Z:= \tr[e^{-\beta H}]$.
        \end{itemize}
        We say an algorithm $\mathcal{A}_{\QPF}$ solves $\kQPF(H, \beta, \delta)$ if $\mathcal{A}_{\QPF}$ outputs such $\zout$ on the inputs $H, \beta, \delta$.
\end{definition}
From now on, when we use the term $\QPF$ or $\kQPF$, it means to approximate the quantum partition function up to a relative error, instead of to compute the exact value.

\subsection{Boolean formula, SAT, SETH and QSETH}
A Boolean variable can be assigned to the value 0 or 1.
We abbreviate the term Boolean variable as variable.
A Boolean formula consists of variables associated with parentheses and logic connectives $\neg$ (NOT), $\vee$ (OR), and $\wedge$ (AND).

Let $\Phi$ be a Boolean formula with $n$ variables $x_1,x_2,\dots,x_n$.
We use an an $n$-bit string $x\in\{0,1\}^n$ to denote an assignment to $x_1,x_2,\dots,x_n$. 
The variable $x_i$ for $i\in[n]$ is assigned to the $i$th bit of $x$.
We use $\Phi(x)$ to denote the value of $\Phi$ when  $x_1,x_2,\dots,x_n$ are assigned to $x$. 
We say $x$ satisfies $\Phi$ if $\Phi(x)=1$.

For all variables $x$, it holds that $\neg(\neg x)= x$.
Let $x_1, x_2,\dots,x_n$ be variables and for each $i\in[n]$, let $\ell_i$ be either $x_i$ or $\neg x_i$. It holds that $\neg(\ell_1\wedge \ell_2\wedge\cdots\wedge \ell_m) = \neg \ell_1 \vee \neg \ell_2 \vee\cdots\vee \neg \ell_m$.

We can use a quantum circuit to compute Boolean formulas.
For $x\in\{0,1\}$, it holds that
\begin{equation}\label{eq:q_comp_not}
    \notgate\ket{x} = \ket{\neg x}.
\end{equation}
For $x_1,x_2,\dots,x_k\in\{0,1\}$, it holds that
\begin{equation}\label{eq:q_comp_and}
    C^k\notgate\reg{C\cup T}\ket{x_1,x_2,\dots,x_k}\reg{C}\ket{0}\reg{T} = \ket{x_1,x_2,\dots,x_k}\reg{C}\ket{x_1\wedge x_2 \wedge \cdots\wedge x_k}\reg{T}.
\end{equation}

We define the conjunctive normal form formula as follows.
\begin{definition}[Conjunctive normal form (CNF) formula]\label{def:kCNF}
Let $\Phi$ be a Boolean formula that consists of $n$ variables $x_1,x_2,\dots, x_k$.
We say $\Phi$ is a $k$CNF formula if $\Phi$ is in the form of $m=\poly(n)$ smaller formulas connected by $\wedge$.
Each smaller formula contains at most $k$ variables, and the variables are connected by $\vee$.

To be more precise, $\Phi=\varphi_1\wedge \varphi_2 \wedge\cdots \wedge \varphi_m$ where $m=\poly(n)$, and for all $i\in [m]$, the following constrains hold.
\begin{itemize}
    \item $\varphi_{i} = (\ell_{i,1}\vee \ell_{i_2}\vee\cdots\vee\ell_{i,k_{i}})$, where $\ell_{i,p}$ can be $x_j$ or $\neg x_j$ for all $p\in[k_i]$ and $j\in[n]$.
    \item $k_i\le k$.
    \item For each $j\in[n]$, $x_j$ and $\neg x_j$ do not appear in $\varphi_i$ at the same time;
    $x_j$, $\neg x_j$ appears in $\varphi_i$ at most once.
\end{itemize}
We say $\varphi_i$ is a clause of $\Phi$ for all $i\in[m]$.
\end{definition}

\begin{definition}[Satisfiability for $\kCNF$ ($\kSAT$) problem]
    \label{def:kSAT}
    The satisfiability problem is a decision problem that asks whether there is an assignment satisfying the given $k$CNF formula.
    \begin{itemize}
        \item \textbf{Inputs:} a $\kCNF$ formula $\Phi$ that contains $n$ variables. The number of clauses of $\Phi$ is $m=\poly(n)$.
        \item \textbf{Outputs:}
        \begin{itemize}
            \item YES, if there exists an assignment $x\in\{0,1\}^n$ such that $\Phi(x)=1$.
            \item No, if $\Phi(x)=0$ for all $x\in\{0,1\}^n$. 
        \end{itemize}
    \end{itemize}
    We say an algorithm $\mathcal{A}_{\SAT}$ solves $\kSAT(\Phi)$ if $\mathcal{A}_{\SAT}$ decides $\Phi$ correctly.
\end{definition}

Though the exact lower bound for $\kSAT$ problem is still unknown, it is widely believed that brute-force search is optimal for classical algorithm and Grover search is optimal for quantum algorithm. The Strong Exponential Time Hypothesis (SETH) \cite{Impagliazzo+01} states that $\kSAT$ needs roughly $2^n$ time for large $k$:
\begin{conjecture}[SETH]\label{conjecture:SETH}
	For all $\varepsilon$, there is some $k\ge 3$ such that $\kSAT$ with $n$ variables and $O(n)$ clauses cannot be solved classically in time $O(2^{(1-\varepsilon)n})$.
\end{conjecture}

A quantum analog was proposed in \cite{Aaronson+CCC20,Buhrman+21} as Quantum Strong Exponential-Time Hypothesis (QSETH).

\begin{conjecture}[Quantum strong exponential time hypothesis (QSETH) \cite{Aaronson+CCC20,Buhrman+21}]\label{conjecture:QSETH}
	For all $\varepsilon$, there is some $k\ge 3$ such that $\kSAT$ with $n$ variables and $O(n)$ clauses cannot be solved quantumly in time $O(2^{(1-\varepsilon)n/2})$.
\end{conjecture}

In the two conjecture above, we can assume the number of clauses can be assumed to be $O(n)$ via the sparsification lemma \cite{Impagliazzo+08}.

\subsection{Fine-grained reduction}\label{sec:fine-grained_reduction}
Here we introduce the fine-grained reduction. 
\begin{definition}[Fine-grained reduction \cite{VassilevskaWilliams15}]\label{def:fine-grain_reduction}
    Let $P$ and $Q$ be two problems and $\mathcal{A}_Q$ be an oracle that solves $Q$ with probability greater than $\frac{2}{3}$. 
    Let $p(\cdot)$ and $q(\cdot)$ be two non-decreasing functions.  
    We say $P$ is $(p,q)$ reducible to $Q$ if for all $\varepsilon$, there exist $\xi$, an algorithm $\mathcal{A}_P$, a constant $d$, and an integer $r(n)$ such that the algorithm $\mathcal{A}_P$ that can black-boxly assess to $\mathcal{A}_Q$ takes an instance of $P$ with size $n$ and satisfies the following.
    \begin{enumerate}
        \item \label{requre1} $\mathcal{A}_P$  solves $P$ with probability greater than $\frac{2}{3}$.
        \item \label{requre2} $\mathcal{A}_P$ runs in $d\cdot p(n)^{1-\xi}$ time.
        \item \label{requre3} $\mathcal{A}_P$ produces at most $r(n)$ instances of $Q$ adaptively.
        \item \label{requre4} $\sum_{i=1}^{r(n)}(q(n_i))^{1-\varepsilon} \le d\cdot (p(n))^{1-\xi}$, where $n_i$ is the size of the $i$th instance of problem $Q$ that is produced by $\mathcal{A}_P$.
    \end{enumerate}
\end{definition}

One can also use a more general reduction, \emph{quantum fine-grained reduction}~\cite{Aaronson+CCC20}.
In a quantum fine-grained reduction, the reduction algorithm $\mathcal{A}_P$ is allowed to query the oracle $\mathcal{A}_Q$ in superposition.
In this work, however, \cref{def:fine-grain_reduction} is sufficient to establish the lower bounds for both the Local Hamiltonian problem and the approximation of the Quantum Partition Function.

When $P$ is $(p,q)$ reducible to $Q$ and every instance produced by $\mathcal{A}_P$ has size $n+o(n)$, we have that if $P$ cannot be solved within time $O(p(n)^{1-\varepsilon})$ for any $\varepsilon$, then $Q$ cannot be solved within time $O(q(n)^{1-\xi})$ for any $\xi$.
\section{Size-preserving circuit-to-Hamiltonian construction}\label{sec:size-preserving circuit-to-Hamiltonian construction}

In this section, we prove the first size-preserving circuit-to-Hamiltonian construction as follows.

\begin{theorem}\label{theorem:main}
    Let $x \in \{0,1\}^n$ be an input and $U_x$ be a $\QMA$ verification quantum circuit $U_x$ acting on $N=\poly(n)$ qubits, and suppose $U_x$ consists of $T = \poly(n)$ gates. Then, for any integer $d\geq1$, there exists a $(d+1)$-local Hamiltonian $H_x$ acting on $N+O(T^{1/d})$ qubits and $\norm{H_x} \leq \poly(n)$ such that for all $\cerr\in(0,1)$, the following two hold.
    \begin{itemize}
        \item If there exists a quantum proof $\ket{\psi}$ such that $\Pr[U_x \,\mathrm{accepts}\, \ket{\psi}] \ge 1-\cerr$, then $\lambda(H_x) \le \cerr$.
        \item If $\Pr[U_x \,\mathrm{accepts}\, \ket{\psi}]\le \cerr$ for any quantum proof $\ket{\psi}$, then $\lambda(H_x) \ge \frac{1}{2}-\cerr$.
    \end{itemize}
\end{theorem}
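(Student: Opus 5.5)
We follow the Kitaev/Kempe--Kitaev--Regev template, replacing the unary clock by a compressed clock built from the Johnson-graph clock of \cite{Chia+CCC23}. Fix $a$ with $\binom{a}{d-1}\cdot a \ge T'$, where $T'=O(T)$ is the length of the padded circuit. So $a=O(T^{1/d})$, and the clock register has $2a$ qubits. The first $a$ qubits hold an $(a,d-1)$ Johnson clock: Hamming-weight-$(d-1)$ strings visited along a Hamiltonian path of the Johnson graph. The second $a$ qubits hold a unary clock that sweeps $0\to a$ and back $a\to 0$. A legal time $t$ is the pair (Johnson position $j$, unary position $u$). The unary counter advances by one position per step. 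When it hits an endpoint, a single Johnson move is made; this move changes two positions of the weight-$(d-1)$ string, so it is $d$-local in the sense that identifying the current vertex needs $d-1$ qubits and the move touches one more. Following Section 5 of \cite{Kempe+06}, we first rewrite $U_x$ so that every step is either a one-qubit gate or a fixed two-qubit gate pattern absorbed into the clock. Then each propagation term touches at most $d$ clock qubits and one circuit qubit, which gives $(d+1)$-locality. The total qubit count is $N+2a=N+O(T^{1/d})$.

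Next we define $H_x=H_{\rm in}+H_{\rm out}+H_{\rm clock}+H_{\rm prop}$, each term suitably weighted by $\poly(n)$ factors.
\begin{itemize}
\item $H_{\rm clock}$ penalizes illegal clock strings: wrong Hamming weight in the Johnson part, and non-unary patterns ($\ket{01}$ on adjacent qubits in the unary part). Hamming weight is not local, so here we use the fact that only constant-local checks are needed on the reachable subspace. We would add a weight-penalty term $(\sum_i Z_i - c)^2$, which decomposes into 2-local terms.
\item $H_{\rm in}$ penalizes nonzero ancillas at time $0$, identified by a $(d-1)+1$-local check.
\item $H_{\rm out}$ penalizes rejection at the final time.
\item $H_{\rm prop}=\sum_t H_t$, with $H_t=\frac12\big(\ket{t}\!\bra{t}+\ket{t{-}1}\!\bra{t{-}1}-U_t\otimes\ket{t}\!\bra{t{-}1}-h.c.\big)$. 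Each clock projector and transition is written as a local operator that acts correctly on the legal subspace.
\end{itemize}
The key point is to check that each clock operator, restricted to the legal clock subspace, equals the intended projector or transition, even though locally it may also act on other legal states. This requires choosing the Johnson Hamiltonian path and the unary zig-zag so that every transition is uniquely identified by a $d$-qubit pattern. This is the case analysis summarized by the table of operators in \cref{subsec:KKR}.

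Finally, the spectral analysis. On the legal subspace, the isometry $W=\sum_t U_t\cdots U_1\otimes\ket{t}\!\bra{t}$ maps $H_{\rm prop}$ to $I\otimes$(path Laplacian on $T'+1$ vertices). This Laplacian has spectral gap $\Omega(1/T'^2)$.
\begin{itemize}
\item \emph{Completeness.} Take the history state on the good proof. It has zero propagation, clock and input energy, and output energy $\le \cerr/(T'+1)\le\cerr$.
\item \emph{Soundness.} Use Kitaev's projection lemma twice: first to restrict to legal clocks, with $H_{\rm clock}$ weighted by a large polynomial; then, via the geometric lemma, to combine $H_{\rm in}+H_{\rm out}$ with $H_{\rm prop}$. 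Scaling $H_{\rm prop}$ and $H_{\rm in}$ by large polynomials, the lower bound approaches the energy of $H_{\rm out}$ restricted to valid histories with valid initial ancillas. That energy is $\ge(1-\cerr)$ times the uniform weight on the final time slice.
\end{itemize}
The statement demands the constant $\frac12-\cerr$, which suggests that $H_{\rm out}$ should itself be weighted by $\Theta(T')$, or the history padded so that a constant fraction of the time is spent idle after the circuit ends. We would pad with $T'$ identity steps and apply $H_{\rm out}$ on all those times, giving output weight $\approx\frac12$. This costs only a constant factor in $T$.

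The main obstacle is the locality bookkeeping at the turnaround points of the unary clock, where a Johnson move and a unary endpoint must be detected simultaneously within $d+1$ qubits while the gate acts on a circuit qubit. We would handle this by making the Johnson move a separate idle step (no circuit gate), so that it uses only $d$ clock qubits plus one unary endpoint qubit. We would also verify that illegal-but-locally-indistinguishable clock strings are excluded by $H_{\rm clock}$.
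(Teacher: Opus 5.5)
\textbf{Where the proposal breaks: two-qubit gates.} Your architecture matches the paper's: an $(a,d-1)$ Johnson clock plus a zig-zag unary clock on $a$ qubits, idle steps at the turnarounds (the paper's assumption $(\star)$), and a penalty on illegal clock strings.

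Your 2-local penalty $(\sum_i Z_i-(a-2(d-1)))^2$ is a valid, simpler substitute for the paper's $d$-local $H'_{\mathrm{stab}}$. It equals $4(w-d+1)^2$ on weight-$w$ strings, and the path visits every weight-$(d-1)$ string, so its kernel is exactly the legal Johnson part. Also, any Hamiltonian path works, because on weight-$(d-1)$ strings the pattern $1^{d-1}$ on $S_{t_1}$ already pins down the vertex.

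The gap is the treatment of two-qubit gates. Your $H_{\rm prop}$ uses Kitaev's term $\frac12(\cdots-U_t\otimes\ket{t}\bra{t-1}-\mathrm{h.c.})$ for every $t$. Your completeness and soundness arguments (the isometry $W$, the path Laplacian with gap $\Omega(1/T'^2)$, zero propagation energy of the history state) analyze exactly that Hamiltonian. But inside a sweep the clock transition is, e.g., $P'_{t_1}\otimes\ket{1}\bra{0}_{t_2+1}$, which already acts on $d$ clock qubits. So for a two-qubit $U_t$ the term is $(d+2)$-local. Nothing lets a two-qubit gate be ``absorbed into the clock.'' As written, your Hamiltonian is $(d+1)$-local only for circuits made of single-qubit gates.

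\textbf{What the paper does instead.} Section 5 of \cite{Kempe+06} supplies a different term structure, and the paper uses it. For each controlled-phase gate (flanked by $Z$ gates), the propagation term is replaced by two pieces. The first is $H_{\mathrm{qubit},t}$: the diagonal operator $\frac12(-2\ket{0}\bra{0}_{f_t}-2\ket{0}\bra{0}_{s_t}+\ket{1}\bra{1}_{f_t}+\ket{1}\bra{1}_{s_t})$, which touches one circuit qubit per term, tensored with the one-step clock hop. The second is the clock-only $H_{\mathrm{time},t}$, which contains two-step hops $\ket{\gamma_{t+2}}\bra{\gamma_t}$.

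Following this route requires two things your plan lacks.
\begin{itemize}
\item[(i)] You must implement the two-step operators on your clock with locality $d+1$. This includes those that straddle a turnaround, which need $F'_{t_1+1}$ tensored with a single-qubit flip at a unary endpoint (Table~\ref{tab:clock_implementation}).
\item[(ii)] The hop between $t-1$ and $t$ is now non-unitary and qubit-dependent, and two-step hops appear. So $W^\dagger H_{\rm prop}W$ is no longer $I\otimes$(path Laplacian), and you need the KKR projection-lemma analysis in place of yours.
\end{itemize}
The paper does not redo (ii). It isolates the clock properties (C1)--(C6) that the KKR proof relies on and invokes that proof as a black box (Proposition~\ref{prop:KKR}). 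What remains is then the operator-by-operator locality check you sketch.

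\textbf{A smaller issue with padding.} Applying $H_{\rm out}$ at every padded idle time breaks locality. At interior unary positions the clock projector $P'_{t_1}\otimes\ket{10}\bra{10}_{t_2,t_2+1}$ is already $(d+1)$-local, so adding the output qubit makes it $(d+2)$-local. Either weight the single final-time term by $\Theta(T')$ (the final time is a unary endpoint, so that term stays $(d+1)$-local), or use $\ket{0}\bra{0}_{\mathrm{out}}\otimes P'_{t_1}$ summed over whole Johnson blocks.
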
\

To prove the theorem above, we first recall the strategy for the circuit-to-Hamiltonian construction in \cite{Kempe+06}. We then define our clock states and finally prove the statement.

\subsection{The construction by Kitaev, Kempe and Regev \cite{Kempe+06}}\label{subsec:KKR}

In this section, we explain the $2$-local construction from \cite{Kempe+06} with some adaptation. 

We consider a promise problem $(L_{\mathrm{yes}},L_{\mathrm{no}})$ in QMA. 
Let $U=U_T\cdots U_2U_1$ be the quantum circuit of the verifier (which depends on the input), where $T=\poly(n)$. We assume that $U$ acts on $N=\poly(n)$ qubits: the first $m$ qubits contain the proof (the last $N-m$ qubits are initialized to $\ket{0}$). Here, each $U_i$ is a 1-qubit gate or a 2-qubit gate. As in \cite{Kempe+06}, we assume that the only 2-qubit gate used by the circuit is the controlled phase gate, each controlled phase gate is preceded and followed by two Z gates\footnote{Any quantum circuit composed of the standard universal gate set can be compiled to the circuit described above with only linear blowup in circuit size because $\mathrm{CNOT} = (I\otimes H)\mathrm{CZ}(I\otimes H)$.}
Let $T_A\subseteq\{1,\ldots,T\}$ be the set of indices $i$ such that $U_i$ is a $1$-qubit gate.  Let $T_B\subseteq\{1,\ldots,T\}$ be the set of indices $i$ such that $U_i$ is a $2$-qubit gate.

\paragraph{Meta-definition of the Hamiltonian.}\hspace{-4mm}\footnote{This is a ``meta-definition'' since it does not define the concrete implementation of the clock states.}
The Hamiltonian is defined over the Hilbert space $\Hh_{\textrm{circuit}}\otimes \Hh_{\mathrm{clock}}$, where $\Hh_{\textrm{circuit}}$ is a Hilbert space of dimension $2^N$ representing the computation and $\Hh_{\mathrm{clock}}$ is a Hilbert space of dimension at least $T+1$ representing the clock. 
There are $T+1$ orthonormal unit vectors $\ket{\gamma_0},\ldots,\ket{\gamma_{T}}$ in $\Hh_{\mathrm{clock}}$ that are called the legal clock states. Let $\Hh_{\mathrm{legal}}$ denote the subspace spanned by the states $\ket{\gamma_0},\ldots,\ket{\gamma_{T}}$ and $\Pi_{\mathrm{legal}}$ be the projection into this subspace. Let $\Hh_{\mathrm{illegal}}$ be its orthogonal subspace. We define the operators 
 \begin{align*}
 	\stay{t}&=\ket{\gamma_{t}}\bra{\gamma_{t}} 
	\hspace{5mm}\textrm{for any}\,\,t\in\{0,\ldots,T\}\,,\\
 	\add{t}&=\ket{\gamma_{t+1}}\bra{\gamma_{t}}
	\hspace{2mm}\textrm{for any}\,\,t\in\{0,\ldots,T-1\}\,,\\
 	\addd{t}&=\ket{\gamma_{t+2}}\bra{\gamma_{t}}
	\hspace{2mm}\textrm{for any}\,\,t\in\{0,\ldots,T-2\}\,,
 \end{align*}
 acting on $\Hh_{\mathrm{clock}}$
 that respectively stop the clock, increment the clock by one step and increment the clock by two steps.

Consider the following Hamiltonian:
\[
	H = \alpha_{\mathrm{in}} H_{\mathrm{in}} + \alpha_{\mathrm{out}}H_{\mathrm{out}} + \alpha_{A}\sum_{t\in T_A} H_{\mathrm{prop},t} + \alpha_{B}\sum_{t\in T_B} (H_{\mathrm{qubit},t} +H_{\mathrm{time},t})+ \alpha_{\mathrm{clock}}I\otimes H_{\mathrm{clock}}\,.
\]
The choice of the coefficients $\alpha_{\mathrm{in}}$, $\alpha_{\mathrm{out}}$, $\alpha_{A}$, $\alpha_{B}$, $\alpha_{\mathrm{clock}}$ will be discussed later. The terms $H_{\mathrm{in}}$ and $H_{\mathrm{out}}$ are defined as follows:
\[
H_{\mathrm{in}} =
\sum_{i=m+1}^{N} \ket{1}_i \bra{1}_i \;\otimes\; \stay{0}\,,
\hspace{5mm}
H_{\mathrm{out}} = \ket{0}_1 \bra{0}_1 \;\otimes\; \stay{T}\,.
\]
For any $t\in T_A$,
\[
H_{\mathrm{prop},t} =
\frac{1}{2}\Bigl(
I \otimes \stay{t}
+ I \otimes \stay{t-1}
- U_t \otimes \add{t-1}
- U_t^\dagger \otimes \add{t-1}^\dagger
\Bigr)\,.
\]
For any $t\in T_B$,
\begin{align*}
H_{\mathrm{time},t}
&=
\frac{1}{8}\, I \otimes \Bigl(
\stay{t}
+ 6 \stay{t+1} 
+ \stay{t+2}
+ 2 \addd{t}
+ 2 \addd{t}^\dagger
+ \add{t}+\add{t}^\dagger+ \add{t+1}+\add{t+1}^\dagger
+\stay{t-3}
\\
&\hspace{22mm}
+ 6 \stay{t-2}
+ \stay{t-1}
+ 2 \addd{t-1}+2 \addd{t-1}^\dagger
+ \add{t-3}+\add{t-3}^\dagger+\add{t-1}+\add{t-1}^\dagger
\Bigr)\,,\\
H_{\mathrm{qubit},t}&=\frac{1}{2}\left(-2\ket{0}\bra{0}_{f_t}-2\ket{0}\bra{0}_{s_t}+\ket{1}\bra{1}_{f_t}+\ket{1}\bra{1}_{s_t}\right)\otimes(\add{t-1}+\add{t-1}^\dagger)\,,
\end{align*} 
where $f_t$ and $s_t$ denote the first and second qubits of the controlled phase gate $U_t$.
The term $H_{\mathrm{clock}}$ will be chosen later so that the following two conditions are satisfied
\begin{align}
	\norm{H_{\mathrm{clock}}}&\le\poly(n)\,,\tag{C1}\\
	\bra{\gamma_t}H_{\mathrm{clock}}\ket{\gamma_t}&=0
	 \hspace{6mm}\textrm{ for any }\,\,t\in \{0,\ldots,T\}\,,\tag{C2}\\
	 \bra{\varphi}H_{\mathrm{clock}}\ket{\varphi}&\ge 1 
	\hspace{4mm}\textrm{ for any }\,\,\ket{\varphi}\in\Hh_{\textrm{illegal}}\,.\tag{C3}
\end{align}

\paragraph{Concrete implementation.}
A straightforward implementation uses a binary clock, as in the original construction by \cite{Kitaev+02}. While requiring only $O(\log T)$ clock qubits (and no $H_{\mathrm{clock}}$ term), the locality of the Hamiltonian would become $O(\log n)$. We discuss below how to reduce the locality. 

The basic idea is to replace each of the operator $\stay{t},\add{t},\addd{t}$ by another operator $\tstay{t},\tadd{t},\taddd{t}$ that has better locality. In order for the implementation to be correct, the following three conditions need to be satisfied. 
\begin{align}
	\textrm{For any }\,\,t\in \{0,\ldots,T\},&\,\,\,
	 \begin{cases}
		\tstay{t}\ket{\gamma_{t}}=\ket{\gamma_{t}} &  \\
		\Pi_{\textrm{legal}}\tstay{t}\ket{\varphi}=0  & \text{ for any $\ket{\varphi}\in\Hh_{\textrm{clock}}$ orthogonal to $\ket{\gamma_{t}}$}\,.
	\end{cases}\tag{C4}\\
	%%%%
	\textrm{For any }\,\,t\in \{0,\ldots,T-1\},&\,\,\,
	\begin{cases}
		\tadd{t}\ket{\gamma_t}=\ket{\gamma_{t+1}} &  \\
		\Pi_{\textrm{legal}}\tadd{t}\ket{\varphi}=0  & \text{ for any $\ket{\varphi}\in\Hh_{\textrm{clock}}$ orthogonal to $\ket{\gamma_{t}}$}\,.
	\end{cases}\tag{C5}\\
	%%%%%
	\textrm{For any }\,\,t\in \{0,\ldots,T-2\},&\,\,\,
	\begin{cases}
		\taddd{t}\ket{\gamma_{t}}=\ket{\gamma_{t+2}} &  \\
		\Pi_{\textrm{legal}}\taddd{t}\ket{\varphi}=0  & \text{ for any $\ket{\varphi}\in\Hh_{\textrm{clock}}$ orthogonal to $\ket{\gamma_{t}}$}\,.
	\end{cases}\tag{C6}
\end{align}
The first part of each condition says that each $\tstay{t},\tadd{t},\taddd{t}$ behaves as $\stay{t},\add{t},\addd{t}$, respectively, on the clock state $\ket{\gamma_t}$. The second part says that while the behavior may differ from the behavior of $\stay{t},\add{t},\addd{t}$ on other states, $\tstay{t},\tadd{t},\taddd{t}$ never maps an illegal state or even a legal state other than $\ket{\gamma_t}$ to a legal state.

The analysis based on the projection lemma (Lemma 1 in \cite{Kempe+06}) shows the following.
\begin{proposition}[Adapted from Section 5 in \cite{Kempe+06}]\label{prop:KKR}
	If Conditions $(\textrm{C}_1)\sim(\textrm{C}_6)$ are satisfied, coefficients $\alpha_{\mathrm{in}}$, $\alpha_{\mathrm{out}}$, $\alpha_{A}$, $\alpha_{B}$, $\alpha_{\mathrm{clock}}$ can be chosen in the range $[0,\poly(n)]$ so that the following holds:
	\begin{itemize}
        \item If there exists a quantum proof $\ket{\psi}$ such that $\Pr[U_x \,\mathrm{accepts}\, \ket{\psi}] \ge 1-\varepsilon$, then $\lambda(H_x) \le \varepsilon$.
        \item If $\Pr[U_x \,\mathrm{accepts}\, \ket{\psi}]\le \varepsilon$ for any quantum proof $\ket{\psi}$, then $\lambda(H_x) \ge \frac{1}{2}-\varepsilon$.
    \end{itemize}
\end{proposition}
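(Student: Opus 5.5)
The plan is to reduce \cref{prop:KKR} to the analysis of Section 5 of \cite{Kempe+06} by applying the projection lemma (Lemma 1 of \cite{Kempe+06}) twice.

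\textbf{Step 1: reduction to the legal subspace.} Write $H = H_1 + \alpha_{\mathrm{clock}} I\otimes H_{\mathrm{clock}}$, where $H_1$ collects all the other terms. By (C2) and (C3), every state of $\Hh_{\textrm{circuit}}\otimes\Hh_{\mathrm{legal}}$ has zero $H_{\mathrm{clock}}$-energy. Every state of $\Hh_{\textrm{circuit}}\otimes\Hh_{\textrm{illegal}}$ has $H_{\mathrm{clock}}$-energy at least $1$. Strictly, one wants $\Hh_{\mathrm{legal}}$ to be an invariant (kernel-like) subspace. If $H_{\mathrm{clock}}$ is not block diagonal, I would replace it by its compression, or argue directly with the variational form of the projection lemma, which only needs the quadratic-form bounds. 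By (C1), $\|H_1\|\le\poly(n)$. Choosing $\alpha_{\mathrm{clock}} \ge \poly(n)$ large compared with $\|H_1\|^2$, the projection lemma gives
\[
\lambda(H_1|_{\mathrm{legal}}) - 1/8 \le \lambda(H) \le \lambda(H_1|_{\mathrm{legal}}).
\]
The upper bound is immediate, since the history state lies in the legal space.

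\textbf{Step 2: identification of the restricted Hamiltonian.} Compute $\Pi_{\mathrm{legal}}H_1\Pi_{\mathrm{legal}}$. Conditions (C4)--(C6) say exactly that $\Pi_{\mathrm{legal}}\tstay{t}\Pi_{\mathrm{legal}} = \stay{t}$, $\Pi_{\mathrm{legal}}\tadd{t}\Pi_{\mathrm{legal}}=\add{t}$, and $\Pi_{\mathrm{legal}}\taddd{t}\Pi_{\mathrm{legal}}=\addd{t}$. The same identities hold for the adjoints, because $\Pi\tadd{t}^\dagger\Pi = (\Pi\tadd{t}\Pi)^\dagger$. Since every term of $H_1$ is a tensor product of a circuit operator with one of these clock operators, the restriction of $H_1$ to the legal space coincides with the abstract Hamiltonian built from $\stay{t},\add{t},\addd{t}$ on the $(T+1)$-dimensional span of $\ket{\gamma_t}$. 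This is isometric to the Hamiltonian of \cite{Kempe+06} Section 5, written with an ideal clock.

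\textbf{Step 3: invoking the existing analysis.} Since the restricted operator is the one analyzed in \cite{Kempe+06}, their argument applies verbatim. First, $\alpha_B(H_{\mathrm{qubit}}+H_{\mathrm{time}})$ together with $\alpha_A H_{\mathrm{prop}}$ has the history states as its ground space and a $1/\poly$ spectral gap. A second application of the projection lemma, with large $\alpha_A,\alpha_B$, restricts to the history-state space, where $H_{\mathrm{in}}$ and $H_{\mathrm{out}}$ act as in Kitaev's analysis. The completeness bound follows by evaluating the energy on the history state of the good proof. With their normalizations, and after rescaling $\alpha_{\mathrm{out}}$, the energy is at most $\varepsilon$. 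The soundness bound follows from their estimate, which gives $\ge 1/2 - \varepsilon$ after absorbing the projection-lemma losses. All constants remain in $[0,\poly(n)]$.

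\textbf{Main obstacle.} The delicate point is Step 2's claim that the off-diagonal pieces vanish, i.e.\ that $\Pi_{\mathrm{legal}}\tadd{t}(I-\Pi_{\mathrm{legal}})=0$. This is exactly the second clause of (C5) applied to illegal $\ket{\varphi}$; for legal $\ket{\varphi}\perp\ket{\gamma_t}$ the same clause gives the desired $\add{t}$ action. The second difficulty is the bookkeeping of the exact numeric thresholds $\varepsilon$ and $\tfrac12-\varepsilon$. For these I would take the precise statement from \cite{Kempe+06} and tune $\alpha_{\mathrm{in}},\alpha_{\mathrm{out}}$ so that the projection-lemma errors are absorbed.
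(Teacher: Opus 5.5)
Your plan is essentially the paper's. The paper gives no separate proof of \cref{prop:KKR}; it simply invokes the projection lemma together with the Section 5 analysis of \cite{Kempe+06}. Your Step 2 observation that (C4)--(C6) give $\Pi_{\mathrm{legal}}\tadd{t}=\add{t}$ (and similarly for $\tstay{t}$, $\taddd{t}$) is exactly what makes that invocation legitimate. Two small remarks: the projection lemma only sees $\Pi_{\mathrm{legal}}H_1\Pi_{\mathrm{legal}}$, so the off-diagonal vanishing you single out as the main obstacle is not actually needed; and $\|H_1\|\le\poly(n)$ comes from the norms of the local terms, not from (C1).

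One correction: your Step 1 fallback is false, because the projection lemma really does need $\Hh_{\mathrm{legal}}\subseteq\ker H_{\mathrm{clock}}$. For example, $b(\ket{\gamma_0}\bra{\varphi}+\ket{\varphi}\bra{\gamma_0})+\ket{\varphi}\bra{\varphi}$ with illegal $\ket{\varphi}$ meets (C2)--(C3) on that two-dimensional space but has eigenvalue $(1-\sqrt{1+4b^2})/2<0$, which $\alpha_{\mathrm{clock}}$ amplifies without bound. Replacing $H_{\mathrm{clock}}$ by its compression is not an option either, since it changes the Hamiltonian. The right patch is that $H_{\mathrm{clock}}\succeq 0$ in both instantiations (it is diagonal in the computational basis), so (C2) already forces $H_{\mathrm{clock}}\ket{\gamma_t}=0$.
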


\paragraph{Implementation by Kempe-Kitaev-Regev.}
The implementation in \cite{Kempe+06} uses the unary encoding 0000, 1000, 1100, 1110, $1111\cdots$ for the clock states, i.e., uses the $T$-qubit clock states
\[
\ket{\gamma_t}=\ket{1^t0^{T-t}}
\]
for each $t\in\{0,\ldots,T\}$, and the operators $\tstay{t},\tadd{t},\taddd{t}$ described in the following table:\vspace{3mm}

\begin{table}[h!t]
\centering
\renewcommand{\arraystretch}{1.4}
\begin{tabular}{|l|l|l|l|}
	\hline
	Operator&$t$&Implementation& Locality \\\hline
	\multirow{3}{*}{$\stay{t}$}&
	$t=0$ & $\tstay{0}=\ket{0}_1\bra{0}_1$ & 1\\\cline{2-4}
	&$t\in\{1,\ldots,T-1\}$  & $\tstay{t}=\ket{10}_{t,t+1}\bra{10}_{t,t+1}$ & 2\\\cline{2-4}
	&$t=T$ & $\tstay{T}=\ket{1}_T\bra{1}_T$ & 1\\\hline
	$\add{t}$ &$t\in\{0,\ldots,T-1\}$  & $\tadd{t}=\ket{1}_{t+1}\bra{0}_{t+1}$ & 1\\\hline
	$\addd{t}$ &$t\in\{0,\ldots,T-2\}$  & $\taddd{t}=\ket{11}_{t+1,t+2}\bra{00}_{t+1,t+2}$ & 2\\\hline
\end{tabular}
\caption{Concrete implementation of operators in \cite{Kempe+06}}\label{tab:operators in KKR06}
\end{table}

The term $H_{\mathrm{clock}}$ used in~\cite{Kempe+06} is defined as follows:
\[
H_{\mathrm{clock}}
=
\sum_{1 \le i < j \le T}
\ket{01}_{ij}\bra{01}_{ij}\,.
\]

This implementation satisfies Conditions $(\textrm{C}_1)\sim(\textrm{C}_6)$. With this implementation, 
each term $H_{\mathrm{in}}$, $H_{\mathrm{out}}$, $H_{\mathrm{prop},t}$, $H_{\mathrm{qubit},t}$, $H_{\mathrm{time}, t}$, $H_{\mathrm{clock}}$ has locality 2. The whole Hamiltonian is thus 2-local.

\subsection{$(n,d)$-clock state from \cite{Chia+CCC23} and associated operators to implement LH}

In this subsection, we recall an $(n,d)$-clock state proposed in \cite{Chia+CCC23}, and introduce and analyze operators to use it for circuit-to-Hamiltonian constructions.
The $(n,d)$-clock state is based on a Johnson graph.
The vertices of the Jonson graph are all the subsets of $[n]$ with size $k<n$.
Two vertices are adjacent if and only if the subsets share exactly $d-1$ elements.

\begin{definition}[Johnson graph]
    \label{def:Johnson}
    Let $n,d\in\mathbb{N}$ and $d<n$. We say a Johnson graph $J(n,d)=(V,E)$ is a graph that satisfies the following requirement.
    \begin{itemize}
        \item $V=\{S\subseteq[n]:|S|=d\}$.
        \item $E=\{(S,S')\in V^2:|S\cap S'|=d-1\}$.
    \end{itemize}
\end{definition}
The number of vertices of $J(n,d)$ is $\binom{n}{d}$.
In \cite{Alspach12}, it has been proved that for all $n,d$, there is a Hamiltonian path\footnote{A Hamiltonian path is a path that visits every vertex in a graph exactly once. Please do not confuse with the physical quantity (local Hamiltonian) $H$.} in the Johnson graph $J(n,d)$.
Also, the proof implicitly construct an algorithm that finds the Hamiltonian path in $O(n^d)$ time. 

Now we explain how to construct the clock.
A clock state is an $n$-qubit state.
Each clock state $\ket{\clock{t}}$ corresponds to a vertex $S_t$ on a Hamiltonian path in $J(n,d)$.
All the clock states are basis states of the computational basis whose number of 1's equals to $d$.
The $i$th qubit in $\ket{\clock{t}}$ is $\ket{1}$ if and only if $i$ is chosen into $S_t$.
In other words, when we treat the clock register as the set $[n]$ and let $\rgst{S_t}$ be the portion therein corresponding to $S_t$, the state in $\rgst{S_t}$ is $\ket{1^d}$.
The next clock $\ket{\clock{t+1}}$ corresponds to the vertex $S_{t+1}$ that is adjacent to $S_{t}$ in the Hamiltonian path.
To update the $\ket{\clock{t}}$ to $\ket{\clock{t+1}}$, we make $d-1$ of 1's unchanged and flip two bits. Hence the update of the clock state can be done with constant local operation.
We define the clock state formally as follows.
\begin{definition}[$(n,d)$-clock state \cite{Chia+CCC23}]\label{def:Yao-Ting clock}
    A $(n,d)$-clock state is a collection of $n$-qubit quantum state $\{\ket{\gamma_t}\}_{t=0}^T$ defined by a Johnson graph $J(n,d)=(V,E)$ and $T=\binom{n}{d}-1$. 
    Let $S_0, S_1,\dots, S_{T}\in V$ and the sequence $S_0, S_1,\dots, S_{T}\in V$ forms a Hamiltonian path in $J(n,d)$.
    For any $t\in\{0\}\cup [T]$ the state $\ket{\gamma_t}$ is defined by  
    $\ket{\gamma_t}:=\bigotimes_{i\in[n]}\ket{\indicator_{S_t}(i)}$.
\end{definition}

To use the clock above for the circuit-to-Hamiltonian construction, we introduce several local operators to pause and move time on the clock.
For all $t\in 0\cup [T]$, let $\rgst{S_t}$ be the corresponding register of $S_t$.
For all $t\in [T]$, we define the operator $F_t$ as follows.
\begin{equation}
    \label{eq:forward}
    F_t:=\ketbra{1}{0}\reg{S_{t-1}\setminus S_{t}}\otimes \ketbra{0}{1}\reg{S_{t}\setminus S_{t-1}}\otimes \proj{1^{d-1}}\reg{S_{t}\cap S_{t-1}}.    
\end{equation}
Hence we have 
\begin{equation}
    \label{eq:backward}
    F_{t}^\dagger:=\ketbra{0}{1}\reg{S_{t-1}\setminus S_{t}}\otimes \ketbra{1}{0}\reg{S_{t}\setminus S_{t}}\otimes \proj{1^{d-1}}\reg{S_{t}\cap S_{t-1}}.    
\end{equation}
It holds that
\begin{equation}
    \label{eq:forward_a_step}
    F_{t}\ket{\gamma_{t'}}=\delta_{t',t-1}\ket{\gamma_{t'+1}},
\end{equation}
and
\begin{equation}
    \label{eq:backward_a_step}
    F_{t}^{\dagger}\ket{\gamma_{t'}}=\delta_{t',t}\ket{\gamma_{t'-1}}.
\end{equation}
That is, $F_{t}$ ``forwards'' the clock $\ket{\clock{t-1}}$ one step, and eliminate all the other state $\ket{\clock{t'}}$ where $t'\neq t-1$.
Likewise, $F_{t}^\dagger$ ``backwards'' the clock $\ket{\clock{t}}$ one step, and eliminate all the other state $\ket{\clock{t'}}$ where $t'\neq t$.
We have that $F_{t}$ is $(d+1)$-local.
Note that $F_{t}$ is neither unitary nor Hermitian.

For $t\in\{0\}\cup [T]$, we define the operator $P_t$ as following.
\begin{equation}
    \label{eq:pause}
    P_{t}:= \proj{1^d}\reg{S_t}.
\end{equation}
We have $P_{t}^\dagger = P_{t}$.
We also have
\begin{equation}
    \label{eq:pause_a_step}
    P_{t}\ket{\gamma_{t'}}=\delta_{t',t}\ket{\gamma_{t'}}.
\end{equation}
That is, $P_{t}$ ``pauses'' the clock $\ket{\clock{t}}$, and eliminate all the other state $\ket{\clock{t'}}$ where $t'\neq t$.
The operator $P_{t}$ is $d$-local.

We have seen that we can ``forward'', ``backword'' and ``pause'' the clock with local operators. To use the $(n,d)$ clock for the circuit-to-Hamiltonian construction, we also want valid clock states $\{\ket{\gamma_t}\}_{t=0}^T$ to be the ground state for the Hamiltonian, and give some energy penalty to any other states orthogonal to valid clock states. In order to achieve this, we define
\begin{equation}\label{eq:H_stab}
    H_{\mathrm{stab}} :=  \binom{n}{d} H_{>d} + H_{<d} - \left(\binom{n}{d} - 1\right) I,
\end{equation}
    where
\begin{align}
    H_{>d} &:=\sum_{\rgst{S}:|\rgst{S}|=d+1} \ketbra{1^{d+1}}{1^{d+1}}\reg{S},\label{eq:H_more}\\
    H_{<d} &:=\sum_{\rgst{S}:|\rgst{S}|=d}\sum_{x\in\{0,1\}^{d}\setminus\{1^d\}} \proj{x}\reg{S},\label{eq:H_fewer}
\end{align}

The term $H_{>d}$ gives penalty to the state that contains ``too many'' 1's, and the term $H_{<d}$ gives penalty to the state that contains ``too few'' 1's.

\begin{lemma}
    The linear space for the ground states of $H_{\mathrm{stab}}$ is spanned by $\{\ket{\gamma_t}\}_{t=0}^T$, and for any state $\ket{\phi}$ orthogonal to all $\{\ket{\gamma_t}\}_{t=0}^T$, we have
    \[
        \expec{\phi}{H_{\mathrm{stab}}}\ge 1.
    \]
\end{lemma}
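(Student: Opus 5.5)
The plan is to use the fact that $H_{\mathrm{stab}}$ is diagonal in the computational basis: $H_{>d}$, $H_{<d}$ and $I$ are all sums of computational-basis projectors. So it suffices to compute the energy $E(x)=\bra{x}H_{\mathrm{stab}}\ket{x}$ of every basis state $x\in\{0,1\}^n$, and then decide which basis states have energy $0$ and which have energy at least $1$.

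First, the energy depends only on the Hamming weight $w=wt(x)$. The projector $\proj{1^{d+1}}\reg{S}$ evaluates to $1$ on $\ket{x}$ exactly when $S\subseteq\{i:x[i]=1\}$, so $\bra{x}H_{>d}\ket{x}=\binom{w}{d+1}$. Likewise, $\sum_{y\neq 1^d}\proj{y}\reg{S}=I-\proj{1^d}\reg{S}$ evaluates to $1$ exactly when $S\not\subseteq\{i:x[i]=1\}$, so $\bra{x}H_{<d}\ket{x}=\binom{n}{d}-\binom{w}{d}$. Substituting into \cref{eq:H_stab},
\[
E(x)=\binom{n}{d}\binom{w}{d+1}-\binom{w}{d}+1 .
\]

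Next I will do a case analysis on $w$. If $w=d$, then $E=0-1+1=0$. If $w<d$, both binomials vanish and $E=1$. If $w>d$, then $\binom{w}{d+1}\ge 1$ and $\binom{w}{d}\le\binom{n}{d}$, so $E\ge \binom{n}{d}-\binom{n}{d}+1=1$. The only slightly delicate point is this last inequality, which is where the large coefficient $\binom{n}{d}$ on $H_{>d}$ is needed. Hence the zero-energy eigenspace is exactly the span of the weight-$d$ basis states, and all other basis states have energy at least $1$.

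Finally, I identify the weight-$d$ basis states with $\{\ket{\gamma_t}\}_{t=0}^T$. By \cref{def:Yao-Ting clock}, each $\ket{\gamma_t}$ is the indicator state of a $d$-subset $S_t$. Since $S_0,\dots,S_T$ is a Hamiltonian path in $J(n,d)$ with $T+1=\binom{n}{d}$, every $d$-subset of $[n]$ appears exactly once. So the $\ket{\gamma_t}$ are precisely all weight-$d$ basis states, and they span the ground space, whose energy is $0$. If $\ket{\phi}$ is orthogonal to all $\ket{\gamma_t}$, it is supported on basis states of weight $\neq d$. Because $H_{\mathrm{stab}}$ is diagonal, $\expec{\phi}{H_{\mathrm{stab}}}=\sum_x|\braket{x}{\phi}|^2E(x)\ge 1$, as claimed.
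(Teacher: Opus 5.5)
Your proof is correct and follows essentially the same route as the paper: both use that $H_{\mathrm{stab}}$ is diagonal in the computational basis, split basis states by Hamming weight ($<d$, $=d$, $>d$), and rely on the coefficient $\binom{n}{d}$ on $H_{>d}$ to cancel the constant shift when the weight exceeds $d$. Your closed form $E(x)=\binom{n}{d}\binom{w}{d+1}-\binom{w}{d}+1$ is a slightly sharper version of the paper's per-case bounds. Your Hamiltonian-path argument that the $\ket{\gamma_t}$ are exactly all weight-$d$ basis states justifies a step the paper only asserts.
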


\begin{proof}
    Since $H_{\mathrm{stab}}$ is diagonalized in the computational basis i.e., each computational state is an eigenstate of $H_{\mathrm{stab}}$. To verify the above statement, we can check $\expec{x}{H_{\mathrm{stab}}}$ for all $x\in\{0,1\}^{n}$.

We divide the Hilbert space $\mathcal{H}_\mathrm{clock}$ into three subspaces $\hs{L}_{=d}$, $\hs{L}_{>d}$, and $\hs{L}_{<d}$ which are defined below.
    \begin{align}
        \hs{L}_{<d} &: =\spn(\{w\in\{0,1\}^{n}:wt(w)<d\}),\\
        \hs{L}_{=d} &: =\spn(\{y\in\{0,1\}^{n}:wt(y)=d\}),\\
        \hs{L}_{>d} &: =\spn(\{z\in\{0,1\}^{n}:wt(z)>d\})
    \end{align}
Note that $\hs{L}_{=d}$ is the subspace spanned by the valid clock states $\{\ket{\gamma_t}\}_{t=0}^T$.

We first calculate $H_{>d}$ acting on the states in $\hs{L}_{=d}$, $\hs{L}_{>d}$, and $\hs{L}_{<d}$ respectively.
    For any $\rgst{S}$ with size $d+1$, $\proj{1^{d+1}}\reg{S}\ket{x} = \ket{x}$ if the subset-string $x_S=1^{d+1}$.
    Hence, we have the following.
    \begin{itemize}
        \item For all $w\in\{0,1\}^{n}$ such that $wt(w)<d$, it holds that $\bra{w}H_{>d}\ket{w} = 0$ 
        because for all $S\subseteq[n]$ with size $d+1$, $z_S$ has at most $d-1$ of 1's.
        \item For all $y\in\{0,1\}^{n}$ such that $wt(y)=d$, it holds that $\bra{y}H_{>d}\ket{y} = 0$ because for all $S\subseteq[n]$ with size $d+1$, $w_S$ has at most $d$ of 1's.
        \item For all $z\in\{0,1\}^{n}$ such that $wt(z)>d$, it holds that $\bra{z}H_{>d}\ket{z} = r$ where $r\ge 1$, because there is at least one $S\subseteq [n]$ with size $k+1$ such that $y_S=1^{d+1}$. 
    \end{itemize}

Then we calculate $H_{<d}$ acting on the states in $\hs{L}_{=d}$, $\hs{L}_{>d}$, and $\hs{L}_{<d}$ respectively.

    Let $w\in \{0,1\}^{n}$ such that $wt(w)<d$. 
    For any $\rgst{S}$ with size $d$, we have $\sum_{x\in\{0,1\}^d\setminus\{1^d\}}\proj{x}\reg{S}\ket{w} = \ket{w}$, because there exist exactly one $x\in\{0,1\}^d\setminus\{1^d\}$ such that $w_{S}=x$.
    Hence, \[\sum_{\rgst{S}:|\rgst{S}|=d}\sum_{x\in\{0,1\}^d\setminus\{1^d\}}\proj{x}\reg{S}\ket{z} = \binom{n}{d} \ket{z}.\]

    Let $y\in\{0,1\}^{n}$ such that $wt(y)=d$. 
    Let $S'\subseteq[n]$ be the subset with size $d$ such that $w_{S'}=1^d$.
    For any $\rgst{S}$ with size $d$ such that $S\neq S'$, we have that $\sum_{x\in\{0,1\}^d\setminus\{1^d\}}\proj{x}\reg{S}\ket{y} = \ket{y}$, and
    for $S = S'$, we have $\sum_{x\in\{0,1\}^d\setminus\{1^d\}}\proj{x}\reg{S}\ket{y} = 0$;
    Hence, \[\sum_{\rgst{S}:|\rgst{S}|=d}\sum_{x\in\{0,1\}^d\setminus\{1^d\}}\proj{x}\reg{S}\ket{w} = \left(\binom{\nclock}{d} -1 \right)\ket{w}.\]

    For $z\in\{0,1\}^{n}$ such that $wt(z)>d$,
    since the penalty has been given by $H_{>d}$ already,
    $H_{<d}\ket{z} = p\ket{z}$ where $p\ge 0$ is sufficient for us.
    
    Therefore, we have the follows.
    \begin{itemize}
        \item For all $w\in\{0,1\}^{n}$ such that $wt(w)<d$, it holds that $H_{<d}\ket{w} = \binom{n}{d} \ket{w}$.
        \item For all $y\in\{0,1\}^{n}$ such that $wt(y)=d$, it holds that $H_{<d}\ket{y} = \left(\binom{n}{d}-1\right) \ket{y}$.
        \item For all $z\in\{0,1\}^{n}$ such that $wt(z)>d$, it holds that $H_{<d}\ket{z} =p\ket{z}$ where $p\ge 0$.
    \end{itemize}
    
    As a result, we have the follows.
    \begin{itemize}
        \item For all $w\in\{0,1\}^{n}$ such that $wt(w)<d$, it holds that $\bra{w}H_{\mathrm{stab}}\ket{w} = 1$.
        \item For all $y\in\{0,1\}^{n}$ such that $wt(y)=d$, it holds that $\bra{y}H_{\mathrm{stab}}\ket{y} = 0$.
        \item For all $z\in\{0,1\}^{n}$ such that $wt(z)>d$, it holds that $\bra{z}H_{\mathrm{stab}}\ket{z} \ge 1$.
    \end{itemize}
    
    Consequently, if $\ket{\phi}\in\hs{L}_{=d}$, then 
    $\expec{\phi}{H_{\mathrm{stab}}}=0$, and if $\ket{\phi}\in\hs{L}_{=d}^\perp = \hs{L}_{<d} \oplus \hs{L}_{>d}$, then $\expec{\phi}{H_{\mathrm{stab}}}\ge 1$.
\end{proof}

%======
\subsection{Our new clock states and proof of \cref{theorem:main}}
%======

We show how to modify the construction from \cref{subsec:KKR} to obtain a length-preserving construction. The construction is the same as the construction in \cref{subsec:KKR}, except that we are using a new clock, new clock operators, and a new term $H_{\mathrm{clock}}$. 

\begin{proof}[Proof of \cref{theorem:main}]
If $d=1$, the statement follows from the construction of \cite{Kempe+06} in \cref{subsec:KKR}. For the rest of the proof, we assume $d\geq2$.

To further reduce the locality, we introduce a new clock to count from $0$ to~$T$. We assume (without loss of generality) that $T$ is of the form $T=\binom{\cmax}{d-1}(\cmax+1)-1$ for some $\cmax$. We use two clocks, each using $\cmax$ qubits. The first clock is the $(a,d-1)$-clock following \cref{def:Yao-Ting clock}, and the second clock is the unary clock on $a$ qubits. Parameters are slightly different from the previous section, and not to confuse readers, we clarify the operator of the $(a,d-1)$-clock states.

\begin{equation}
    \label{eq:forward_final}
    F'_t:=\ketbra{1}{0}\reg{S_{t-1}\setminus S_{t}}\otimes \ketbra{0}{1}\reg{S_{t}\setminus S_{t-1}}\otimes \proj{1^{d-2}}\reg{S_{t}\cap S_{t-1}}.    
\end{equation}

\begin{equation}
    \label{eq:pause_final}
    P'_{t}:= \proj{1^{d-1}}\reg{S_t}.
\end{equation}

\begin{equation}\label{eq:H_stab_final}
    H'_{\mathrm{stab}} :=  \binom{a}{d-1} H'_{>d-1} + H'_{<d-1} - \left(\binom{a}{d-1} - 1\right) I,
\end{equation}
    where
\begin{align}
    H'_{>d-1} &:=\sum_{\rgst{S}:|\rgst{S}|=d} \ketbra{1^{d}}{1^{d}}\reg{S},\label{eq:H_more_final}\\
    H'_{<d-1} &:=\sum_{\rgst{S}:|\rgst{S}|=d-1}\sum_{x\in\{0,1\}^{d-1}\setminus\{1^{d-1}\}} \proj{x}\reg{S},\label{eq:H_fewer_final}
\end{align}

The whole clock thus requires only $2\cmax\le 2T^{1/d}$ qubits. To represent the order of the $T+1$ representations of the clock, we define a map \[
	g\colon\{0,\ldots, T\}\to\{1,\ldots, \binom{\cmax}{d-1}\}\times\{0,\ldots, \cmax\}\,
\] 
as $g(t) = (1+\floor{t/(a+1)},\min(m,2a+1-m))$ where $m=t \bmod (2a+2)$. We implicitly use $g$ to interpret time steps $t\in\{0,\ldots, T\}$ as pairs $(t_1,t_2)\in\{1,\ldots, \binom{\cmax}{d}\}\times\{0,\ldots, \cmax\}$.
We explain the construction of $g$ for $\cmax=3,d=1$ in \cref{table}.

\begin{table}[h!t]
\centering
\renewcommand{\arraystretch}{1.4}
\begin{tabular}{c|c|c|}
	$t$&$g(t)=(t_1,t_2)$& clock state\\\hline
	$t= 0$&$(1,0)$&$\red{\ket{100}}\otimes\blue{\ket{000}}$\\
	$t= 1$&$(1,1)$&$\red{\ket{100}}\otimes\blue{\ket{100}}$\\
	$t= 2$&$(1,2)$&$\red{\ket{100}}\otimes\blue{\ket{110}}$\\
	$t= 3$&$(1,3)$&$\red{\ket{100}}\otimes\blue{\ket{111}}$\\
	$t= 4$&$(2,3)$&$\red{\ket{010}}\otimes\blue{\ket{111}}$\\
	$t= 5$&$(2,2)$&$\red{\ket{010}}\otimes\blue{\ket{110}}$\\
	$t= 6$&$(2,1)$&$\red{\ket{010}}\otimes\blue{\ket{100}}$\\
	$t= 7$&$(2,0)$&$\red{\ket{010}}\otimes\blue{\ket{000}}$\\
	$t=8$&$(3,0)$&$\red{\ket{001}}\otimes\blue{\ket{000}}$\\
	$t=9$&$(3,1)$&$\red{\ket{001}}\otimes\blue{\ket{100}}$\\
	$t=10$&$(3,2)$&$\red{\ket{001}}\otimes\blue{\ket{110}}$\\
	$t=11$&$(3,3)$&$\red{\ket{001}}\otimes\blue{\ket{111}}$
\end{tabular}\caption{Our clock for $T=11$ (i.e., $\cmax=3$). The part in red represents the first clock, while the part in blue represents the second clock. The first clock uses the encoding 100, 010, 001. The second clock uses the encoding 000, 100, 110, 111. }\label{table}
\end{table}

Our implementation of the clock operators is described in \cref{tab:clock_implementation}.

{
\begin{table}[h!]
    \centering
\renewcommand{\arraystretch}{1.3}
\begin{tabular}{|l|l|l|l|l|}
	\hline
	Operator&$t_1$&$t_2$&Implementation & Locality \\\hline
	\multirow{3}{*}{$\stay{t}$} &
	\multirow{3}{*}{any}&
	$t_2=0$  & 
	$\red{P'_{t_1}} \otimes\blue{\ket{0}_1 \bra{0}_1}$
	& $d$\\\cline{3-5}
	& 
	&
	$t_2\in\{1,\ldots,\cmax-1\}$  &$\red{P'_{t_1}} \otimes\blue{\ket{10}_{t_2,t_2+1} \bra{10}_{t_2,t_2+1}}$ & $d+1$\\\cline{3-5}
	& 
	&
	$t_2=\cmax$  &
	$\red{P'_{t_1}} \otimes \blue{\ket{1}_{\cmax} \bra{1}_{\cmax}}$ 
	& $d$\\\hline
	%%%%%%%%%%%%%%%%%%%%%%%%%%%%%%%%%%%%%%
	\multirow{4}{*}{$\add{t}$}
	& 
	\multirow{2}{*}{odd}&
	$t_2\in\{0,\ldots,\cmax-1\}$  & 
	$\red{P'_{t_1}} \otimes  \blue{\ket{1}_{t_2+1}\bra{0}_{t_2+1}}$ & $d$\\\cline{3-5}
	& 
	&
	$t_2=\cmax$  \,\,(if $t_1\neq \binom{a}{d-1}$)& 
	$\red{F'_{t_1+1}}\otimes  \blue{\ket{1}_{\cmax}\bra{1}_{\cmax}}$
    & $d+1$\\\cline{2-5}
	%$\,\,(t<T)$
	& 
	\multirow{2}{*}{even}&
	$t_2\in\{1,\ldots,\cmax\}$  & 
	$\red{P'_{t_1}}\otimes\blue{\ket{0}_{t_2}\bra{1}_{t_2}}$ & $d$\\\cline{3-5}
	& 
	&
	$t_2=0$  & 
	$\red{F'_{t_1+1}} \otimes \blue{\ket{0}_{1}\bra{0}_{1}}$ & $d+1$\\\cline{1-5}
	%%%%%%%%%%%%%%%%%%%%%%%%%%%%%%%%%%%%%
	%%%%%%%%%%%%%%%%%%%%%%%%%%%%%%%%%%%%%%
	\multirow{6}{*}{$\addd{t}$}
	& 
	\multirow{3}{*}{odd}&
	$t_2\in\{0,\ldots,\cmax-2\}$ & 
	$\red{P'_{t_1}}
	\otimes  \blue{\ket{11}_{t_2+1,t_2+2}\bra{00}_{t_2+1,t_2+2}}$ & $d+1$\\\cline{3-5}
	& 
	&
	$t_2=\cmax-1$  \,\,(if $t_1\neq \cmax$)& 
	$\red{F'_{t_1+1}}\otimes  \blue{\ket{0}_{\cmax}\bra{1}_{\cmax}}$ & $d+1$\\\cline{3-5}
	& 
	&
	$t_2=\cmax$  \,\,(if $t_1\neq \cmax$)& 
	$\red{F'_{t_1+1}} \otimes \blue{\ket{1}_{\cmax}\bra{0}_{\cmax}}$ & $d+1$\\\cline{2-5}
	%$\,\,(t<T-1)$
	& 
	\multirow{3}{*}{even}&
	$t_2\in\{2,\ldots,\cmax-2\}$  & 
	$\red{P'_{t_1}}\otimes\blue{\ket{00}_{t_2-1,t_2}\bra{11}_{t_2-1,t_2}}$ & $d+1$\\\cline{3-5}
	& 
	&
	$t_2=1$  & 
	$\red{F'_{t_1+1}} \otimes \blue{\ket{0}_{1}\bra{1}_{1}}$ & $d+1$\\\cline{3-5}
	& 
	&
	$t_2=0$  & 
	$\red{F'_{t_1+1}}  \otimes  \blue{\ket{1}_{1}\bra{0}_{1}}$ & $d+1$\\\cline{1-5}
	%%%%%%%%%%%%%%%%%%%%%%%%%%%%%%%%%%%%%
\end{tabular}
\caption{The implementation of clock operations. The part in red acts on the qubits corresponding to the first clock and the part in blue acts on the qubits corresponding to the second clock.}
    \label{tab:clock_implementation}
\end{table}
}

\clearpage
\paragraph{The term $\bm{H_{\mathrm{clock}}}$.}

The term $H_{\mathrm{clock}}$ is defined as follows:
\[
	H_{\mathrm{clock}}
	=
    H'_{\mathrm{stab}}\otimes I
	+
	\sum_{0 \le i < j \le \cmax}
	I\otimes\ket{01}_{ij}\bra{01}_{ij}\,,
\]
where in term, the left part acts on the register corresponding to the first clock and the right part acts on the qubits corresponding to the second clock.

\paragraph{Analysis of the whole Hamiltonian.}
Without loss of generality, we assume that
\begin{equation}
	\textrm{$U_t=I$ for any $t$ of the form $t=(t_1,0)$ with $t_1$ even or $t=(t_1,a)$ with $t_1$ odd.}\tag{$\star$}
\end{equation}
The assumption can be fulfilled by adding the dummy gates $I$ at such time steps. 
We add those time steps such that $U_t=I$ to $T_A$, i.e., we consider the gates $U_t=I$ as 1-qubit gates.
The locality of each term of the whole Hamiltonian is represented in the following table (note that we crucially use Assumption ($\star$) to guarantee that the locality of $H_{\mathrm{prop},t}$ and $H_{\mathrm{qubit},t}$ is $d+1$): \vspace{3mm}

{
\renewcommand{\arraystretch}{1.3}
\begin{tabular}{|l|l|}
	\hline
	Term & Locality \\\hline
	$H_{\mathrm{in}}$&$1+d=d+1$\\\hline
	$H_{\mathrm{out}}$&$1+d=d+1$\\\hline
	$H_{\mathrm{prop},t}$&$\max\{d+1,1+d\}=d+1$\\\hline
	$H_{\mathrm{qubit},t}$&$1+d=d+1$\\\hline
	$H_{\mathrm{time}, t}$&$\max\{d,d+1\}=d+1$\\\hline
	$H_{\mathrm{clock}}$& $d$\\\hline
\end{tabular}
}\vspace{5mm}

The whole Hamiltonian is thus $(d+1)$-local. We can easily check that Conditions $(\textrm{C}_1)\sim(\textrm{C}_6)$ are satisfied. From \cref{prop:KKR}, we conclude that the circuit-to-Hamiltonian construction is correct. From the condition $(\star)$, $T$ is $\binom{a}{d-1}a$. Therefore, we can take $a=O(T^{1/d})$ and the number of qubits $H_x$ acting on is $N+O(T^{1/d})$.
\end{proof}
\section{Fine-grained hardness for local Hamiltonian problem}\label{sec:SETH-hardness of LH}
In this section, we are going to prove the lower bound for the 3-LH problem.

\begin{theorem}[Lower bound for $3\textrm{-}\LH$]\label{thm:SETH hardness for 3LH}
    Assume that SETH (resp. QSETH) holds.
    Then, for any constant $\xi>0$, for any classical (resp. quantum) algorithm $\mathcal{A}_{LH}$, there exists a $3$-local Hamiltonian $H$ acting on $n_H$ qubits associated with $\EnergyYes,\EnergyNo$ satisfying $\EnergyNo-\EnergyYes \geq \Omega(1)$ such that $\mathcal{A}_{\LH}$ cannot decide $\lambda(H)\geq \EnergyNo$ or $\lambda(H)\leq \EnergyYes$ with probability greater than $2/3$ in $O(2^{n_H(1-\xi)})$ (resp. $O(2^{n_H(1-\xi)/2})$) time.
\end{theorem}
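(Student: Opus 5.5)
The plan is a fine-grained reduction from $\kSAT$ to $3$-$\LH$ in the sense of \cref{def:fine-grain_reduction}. We argue by contraposition. Suppose some algorithm $\mathcal{A}_{\LH}$ decides every $3$-local instance on $n_H$ qubits with probability greater than $2/3$ in time $O(2^{n_H(1-\xi)})$ (resp.\ $O(2^{n_H(1-\xi)/2})$). We will build a $\kSAT$ algorithm running in time $O(2^{n(1-\xi')})$ for some $\xi'>0$ and every constant $k$, which contradicts SETH (resp.\ QSETH).

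By sparsification we may take the formula $\Phi$ to have $n$ variables and $m=O(n)$ clauses. The steps are as follows.

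\textbf{Step 1 (verification circuit).} Build a unitary verifier $U_\Phi$ that uses the $n$ witness qubits as the assignment and a counter of $O(\log m)=O(\log n)$ ancilla qubits.
\begin{itemize}
\item For each clause $\varphi_i$, compute $\neg\varphi_i$ into a scratch qubit. This is a $C^k\notgate$ on the negated literals, decomposed with $O(k)$ Toffolis and $k-2$ reusable ancillas.
\item Controlled on $\varphi_i$ being satisfied, increment the $\lceil\log(m+1)\rceil$-bit counter using a ripple of multi-controlled NOTs. This costs $O(\log^2 n)$ elementary gates per increment.
\item Uncompute the scratch qubit.
\item Finally, test whether the counter equals $m$ with one $C^{O(\log n)}\notgate$ into the output qubit.
\end{itemize}
This gives $N=n+O(\log n)+O(k)$ qubits and $T=O(kn\log^2 n)$ gates.

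The circuit is classical and reversible, so it has perfect completeness and soundness in the following sense. If $\Phi$ is satisfiable, the satisfying basis state is accepted with probability $1$. If $\Phi$ is unsatisfiable, then every computational basis input is rejected. Acceptance is diagonal in the witness basis, since the circuit is a permutation followed by a measurement of the output qubit, so every superposition is also rejected with probability $0$. Hence $\cerr$ can be taken to be $0$, or any small constant.

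\textbf{Step 2 (apply \cref{theorem:main} with $d=2$).} This yields a $3$-local $H_\Phi$ on
\[
n_H=N+O(T^{1/2})=n+O(\sqrt{n}\log n)=n+o(n)
\]
qubits, with $\norm{H_\Phi}\le\poly(n)$. Taking $\cerr=0$ gives $\EnergyYes=0$ and $\EnergyNo=1/2$, so the gap is $\Omega(1)$. The map $\Phi\mapsto H_\Phi$ is computable in $\poly(n)$ time. This includes finding a Hamiltonian path in $J(a,1)$, which is trivial for $d-1=1$, and padding with identity gates to satisfy ($\star$) and the form of $T$.

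\textbf{Step 3 (run $\mathcal{A}_{\LH}$).} Run $\mathcal{A}_{\LH}$ once on $(H_\Phi,\EnergyYes,\EnergyNo)$ and output its answer. The total time is
\[
\poly(n)+O(2^{(n+o(n))(1-\xi)})\le O(2^{n(1-\xi/2)})
\]
for large $n$, and analogously with the factor $1/2$ in the quantum case. SETH (resp.\ QSETH) with $\varepsilon=\xi/2$ supplies a $k$ for which this is impossible, which gives the contradiction.

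The main obstacle is Step 1: keeping both the ancilla count at $o(n)$ and the gate count at $o(n^2)$. The second bound is needed so that $T^{1/2}=o(n)$ when $d=2$, which is exactly what gives locality $3$. The delicate points are:
\begin{itemize}
\item implementing the controlled counter increment with only $O(\log n)$ extra qubits and polylogarithmic gates, by reusing one block of $O(\log n)$ Toffoli ancillas throughout;
\item making sure every ancilla returns to $\ket{0}$, so the history-state analysis behind \cref{prop:KKR} applies;
\item checking that compiling into the controlled-phase form required by \cref{subsec:KKR} adds only a constant-factor blowup in $T$.
\end{itemize}
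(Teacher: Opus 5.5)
Your proposal is correct and follows the paper's proof essentially step for step. Like the paper, you use the counter-based reversible verifier of \cref{lem:U_compute_Phi}, with $O(\log n)$ ancillas and $O(n\log^2 n)$ gates, apply \cref{theorem:main} with $d=2$ to get a $3$-local Hamiltonian on $n_H=n+o(n)$ qubits, and absorb the $o(n)$ overhead into the exponent. Your observation that soundness extends to superposed witnesses (every branch carries output bit $\Phi(x)=0$) fills in a point the paper only states for computational-basis assignments.
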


We will prove \cref{thm:SETH hardness for 3LH} in \cref{sec:Hlowerbound}.
To prove it, we will use the size-preserving circuit-to-Hamiltonian construction (\cref{theorem:main}) associated with the following lemma.

\begin{lemma}[A quantum circuit can compute $\kCNF$]
    \label{lem:U_compute_Phi}
    For any positive integer $k$ and $c>0$, for all $n\in\mathbb{N}$, for any $\kCNF$ formula $\Phi$ that contains $n$ variables and $m=O(n^c)$ clauses, there exists a quantum circuit $U_\Phi$ that acts on $n$ input qubits and at most $2c\log n +2$ ancilla qubits, and $U_\Phi$ consists of at most $34c^2 n^c \log^2 n + (70k+2)n^c + 35c\log n$ elementary gates such that $U_{\Phi}\ket{x}\reg{in}\otimes\ket{0}\reg{anc}=\ket{\Phi(x)}\reg{out}\otimes\ket{\psi_x}\reg{\overline{out}}$ for all $x\in\{0,1\}^n$, where $\ket{\psi_x}$ is some quantum state depending on $x$.
    The construction of $U_{\Phi}$ can be done in $\poly(n)$ time.\footnote{It is worth noting that an alternative quantum circuit for computing $k$CNF appears in~\cite{huang2020explicit}, Corollary 10. That construction has width $k + 2\bigl(\lceil \log n \rceil + \lceil c\log n \rceil\bigr)\text{ and size at most } 4 \times 3^{\lceil \log_2 n \rceil + \lceil c\log_2 n \rceil - 1}$. However, its circuit size is $O(n^{c+1})$, whereas ours is $O(n^{c}\log^2 n)$. This gap impacts the locality of the resulting Hamiltonian, which is why \cref{lem:U_compute_Phi} is necessary for our results.}
\end{lemma}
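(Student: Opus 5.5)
We build $U_\Phi$ as a clause-by-clause counting circuit. The ancilla register has three parts: a single \emph{clause qubit} $\rgst{c}$; a \emph{counter register} $\rgst{cnt}$ of $\lceil \log_2(m+1)\rceil \le c\log n + 1$ qubits, which will store the number of satisfied clauses in binary; and roughly $c\log n$ further work qubits. The work qubits serve both as carry ancillas for the increment and as ancillas for decomposing the multi-controlled gates. Altogether this stays within the $2c\log n + 2$ ancilla budget.

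For each clause $\varphi_i=\ell_{i,1}\vee\cdots\vee\ell_{i,k_i}$ we run a gadget $V_i$ with four steps.
\begin{enumerate}
\item Compute $\varphi_i(x)$ into $\rgst{c}$ using De Morgan. Apply $\notgate$ to each input qubit whose literal is positive, so the controls now hold $\neg\ell_{i,p}$. Apply $C^{k_i}\notgate$ targeting $\rgst{c}$, which writes $\neg\varphi_i(x)$ by \cref{eq:q_comp_and}. Then apply $\notgate$ to $\rgst{c}$ and undo the $\notgate$s on the inputs.
\item Perform a controlled increment of $\rgst{cnt}$, controlled on $\rgst{c}$.
\item Uncompute $\rgst{c}$ by reversing the first step, so $\rgst{c}$ returns to $\ket{0}$.
\item Note that the $C^{k_i}\notgate$ costs $2k_i-3$ Toffolis, i.e.\ at most $34k$ elementary gates, using $k_i-2$ clean ancillas, which we borrow from the work register while it is idle.
\end{enumerate}
Counting gates: the two computations of $\rgst{c}$, the NOT layers and the extra $\notgate$s account for the $(70k+2)$-per-clause term.

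The controlled increment is the step I expect to be the main obstacle. We must add $1$ to a $b$-bit counter, with $b=c\log n+O(1)$, using only $O(\log n)$ clean ancillas and $O(\log^2 n)$ Toffolis per clause. That is exactly the $34c^2\log^2 n$ per clause. The plan is the standard ripple construction. For $j=b,b-1,\dots,1$, running from the most significant bit to the least, flip bit $\rgst{cnt}[j]$ controlled on $\rgst{c}$ and on all lower bits $\rgst{cnt}[j+1..b]$ being $1$. Going from high to low ensures the control condition is read before the lower bits are changed. Each such flip is a $C^{\le b+1}\notgate$. It costs at most $2(b+1)$ Toffolis using at most $b$ ancillas from the work register, which are restored to $\ket 0$ after the gate. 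Summing over $j$ gives at most $b^2$ Toffolis, hence about $17\cdot 2 b^2\approx 34c^2\log^2 n$ elementary gates per clause. To make the constants come out exactly, we will track the ``$+1$''s in $b$ carefully.

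After processing all $m$ clauses, $\rgst{cnt}$ holds $\#\{i:\varphi_i(x)=1\}$, and every other ancilla is back in $\ket 0$. Since $\Phi(x)=1$ if and only if this count equals $m$, we compute the output bit into the clause qubit, now serving as $\rgst{out}$. First apply $\notgate$ to each counter bit $j$ with $bin(m)[j]=0$, turning the test ``count $=m$'' into ``all counter bits are $1$''. Then apply a single $C^{b}\notgate$ targeting $\rgst{out}$. This final step contributes the $35c\log n$ term. The counter is left in a garbage state, which is absorbed into $\ket{\psi_x}$; that is why the statement allows an $x$-dependent residual state.

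Correctness holds on computational basis inputs by the invariant that, after gadget $V_i$, the counter equals the number of satisfied clauses among the first $i$. Finally, the description of $U_\Phi$ is clearly computable in $\poly(n)$ time.
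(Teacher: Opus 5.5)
Your proposal is correct and follows the paper's proof essentially step for step: compute $\varphi_i$ into a single qubit via De Morgan and a $C^{k_i}$NOT, apply a clause-controlled ripple increment to an $O(\log m)$-qubit binary counter, uncompute, and finish with NOTs on the zero bits of $bin(m)$ followed by one multi-controlled NOT, sharing the decomposition ancillas throughout. Two of your choices are small improvements over the paper: reusing the clause qubit as $\rgst{out}$ saves one ancilla, and $\lceil\log_2(m+1)\rceil$ counter bits can always hold $m$, which the paper's $\lceil\log m\rceil$ bits cannot when $m$ is a power of two.

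There are two slips to fix. First, with controls on $\rgst{cnt}[j+1..b]$ the loop must run $j=1,\dots,b$, because starting at $j=b$ flips the least significant bit first and corrupts the controls of the higher bits. Second, leave the input NOTs in place between computing and uncomputing $\varphi_i$, as the paper's $W_i$ and $W_i^\dagger$ do; then each clause uses $2k+2$ rather than $4k+2$ NOT gates, which is what matches the $(70k+2)$ term.
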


We defer the proof of \cref{lem:U_compute_Phi} to \cref{sec:SAT2C}.

\subsection{SETH- and QSETH-hardness of the 3-local Hamiltonian problem}\label{sec:Hlowerbound}
\begin{proof}[Proof of \cref{thm:SETH hardness for 3LH}]
    Let $\Phi=\varphi_1\wedge\varphi_2\wedge\cdots\wedge\varphi_m$ be a $k$CNF formula defined on $n$ variables $x_1,x_2,\dots,x_n$, where $m=O(n^c)$ and $c\in[1,2)$. 
    We construct an algorithm $\mathcal{A}_{\SAT}$ that solves $\kSAT(\Phi)$
    by using $\mathcal{A}_{\LH}$ as a subroutine. 
    
    \begin{enumerate}
        \item Upon receiving a $\kSAT$ instance $\Phi$, construct a quantum circuit $U_\Phi$ from \cref{lem:U_compute_Phi}.
        \item Construct a Hamiltonian $H_{U_\Phi}$ acting on $n_H=n+o(n)$ qubits from $U_\Phi$, and obtain two energy thresholds $\EnergyYes,\EnergyNo$ where $\EnergyNo-\EnergyYes = \Omega(1)$ by \cref{theorem:main}.
        \item Run $\mathcal{A}_{\LH}$ on the input $H_{U_\Phi}$. If $\lambda(H_{U_\Phi})\leq \EnergyYes$, then return YES.
         If $\lambda(H_{U_\Phi})\geq \EnergyNo$, then return NO.
    \end{enumerate}
    The running time of $\mathcal{A}_{\SAT}$ is $\poly(n)$. 

    We now show the correctness of $\mathcal{A}_{\SAT}$.
    By \cref{lem:U_compute_Phi}, we have the following: 
    \begin{itemize}
        \item If there exists an assignment $x\in\{0,1\}^n$ such that $\Phi(x)=1$, then $\Pr[1\gets U_\Phi(\ket{x})]=1$.
        \item If $\Phi(x)\neq 1$ for all assignments $x\in\{0,1\}^n$, then $\Pr[1\gets U_\Phi(\ket{x})]=0$ for all $x\in\{0,1\}^n$. 
    \end{itemize}
    Combining \cref{theorem:main} and choosing $\cerr < \frac{1}{4}$, we obtain:
    \begin{itemize}
        \item $||H_{U_\Phi}|| \leq \poly(n_H)$.
        \item If there exists $x\in\{0,1\}^n$ such that $\Phi(x)=1$, then $\lambda(H_{U_\Phi})\le \EnergyYes$.
        \item If $\Phi(x)\neq 1$ for all $x\in\{0,1\}^n$, then $\lambda(H_{U_\Phi})\ge \EnergyNo$.
        \item $\EnergyNo-\EnergyYes = \Omega(1)$
    \end{itemize}
    Therefore, $\mathcal{A}_{\SAT}$ has the same success probability as $\mathcal{A}_{\LH}$.

    Next, we show that $n_H = n + o(n)$ and that $H_{U_{\Phi}}$ is 3-local.
    Let $n_a$ and $g$ denote the number of ancilla qubits and elementary gates in $U_{\Phi}$, respectively.
    By \cref{lem:U_compute_Phi}, $n_a=o(n)$, and $g\le 34 c^2 n^c \log^2 n + (70k + 2)n^c + 35c \log n$. 
    For all sufficiently large $n$, the gate number $g$ is upper-bounded by $35 c^2 n^{c'}$, where $c' \in (c, 2)$ is a constant. 
     
    By \cref{theorem:main}, we can apply $d=2$ and  
    the number of qubits of $H_{U_\Phi}$ is $n_H=n+O(\log n)+o(n) = n +o(n)$
    and $H_{U_\Phi}$ is $3$-local.
    Furthermore, we have that the threshold gap $\EnergyNo-\EnergyYes = \Omega(1)$.
    
    Because $n_H=n+o(n)$, we have $n_H\le (1+\eta)n$ where $\eta>0$ is an arbitrary small constant for all sufficiently large $n$.\
    For any constant $\xi>0$, we can take $\xi>\eta>0$ and a constant $\varepsilon=\xi-\eta+\xi\eta$ such that $\varepsilon>0$, and it holds that $O(2^{n(1+\eta)(1-\xi)})\le O(2^{n(1-\varepsilon)})$ (and $O(2^{\frac{n}{2}(1+\eta)(1-\xi)})\le O(2^{\frac{n}{2}(1-\varepsilon)})$).  

    Assume there exists a classical (resp.  quantum) algorithm $\mathcal{A}_{\LH}$ that solves $3\textrm{-}\LH(H,\EnergyYes,\EnergyNo)$ for $\EnergyNo-\EnergyYes\ge \Omega(1)$ in $O(2^{n_H(1-\xi)})$ (resp. $O(2^{\frac{n_H}{2}(1-\xi)})$) time for a constant $\xi>0$. Then we have the  classical (resp. quantum) algorithm $\mathcal{A}_{\SAT}$ solves $\kSAT$ defined on $n$ variables in $O(2^{n(1-\varepsilon)})$ (resp. $O(2^{\frac{n}{2}(1-\varepsilon)})$) time for a constant $\varepsilon>0$. Besides, the reduction is polynomial time. As a result, it violates SETH which is defined in \cref{conjecture:SETH} (resp. QSETH which is defined in \cref{conjecture:QSETH}). By taking a contraposition, we have the statement.
\end{proof}

\subsection{Efficient verification circuit for SAT}
\label{sec:SAT2C}
In this section, we show the construction of the quantum circuit $U_\Phi$ that computes the formula $\Phi=\varphi_1\wedge\dots\wedge\varphi_m$ defined on $n$ variables $x_1,x_2,\dots x_n$, and $m=O(n^c)$.
For each $i\in[m]$, the clause $\varphi_i=\ell_{i,1}\vee \ell_{i,2}\vee\cdots\vee\ell_{i,k_i}$ where for each $p\in[k_i]$, $\ell_{i,p}$ can be either $x_j$ or $\neg x_j$ where $j\in [n]$, and $k_i\le k$.
Let $r:=\ceil{\log m}$.

\begin{proof}[Proof of \cref{lem:U_compute_Phi}]
    The ideal is to compute $\varphi_i$ for each $i\in[m]$ sequentially.
    We initially set a counter, and increment the counter by one if $\varphi_i(x)=1$.
    After compute all $\varphi_i$, we check whether the counter is equal to $m$.
    Let $k_i$ be the number of variables ($x_j$ or $\neg x_j$) appearing in $\varphi_i$.
    We have $\Phi(x)=1$ if and only if the counter equals $m$.
    We introduce the ancilla register $\rgst{anc}=\rgst{cls}\cup \rgst{cnt} \cup \rgst{out}$ where $\rgst{cls}$ is a one-qubit register that temporarily stores the value of $\varphi_i(x)$ for each $i$, $\rgst{cnt}$ is an $r$-qubit ($r=\ceil{\log m}$) that serves as the counter, and $\rgst{out}$ is a one-qubit register that outputs $\Phi(x)$.

    To compute each $\varphi_i$, we construct $W_i$ acting on $\rgst{in}\cup \rgst{cls}$ such that $W_i\ket{x}\reg{in}\ket{0}\reg{cls}=\ket{\psi_x}\reg{in}\ket{\varphi_i}\reg{cls}$, where $\ket{\psi_x}$ is some state depending on $x$. 

    To increment the counter by one, we construct a unitary $\addone$ acting on $\rgst{cnt}$ such that $\addone\ket{y}\reg{cnt}=\ket{bin(int(y)+1)}\reg{cnt}$ for all $y\in\{0,1\}^r\setminus\{1^r\}$.
    We apply $\addone$ only when $\varphi_i(x)=1$. 
    This is done by letting $\addone$ be controlled by the resister $\rgst{cls}$.
    Denote the control-$\addone$ operator by $C\addone$.
    Then $C\addone\reg{cls\cup cnt}\ket{0}\reg{cls}\ket{y}\reg{cnt}=\ket{0}\reg{cls}\ket{y}\reg{cnt}$ and $C\addone\reg{cls\cup cnt}\ket{1}\reg{cls}\ket{y}\reg{cnt}=\ket{0}\reg{cls}\ket{bin(int(y)+1)}\reg{cnt}$.

    After calculating $\varphi_i(x)$ and applying $C\addone$, we apply $W_i^\dagger$ on $\rgst{in}\cup \rgst{cls}$ to restore the state to $\ket{x}\reg{in}\ket{0}\reg{cls}$. 

    To check whether the counter is equal to $m$ in the final step, we construct a compare operator, denoted by $\compare$, that acts on $\rgst{cnt}\cup\rgst{out}$.
    The operator $\compare$ satisfies $\compare \ket{y}\reg{cnt}\ket{0}\reg{out}=\ket{y}\reg{cnt}\ket{\delta_{m,int(y)}}\reg{out}$ for all $y\in\{0,1\}^{r}$. 

    To sum up, we construct the quantum circuit $U_{\Phi}$ as follows: 
    \begin{equation}
        U_\Phi:= \compare(W_m^\dagger C\addone W_m) (W_{m-1}^\dagger C\addone W_{m-1})\cdots(W_1^\dagger C\addone W_1).
    \end{equation}
    
    %=====================================================
    Next, we explain how to implement $W_i$, $\addone$, and $\compare$ gates.

    Because $\ell_{i_1}\vee \ell_{i_2}\vee\cdots\vee \ell_{i_{k_i}} = \neg(\neg \ell_{i_1} \wedge \neg \ell_{i_2}\wedge \cdots \wedge \neg \ell_{i_{k_i}})$, we can use $C^k\notgate$ together with $\notgate$ gates to implement $W_i$.
    Let $\rgst{S_i}\subseteq\rgst{in}$ be the register defined by $\rgst{S_i}:=\{j\in[n]:x_j\in\varphi_i\mathrm{\;or\;}\neg x_j \in\varphi_i \}$, i.e., $\rgst{S_i}$ consists of the $j$th qubits in $\rgst{in}$ such that $x_j$ or $\neg x_j$ in $\varphi_i$;
    and let $\rgst{R_i}\subseteq\rgst{in}$ be the register defined by $\rgst{R_i}:=\{j\in[n]:x_j\in\varphi_i\}$, i.e., $\rgst{R_i}$ consists of the $j$th qubits in $\rgst{in}$ such that $x_j$ in $\varphi_i$, but $\neg x_j$ does not. 
    We construct $W_i$ by 
    $W_i:=\notgate\cdot\reg{cls}C^{k_i}\notgate\reg{S_i\cup cls}\cdot\notgate^{\otimes |\rgst{R_i}|}\reg{R_i}$.

    To implement the $\addone$, observe that for $y\in\{0,1\}^r$, when $int(y)$ is incremented by one, a bit $y[p]$ will be flipped if and only if all the bits with order lower than $p$ are 1's.
    That is, $y[p]$ will be flipped if and only if for all $p'>p$, $y[p'] = 1$.
    Hence, $\addone$ can be composed of a sequence of multi-control-NOT gates.
    The $q$th layer of $\addone$ is a $C^{r-q}\notgate$ gate whose control qubits are ${q,q+1,\dots,r}$ and whose target is the $q$the qubit.
    The last layer of $\addone$ is a $\notgate$ gate acting on the last qubit.
    To implement the control-$\addone$ operation $C\addone$, we let every $C^{r-q}\notgate$ be controlled by register $\rgst{cls}$.
    In other words, the  $q$th layer of $C\addone$ is a $C^{r+1-q}\notgate$ gates whose control qubits are ${q,q+1,\dots,r}\cup\rgst{cls}$ and whose target is the $q$th qubit in $\rgst{cnt}$.

    The $\compare$ operator checks whether the value stored in the counter equal to $m$.
    That is, $\compare$ operator compares each bit in the counter with $bin(m)$, which can be implemented by a $C^r\notgate$ gate. 
    Let $\rgst{P}\subseteq\rgst{cnt}$ be defined by $\rgst{P}:=\{j\in[r]:bin(m)[j]=0\}$.
    We construct $\compare$ by $\compare:=C^r\notgate\reg{cnt\cup out}\cdot \notgate^{\otimes |P|}\reg{P}$.  
    
    %===========================================================
    We decompose each multi-control-NOT gate into Toffoli gates.
    For $W_i$ (and $W_i^\dagger$), there is a $C^{k_i}\notgate$ inside, and it can be decomposed into at most $2k$ of Toffoli gates using at most $k$ ancilla bits.

    The largest gate in the $C\addone$ is a control $C^r\notgate$ gate, and there are $r$ layers.
    Hence, $C\addone$ can be decomposed into at most $2r^2$ Toffoli gates.
    Because the ancillas for the multi-control-NOT gates can be reused, the ancillas required for $C\addone$ is at most $r$.

    The $\compare$ operation contains a $C^{r}\notgate$ gate, which can be decomposed into at most $2r$ Toffoli gates using at most $r$ ancilla qubits.

    The ancilla qubits can be reused for the multi-control-NOT gates. 
    Hence, to decomposed $U_\Phi$ into Toffoli gates, the number of  ancilla qubits required is at most $r$. 
    Therefore, the total number of ancilla qubits for $U_\Phi$ is $|\rgst{anc}|+r = |\rgst{cls}|+|\rgst{cnt}|+|\rgst{out}|+r = 2\log m +2 = 2c\log n +2$.

    The total number of Toffoli gates in $U_\Phi$ is at most $(4k+2r^2)\cdot m + 2r = 2c^2 n^c \log^2 n + 4k n^c +2c\log n$.
    We further decompose the Toffoli gates into elementary gates,
    yielding $34c^2 n^c \log^2 n + 68kn^c +34c\log n$ elementary gates in total.
    In addition to Toffoli gates, there are at most $2(k+1)\cdot m + r= (2k+2)n^c +  c\log n$ of $\notgate$ gates.
    Therefore, the total number of gates is at most $34c^2 n^c c^2 \log^2 n + (70k+2)n^c +35c \log n$.
\end{proof}
\section{Fine-grained complexity of quantum partition function problem}\label{sec:fine-grained complexity of QPF new}

The fine-grain lower bound for the local Hamiltonian problem directly gives us the fine-grain lower bound for the quantum partition function.

\subsection{SETH- and QSETH-hardness of quantum partition functions}\label{subsec:seth-hardness of QPF}

\begin{lemma}[Fine-grained reduction from $\kLH$ to $\kQPF$]
    \label{lem:FG_reduction_of_QPF}
    For any $T(n)\in\omega(\poly(n))$, $\kLH(H,\EnergyYes,\EnergyNo)$ is $(T(n), T(n))$ reducible to $\kQPF(H,\beta, \delta)$ in which $H$ acts on $n$ qubits, $\beta\geq \frac{n}{\EnergyNo-\EnergyYes}$ and $\delta$ satisfies that $\frac{1-\delta}{1+\delta}\geq e^{-0.3n}$.
\end{lemma}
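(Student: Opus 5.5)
The reduction is a single oracle call with no change of instance. Given an instance $(H,\EnergyYes,\EnergyNo)$ of $\kLH$ on $n$ qubits, the algorithm $\mathcal{A}_{\LH}$ sets $\beta := \frac{n}{\EnergyNo-\EnergyYes}$ (any larger polynomial value also works). This is at most $\poly(n)$ because $\EnergyNo-\EnergyYes=\Omega(1)$. The algorithm then queries $\mathcal{A}_{\QPF}$ once on $(H,\beta,\delta)$ and receives $\zout$. It outputs YES if $\zout \ge (1-\delta)e^{-\beta\EnergyYes}$ and NO otherwise. The query instance has the same size $n$, so Requirement~\ref{requre3} holds with $r(n)=1$. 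Requirement~\ref{requre4} becomes $T(n)^{1-\varepsilon}\le d\,T(n)^{1-\xi}$, which holds with $\xi=\varepsilon$. The only extra work is polynomial time, and this is absorbed into $d\,T(n)^{1-\xi}$ because $T(n)\in\omega(\poly(n))$; this is exactly why that hypothesis appears. Requirement~\ref{requre1} holds with the same success probability as the oracle.

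For correctness, I would bound $Z=\tr[e^{-\beta H}]=\sum_p e^{-\beta E_p}$ in the two cases, where $E_p$ ranges over the $2^n$ eigenvalues of $H$.

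\emph{YES case.} Here $\lambda(H)\le\EnergyYes$. Keeping only the ground-state term and dropping the other positive terms gives $Z\ge e^{-\beta\EnergyYes}$. Hence $\zout\ge(1-\delta)Z\ge(1-\delta)e^{-\beta\EnergyYes}$, and the algorithm answers YES.

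\emph{NO case.} Here every eigenvalue is at least $\EnergyNo$, so $Z\le 2^n e^{-\beta\EnergyNo}$. Then
\[
\zout\le(1+\delta)2^n e^{-\beta\EnergyNo}=(1+\delta)e^{-\beta\EnergyYes}\,2^n e^{-\beta(\EnergyNo-\EnergyYes)}\le(1+\delta)e^{-\beta\EnergyYes}\,2^n e^{-n}.
\]
It remains to check that this is strictly less than $(1-\delta)e^{-\beta\EnergyYes}$. That is equivalent to $\frac{1-\delta}{1+\delta}>(2/e)^n=e^{-(1-\ln 2)n}$. Since $1-\ln 2\approx 0.307>0.3$, the hypothesis $\frac{1-\delta}{1+\delta}\ge e^{-0.3n}$ gives this strictly for every $n\ge1$. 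So the algorithm answers NO.

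The only delicate point is this numerical check: the factor $2^n$ from the dimension must be beaten by $e^{-\beta(\EnergyNo-\EnergyYes)}\le e^{-n}$, and the relative error must not close the remaining gap. That is precisely why the constant $0.3$ sits just below $1-\ln 2$. If $\beta$ is strictly larger than $\frac{n}{\EnergyNo-\EnergyYes}$, the same chain only improves. For a general $\beta$ meeting the hypothesis, I would replace $e^{-n}$ by $e^{-\beta(\EnergyNo-\EnergyYes)}\le e^{-n}$ in the display.

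The threshold $e^{-\beta\EnergyYes}$ involves an exponentially small number, but comparing it to $\zout$ in the log domain takes only polynomial time.
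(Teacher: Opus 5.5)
Your proposal is correct and follows the paper's proof. Both make a single oracle call on the unchanged instance, bound $Z\ge e^{-\beta E_{\mathrm{yes}}}$ in the YES case and $Z\le 2^n e^{-\beta E_{\mathrm{no}}}$ in the NO case, and verify the reduction requirements with $\xi=\varepsilon$, absorbing the polynomial overhead via superpolynomial $T(n)$. Your check $1-\ln 2>0.3$ is the same as the paper's bound $2^n<e^{0.7n}$, and using one threshold $(1-\delta)e^{-\beta E_{\mathrm{yes}}}$ instead of the paper's two is only a cosmetic simplification.
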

We emphasize that in \cref{lem:FG_reduction_of_QPF},  $H$ in the $\kQPF$ problem and the $\kLH$ problem is the same Hamiltonian.

    To better understand \cref{lem:FG_reduction_of_QPF}, we unpack the underlying fine-grained reduction as follows. 
    Suppose there exist $\xi>0$ and an algorithm $\mathcal{A}_{\QPF}$ that approximate $\tr[e^{-\beta H}]$ up to relative error $\delta$ satisfying $\frac{1-\delta}{1+\delta}\ge e^{-0.3n}$ with probability greater than $2/3$ in $O(T(n)^{1-\xi})$ time, 
    then for any $\varepsilon>0$, we can use $\mathcal{A}_{\QPF}$ to construct an algorithm $\mathcal{A}_{\LH}$ such that given thresholds $\EnergyYes,\EnergyNo$ satisfying $\EnergyNo-\EnergyYes\ge n/\beta$, the algorithm $\mathcal{A}_{\LH}$ decides weather $\lambda(H)\ge \EnergyNo$ or $\lambda(H)\le \EnergyYes$ with probability greater than $2/3$ in $O(T(n)^{1-\varepsilon})$ time.  

    On the other hand,
    suppose for all $\varepsilon>0$, for any algorithm $\mathcal{A}_{\LH}$, for infinitely many $n$ there exists a $k$-local Hamiltonian $H_n$ acting on $n$ qubits and two energy thresholds $\EnergyYes,\EnergyNo$ such that $\mathcal{A}_{\LH}$ cannot solve $\kLH(H_n,\EnergyYes,\EnergyNo)$ with probability greater than $\frac{2}{3}$ in $O(T(n)^{1-\varepsilon})$ time,    
    then for all $\xi>0$, for all $\delta>0$, for infinitely many $n\ge n_1$, 
    any algorithm $\mathcal{A}_{\QPF}$ cannot solve $\kQPF(H_n, \beta, \delta)$, where $\beta\ge\frac{n}{\EnergyNo-\EnergyYes}$, with probability greater than $2/3$ in $O(T(n)^{1-\xi})$ time.
    The integer $n_1$ satisfies that $\frac{1-\delta}{1+\delta}=e^{-0.3n_1}$.

\begin{proof}[Proof of \cref{lem:FG_reduction_of_QPF}]
    Let $(H,\EnergyYes,\EnergyNo)$ be the $\kLH$ instance, where $H$ is a $k$-local Hamiltonian acts on $n$ qubits .
    We are going to construct an algorithm $\mathcal{A}_{\LH}$ that solves $\kLH(H, \EnergyYes, \EnergyNo)$ by using $\mathcal{A}_{\QPF}$ as a subroutine.
    
    If $(H, \EnergyYes, \EnergyNo)$ is YES case, then there is at least one eigenstate of $H$ whose energy is lower than or equal to $\EnergyYes$.
    Hence $Z(\beta)\ge e^{-\beta \EnergyYes}$.
    
    If $(H, \EnergyYes, \EnergyNo)$ is NO case, then all $2^n$ number of eigenstates of $H$ have energy higher than or equal to $\EnergyNo$.
    Hence, $Z(\beta)\le 2^n e^{-\beta \EnergyNo} < e^{-\beta \EnergyNo +0.7n}$.
    
    When $\beta \ge \frac{n}{\EnergyNo-\EnergyYes} \in \poly(n)$, and $\delta$ satisfies $\frac{1-\delta}{1+\delta} \ge e^{-0.3n}$, it holds that 
    \[
    \frac{1-\delta}{1+\delta}\ge e^{-0.3n}\ge e^{-\beta(\EnergyNo-\EnergyYes)+0.7n}.
    \] 
    Hence we have $(1-\delta)e^{-\beta \EnergyYes}\ge (1+\delta)e^{-\beta \EnergyNo +0.7n}$.
    
    Now we present the algorithm $\mathcal{A}_{\LH}$ that solves $\kLH(H, \EnergyYes, \EnergyNo)$ by using $\mathcal{A}_{\QPF}$.
    \begin{enumerate}
        \item When receiving $H, \EnergyYes, \EnergyNo$, calculate $\delta_0$ such that $\frac{1-\delta_0}{1+\delta_0} = e^{-0.3n}$ and $\beta_0=\frac{n}{\EnergyNo-\EnergyYes}$.
        Set $\beta \ge\beta_0$ and set $\delta \le\delta_0$.
        \item Run $\mathcal{A}_{\QPF}$ on the input $H, \beta, \delta$, and get the output $\zout$.
        \item If $\zout\ge (1-\delta)e^{-\beta \EnergyYes}$, then output YES.
        If $\zout\le (1+\delta)e^{-\beta \EnergyNo +0.7n}$, then output NO.
    \end{enumerate}
    When $\mathcal{A}_{\QPF}$ solves $\kQPF(H, \beta, \delta)$ successfully, it is guaranteed that $(1-\delta)Z(\beta) \le \zout \le (1+\delta)Z(\beta)$.
    When the inputs of $\kLH$ is YES case, $\zout\ge (1-\delta)Z(\beta) \ge (1-\delta)e^{-\beta \EnergyYes}$; and when NO case, $\zout\le (1+\delta)Z(\beta) < (1+\delta)e^{-\beta \EnergyNo +0.7n}$.

    By the choice of $\beta$ and $\delta$, it holds that $(1-\delta)e^{-\beta \EnergyYes}\ge (1+\delta)e^{-\beta \EnergyNo +0.7n}$.
    Therefore, when $\mathcal{A}_{\QPF}$ solves $\kQPF(H, \beta, \delta)$ successfully, $\mathcal{A}_{\LH}$ decides $\kLH(H, \EnergyYes, \EnergyNo)$ correctly.
    The success probability of $\mathcal{A}_{\LH}$ is the same as $\mathcal{A}_{\QPF}$.
    Therefore, the requirement \ref{requre1} in \cref{def:fine-grain_reduction} is satisfied.

    The algorithm $\mathcal{A}_{\LH}$ queries $\mathcal{A}_{\QPF}$ once in Step 2., and the instance of $\kQPF$ is exactly the input of $\kLH$.
    For any $\varepsilon$, we choose $\xi=\varepsilon$. 
    We have $T(n)^{1-\varepsilon}\le d\cdot T(n)^{1-\xi}$ for any constant $d>1$.
    Therefore, the requirement \ref{requre3} and \ref{requre4} in \cref{def:fine-grain_reduction} are satisfied.

    Finally, Step 1.\ and Step 3.\ in the algorithm $\mathcal{A}_{\LH}$ run in $\poly(n)$ time.
    Hence, $\mathcal{A}_{\LH}$ runs in $\poly(n)$ times, which is less than $d\cdot T(n)^{1-\xi}$ for some constant $d$ because $T(n)$ is super-polynomial.
    consequently,  the requirement \ref{requre2} is satisfied.
    This finishes the proof.
\end{proof}

\begin{remark}
    The reduction in \cref{lem:FG_reduction_of_QPF} is polynomial-time. Since $k\textrm{-}\LH$ is $\QMA$-hard, we have that $\kQPF$ is also $\QMA$-hard.
\end{remark}

The fine-grain reduction above implies the lower bound for the quantum partition function problem.

\begin{theorem}
    \label{thm:main_qpf}
    Assume SETH (resp. QSETH) holds.
    For any $\xi>0$ and any $\delta\in(0,1)$, 
    for any quantum algorithm $\mathcal{A}_{\QPF}$, for infinitely many $n$,
    there exists a $3$-local Hamiltonian $H$ acting on $n$ qubits, associated with an inverse temperature $\beta_0=O(n)$, such that for all $\beta\ge \beta_0$, the algorithm $\mathcal{A}_{\QPF}$ cannot solve $\QPF(H,\beta, \delta)$ in $O(2^{n(1-\xi)})$ classical time (resp. $O(2^{\frac{n}{2}(1-\xi)})$ quantum time) with probability greater than $2/3$.
\end{theorem}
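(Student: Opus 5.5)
The plan is to compose the SETH/QSETH hardness of $3\textrm{-}\LH$ (\cref{thm:SETH hardness for 3LH}) with the fine-grained reduction of \cref{lem:FG_reduction_of_QPF}, and to argue by contraposition. Fix $\xi>0$ and $\delta\in(0,1)$. Suppose, towards a contradiction, that some algorithm $\mathcal{A}_{\QPF}$ solves $\QPF(H,\beta,\delta)$ with probability greater than $2/3$ in time $O(2^{n(1-\xi)})$, or $O(2^{\frac{n}{2}(1-\xi)})$ in the quantum case. We may assume this holds for all sufficiently large $n$, for every $3$-local $H$ on $n$ qubits, and for some $\beta\ge\beta_0$ (with $\beta$ allowed to depend on $H$). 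This is the negation of the statement, up to the ``infinitely many $n$'' quantifier, which I handle by working with all large $n$.

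First, I will instantiate the hard instances. The instances come from the proof of \cref{thm:SETH hardness for 3LH}: starting from a $\kSAT$ formula $\Phi$, we build $U_\Phi$ via \cref{lem:U_compute_Phi}, then apply \cref{theorem:main} with $d=2$ and $\cerr<1/4$. This yields a $3$-local $H$ on $n_H=n+o(n)$ qubits with thresholds $\EnergyYes\le\cerr$ and $\EnergyNo\ge\frac12-\cerr$. The gap $\EnergyNo-\EnergyYes\ge\frac12-2\cerr$ is therefore an absolute constant. Consequently $\beta_0:=n_H/(\EnergyNo-\EnergyYes)=O(n_H)$, matching the claimed $\beta_0=O(n)$.

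Second, I will handle the constant relative error. The hypothesis of \cref{lem:FG_reduction_of_QPF} requires $\frac{1-\delta}{1+\delta}\ge e^{-0.3n_H}$. Since $\delta$ is a fixed constant, this holds for all $n_H\ge n_1$, where $\frac{1-\delta}{1+\delta}=e^{-0.3n_1}$. Above $n_1$, the proof of the lemma applies verbatim for any $\beta\ge\beta_0$. In the YES case, $Z\ge e^{-\beta\EnergyYes}$. In the NO case, $Z<e^{-\beta\EnergyNo+0.7n_H}$. The choice of $\beta$ separates $(1-\delta)$ times the first bound from $(1+\delta)$ times the second. Hence a single call to $\mathcal{A}_{\QPF}$ decides the $3\textrm{-}\LH$ instance with the same success probability, using only $\poly(n)$ extra work.

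Third, I will check the running time. The resulting $\LH$ solver runs in $O(2^{n_H(1-\xi)})+\poly(n)$ time, or the quantum analogue. By the closing argument of \cref{thm:SETH hardness for 3LH}, since $n_H\le(1+\eta)n$, choosing $\eta<\xi$ turns this into an $O(2^{n(1-\varepsilon)})$ algorithm for $\kSAT$ with $\varepsilon>0$, quantumly with the factor $1/2$ in the exponent. This contradicts SETH (resp. QSETH).

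The main obstacle is bookkeeping of quantifiers rather than mathematics. Three points need care. First, the ``infinitely many $n$'' must be phrased in terms of the Hamiltonian size $n_H$, not the formula size $n$. Since $n_H=n+o(n)$ is produced for infinitely many $n_H$, the contradiction only needs the algorithm to succeed on those sizes. Second, $\beta$ must be allowed to be any value $\ge\beta_0$, and the separation inequality in the lemma is monotone in $\beta$, so this is fine. Third, $\delta$ constant is harmless only once $n_H\ge n_1$. I will also note that $\beta\le\poly(n)$ is required by \cref{def:QPF}, which $\beta_0=O(n)$ satisfies.
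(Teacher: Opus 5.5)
\textbf{Same route as the paper.} You argue by contradiction, feed the $3$-local instances behind \cref{thm:SETH hardness for 3LH} into the single-query reduction of \cref{lem:FG_reduction_of_QPF}, use that a constant $\delta$ satisfies $\frac{1-\delta}{1+\delta}\ge e^{-0.3n}$ for all large $n$, and absorb $n_H=n+o(n)$ into the exponent. The instance construction, the time accounting, and your treatment of the ``infinitely many $n$'' quantifier are all fine.

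\textbf{The gap is the step about $\beta$.} You negate the statement as: for every $H$ there is \emph{some} $\beta\ge\beta_0$, possibly depending on $H$, at which $\mathcal{A}_{\QPF}$ succeeds. You then claim that a single call to $\mathcal{A}_{\QPF}$ decides the $\LH$ instance. But the $\LH$ solver has to write down $\beta$ before making that call, and your hypothesis gives it no way to find the good $\beta$, which may depend on the very answer being sought. Monotonicity of the separation inequality only says the YES and NO cases would be separated at every $\beta\ge\beta_0$ \emph{if} the oracle answered correctly there. It says nothing about whether the oracle is correct at the $\beta$ you actually query.

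\textbf{This cannot be patched under your reading.} With ``for all $\beta\ge\beta_0$'' inside the scope of the algorithm, the claim is false. Take the algorithm that reads a floating-point number off suitably placed low-order bits of the real input $\beta$ and outputs it. Since $\bigl|\frac{d}{d\beta}\ln Z\bigr|\le\|H\|\le\poly(n)$, $Z$ barely changes on a $1/\poly(n)$-length interval above $\beta_0$. So for every $H$ this algorithm is correct at some $\beta$ slightly above $\beta_0$.

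\textbf{What the argument supports.} The paper's proof implicitly uses hardness at a $\beta$ fixed in advance: it assumes success ``for arbitrarily large $\beta\ge O(n)$'' and then \emph{chooses} $\beta$. The correct statement is that for any efficiently computable $\beta(n)\ge\beta_0$, e.g.\ $\beta=\beta_0=n_H/(\EnergyNo-\EnergyYes)$, no algorithm within the stated time solves $\QPF(H,\beta(n),\delta)$ on all the $3$-local instances. Negate that version instead, i.e.\ assume the oracle succeeds at the $\beta$ the reduction queries. Then the rest of your proof goes through as written.
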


\begin{proof}
    We prove the theorem by contradiction.
    Assume that there exists an algorithm $\mathcal{A}_{\QPF}$ that solves the $\QPF$ problem.
    Then, by \cref{lem:FG_reduction_of_QPF}, we can construct an algorithm $\mathcal{A}_{\LH}$ that solves 3-local Hamiltonian problem, which contradicts \Cref{thm:SETH hardness for 3LH}.  
     
    Suppose there exist $\xi>0$ and $\delta\in(0,1)$ such that there exist an algorithm $\mathcal{A}_{\QPF}$ and $n_1\in\mathbb{N}$ satisfying the following: for all $n>n_1$, for all $H$ acting on $n$ qubits, and for arbitrarily large $\beta\ge O(n)$, the algorithm $\mathcal{A}_{\QPF}$ solves $\QPF(H,\beta, \delta)$ with probability greater than $2/3$ in $O(2^{n(1-\xi)})$ classical time.
    
    Let $(H,\EnergyYes,\EnergyNo)$ be an LH instance, where $H$ is a $3$-local Hamiltonian acting on $n$ qubits with sufficiently large $n$ (we will specify how large $n$ shall be later), and $\EnergyNo-\EnergyYes=\Omega(1)$.
    Choose $\beta$ such that $\beta \ge O(n)>\frac{n}{\EnergyNo-\EnergyYes}$.
    By \cref{lem:FG_reduction_of_QPF}, we construct $\mathcal{A}_{\LH}$, in which we query $\mathcal{A}_{\QPF}$ on the instance $(H, \beta, \delta)$.
    The algorithm runs in $O(2^{n(1-\xi)})$ time.
    
    Let $n_2$ be the smallest integer such that $\frac{1-\delta}{1+\delta}\ge e^{-0.3n_2}$.
    By the hypothesis at the beginning of the proof and by \cref{lem:FG_reduction_of_QPF}, when $n\ge\max\{n_1,n_2\}$, the algorithm $\mathcal{A}_{\LH}$ decides $(H,\EnergyYes,\EnergyNo)$ successfully with probability greater than $2/3$.

    Note that the reduction works for all $H$ acting on $n$ qubits and $\EnergyYes,\EnergyNo$ satisfying $\EnergyNo-\EnergyYes=\Omega(1)$ as long as $n\ge\max\{n_1,n_2\}$.
    We thus obtain a contradiction with \Cref{thm:SETH hardness for 3LH}. The lower bound of quantum algorithms assuming QSETH follows in the same proof strategy.
    % This completes the proof.
\end{proof}

\subsection{$O^{\ast}(2^{\frac{n}{2}})$-time quantum algorithm for $\kQPF$ problem}

In this section, we propose an algorithm that estimates the quantum partition function under a polynomial inverse temperature up to an arbitrary inverse polynomial error in $O^{\ast}(2^{\frac{n}{2}})$ time.
The running time matches our conditional lower bound.
The input Hamiltonian $H$ is $n$-qubit ,constant-local, semi-positive and satisfies $\|H\|< 1$.

\begin{theorem}[$O^{\ast}(2^{\frac{n}{2}})$ time algorithm for $\kQPF$]
    \label{thm:main_qpf_alg}
    For all sufficiently large $n\in\mathbb{N}$, for all $n$-qubit constant-local, semi-definite Hamiltonian $H$ satisfying $\|H\|<1$,  $\beta<\poly(n)$, and $\delta\ge{1}/{n^c}$ for arbitrary positive constant $c$,
    there exists a quantum algorithm that solves $\kQPF(H, \beta, \delta)$ in $O^{\ast}(2^{\frac{n}{2}})$ time with successful probability $1-\negl(n)$.  
\end{theorem}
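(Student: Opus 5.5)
The plan is to follow the outline from the techniques section: approximate the spectrum by a histogram, estimate each bin count with amplitude estimation on an EPR state, and average over shifted partitions to control boundary errors.

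\textbf{Histogram and target accuracy.} Since $0\preceq H$ and $\|H\|<1$, all eigenvalues $E_p$ lie in $[0,1)$. Fix a bin width $w=\delta/(C\beta)$ for a large constant $C$, so the number of bins is $K=\lceil 1/w\rceil=\poly(n)$. For each shift $s\in\{0,1,\dots,L-1\}$ with $L=\poly(n)$, let $\mathcal{P}_s$ be the partition of $[0,1)$ into intervals $I^{(s)}_j=[jw+sw/L,(j+1)w+sw/L)$, with endpoints clipped. Let $\mathcal{E}^{(s)}_j$ be the left endpoint of $I^{(s)}_j$. If $M^{(s)}_j$ denotes the exact number of eigenvalues in $I^{(s)}_j$, then
\[
\sum_j M^{(s)}_j e^{-\beta\mathcal{E}^{(s)}_j}\in\bigl[Z,\;e^{\beta w}Z\bigr],
\]
so with exact counts the relative error is $O(\delta/C)$.

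\textbf{Counting each bin.} Build $U_{EE}$ by phase estimation on $e^{-iH}$, using constant-local Hamiltonian simulation in $\poly(n)$ time. Choose precision $\eta\ll w/L$ and failure amplitude $\negl(n)$, obtained via median amplification so that errors do not have heavy tails. Append $U_{dec}$, which flags whether the estimate lies in $I^{(s)}_j$. Apply $U_j=U_{dec}U_{EE}$ to the first half of the maximally entangled state $2^{-n/2}\sum_x\ket{x}\ket{x}=2^{-n/2}\sum_p\ket{\psi_p}\ket{\psi_p^\ast}$. The probability of the flag being $1$ is
\[
p_j=\frac{1}{2^n}\sum_p q_{p,j},
\]
where $q_{p,j}$ is the probability that eigenvector $p$ is flagged in bin $j$. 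For an eigenvalue at distance more than $\eta$ from the bin boundaries, $q_{p,j}=\indicator(E_p\in I_j)\pm\negl(n)$.

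The key difficulty is that the relative error of $\tilde Z$ is measured against $Z$, which may be exponentially small in $2^n$. I therefore want an estimator whose error is relative to the counts $M_j$, not additive of size $2^n$. Run amplitude estimation with $O^\ast(2^{n/2})$ queries, which gives an error in $p_j$ of about $\sqrt{p_j}/(\text{poly}\cdot 2^{n/2})+\text{poly}^{-2}/2^n$. Multiplying by $2^n$, the error in $M_j$ is $\sqrt{M_j}/\poly+1/\poly$. When $M_j\ge1$ this is a $1/\poly$ relative error, since $M_j$ is an integer. One can also round to the nearest integer to detect $M_j=0$ exactly. So each count is obtained to relative error $\delta/C$ with $\poly(n)$ overhead. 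Taking medians boosts the success probability to $1-\negl(n)$, and a union bound covers all $KL$ estimates.

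\textbf{Boundary averaging (main obstacle).} Eigenvalues within $\eta$ of a boundary of $\mathcal{P}_s$ may be split or double-counted between adjacent bins, and their Boltzmann weight could dominate $Z$. Because $\eta\ll w/L$, the boundary neighborhoods of different shifts are disjoint, so each eigenvalue is "bad" for at most one shift. Hence
\[
\sum_s\sum_{p\text{ bad for }s}e^{-\beta E_p}\le Z,
\]
and some shift $s^\ast$ has bad weight at most $Z/L$. A bad eigenvalue misassigned to an adjacent bin changes its weight by a factor $e^{\pm\beta w}$, which is at most $1+O(\delta/C)$. So even bad eigenvalues contribute only a small relative error, and the averaging mainly guards against mass leaking into the $\negl$ tails. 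To avoid having to identify $s^\ast$, I would output the median of $\tilde Z^{(s)}$ over all shifts.

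The step I expect to be hardest is rigorously bounding the phase-estimation tail leakage. An eigenvalue far from bin $j$ still contributes $\negl(n)\cdot2^n$ total spurious count, which could swamp a tiny $Z$. I would handle this by choosing failure probability $2^{-2n}$, which costs only $O(n)$ repetitions in the median amplification. Spurious mass is then at most $2^{-n}$, and I need to compare this with $Z\ge e^{-\beta}\ge e^{-\poly(n)}$. If $\beta\gg n$, I would need failure probability $e^{-\beta}2^{-n}$, which still costs $\poly(n)$ repetitions. With that in place, the total time is $\poly(n)\cdot K\cdot L\cdot O^\ast(2^{n/2})=O^\ast(2^{n/2})$.
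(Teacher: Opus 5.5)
Your proposal is correct, but it handles boundary eigenvalues and tails differently from the paper.

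\textbf{The paper's route.} It pads each decision window by the estimation precision: $I'_{k,\ell}$ is $I_{k,\ell}$ enlarged by $1/(2L^2)$ on both sides. Every eigenvalue in $I_{k,\ell}$ is therefore always counted, and the only error is over-counting from the thin strips $I^{\pm}_{k,\ell}$. This one-sided guarantee makes each shifted estimate already at least $(1-\frac{1}{4n^c})Z$, which is why the paper can output the \emph{minimum} over shifts. The $1/L^2$-spaced shifts are indispensable there, because the padding genuinely double-counts strip eigenvalues. Tails and simulation error are handled by comparing to an ideal estimator with $\eta=0$, under which empty bins have amplitude exactly $0$. A hybrid argument over the $O(2^{n/2}\mathrm{poly}(n))$ uses then bounds the distance to the real circuit.

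\textbf{Your route.} You use the exact partition as decision windows. Each eigenvector's flagged probability then sums to $1$ up to the tail and is displaced by at most one bin, costing only a factor $e^{\pm\beta(w+\eta)}$. As you observe, this alone already gives small relative error. Shifts would in fact be unnecessary if you estimated the fractional counts directly. Any bin with non-negligible mass lies within $\eta$ of an eigenvalue, so its additive $O(1/\mathrm{poly}(n))$ count error costs only $O(Z/\mathrm{poly}(n))$ in total. On bins whose amplitude $a$ is only tail leakage, amplitude estimation returns $0$ with probability $1-O(t^2a)$.

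\textbf{Where your shifts actually matter.} In your scheme the shifts are needed because of rounding, not tails. A roughly even split can round to $0+0$ or $1+1$, so the lower bound fails on bad shifts and the minimum is no longer safe. Your median still works. Since $\sum_s(\text{bad weight of } s)\le Z$, Markov gives that more than half of the shifts have bad weight at most $2Z/L$. On those shifts, rounding leaves each bin between its good count and its good count plus the number of adjacent bad eigenvalues, up to the amplitude-estimation relative error. So your sentence that averaging ``mainly guards against mass leaking into the tails'' misattributes its role. Shifts do nothing for tails; your amplification of the phase-estimation failure probability handles those (below $e^{-\beta}2^{-n}$, or merely below $2^{-n}$ once you round). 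The shifts absorb the rounding damage to split boundary eigenvalues.

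\textbf{One step to add.} As the paper does, include the hybrid bound on the accumulated Hamiltonian-simulation error. Take a per-use error such as $2^{-n}$, which costs only $\mathrm{poly}(n)$ thanks to the $\log(1/\varepsilon)$ dependence of Hamiltonian simulation.
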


\subsubsection*{Tools that are used in the proof of QPF algorithm}

Before proving \cref{thm:main_qpf_alg}, we first list some tools that will be used in the proof.

The main idea of the algorithm is dividing the energy range $[0,1)$ into polynomial many disjoint intervals $I_0, I_1, \dots, I_L$.
For each interval $I_\ell$, we designate a value $\mathcal{E}_\ell\in I_\ell$ to represent the characteristic energy of $I_\ell$. 
Let $M_\ell$ be the number of the Hamiltonian's eigenstates whose corresponding energy is in the interval $I_\ell$.
The approximation of the quantum partition function is 
\begin{equation*}
    \sum_{\ell=0}^{L} M_\ell e^{-\beta \mathcal{E}_\ell}.
\end{equation*}
To estimate the number of eigenstates $M_\ell$, there are two key ingredients: (i) energy estimation, and (ii) quantum counting.

Given the description of a local Hamiltonian $H$ and one of $H$'s eigenstate $\ket{\psi_j}$, 
the energy estimation algorithm outputs its corresponding energy $E_j$.
The energy estimation algorithm can be implemented by phase estimation associated with Hamiltonian simulation. 

\begin{lemma}[Phase estimation \cite{NC10}]\label{lem:phase_estimation}
    Let $U$ be a $n$-qubit unitary and $\ket{u_\theta}$ be $U$'s eigenstate satisfying that $U\ket{u_\theta}=e^{i\theta}\ket{u_\theta}$.
    There exists a quantum circuit $P_{U,r}$, which takes $\ket{u_\theta}$ as the input state and executes control-$U$ operator $O(2^r)$ times with additional  $\poly(n)$ number of gates, such that the output state is 
    \begin{equation}
    \label{eq:PE_out}
        \ket{u_\theta}\otimes \sum_{x\in\{0,1\}^r}\alpha_{x}\ket{x},
    \end{equation}
    where $\alpha\in\mathbb{C}$, satisfying that 
    \begin{equation}
    \label{eq:PE_result}
        \sum_{x:|\frac{int(x)\cdot2\pi}{2^r}-\theta|\le \frac{2\pi}{2^b}} |\alpha_x|^2\ge 1- \frac{1}{2^{r-b}}
    \end{equation}
    for any $b<r-1$.
\end{lemma}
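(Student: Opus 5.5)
We prove the statement, which is the standard phase estimation lemma, with the textbook circuit. Use $r$ ancilla qubits initialized to $\ket{0^r}$ and apply a Hadamard to each. Then apply controlled-$U^{2^{j}}$ for $j=0,\dots,r-1$, controlled by the $j$th ancilla qubit and acting on the register holding $\ket{u_\theta}$. Implementing each $U^{2^j}$ as $2^j$ repetitions of controlled-$U$ uses $\sum_j 2^j=2^r-1=O(2^r)$ controlled-$U$ applications. Finish with the inverse quantum Fourier transform on the ancillas, which costs $O(r^2)$ gates; this is $\poly(n)$ when $r=O(\log n)$, and in any case it is the additional gate count the statement allows.

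\textbf{Form of the output.} Because $\ket{u_\theta}$ is an eigenvector, phase kickback leaves the system register unchanged. After the controlled powers the ancillas are in the state
\[
2^{-r/2}\sum_{y=0}^{2^r-1}e^{iy\theta}\ket{y}.
\]
Applying the inverse QFT gives
\[
\alpha_x=2^{-r}\sum_{y}e^{iy(\theta-2\pi x/2^r)},
\]
where $x$ stands for $int(x)$. The output therefore has the product form of \eqref{eq:PE_out}.

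\textbf{Concentration estimate.} Write $\delta_x=\theta-2\pi x/2^r$, taken modulo $2\pi$ in $(-\pi,\pi]$. Summing the geometric series gives
\[
|\alpha_x|=\frac{|\sin(2^{r-1}\delta_x)|}{2^r|\sin(\delta_x/2)|}\le\frac{1}{2^r|\sin(\delta_x/2)|}\le\frac{\pi}{2^r|\delta_x|},
\]
using $|\sin(z)|\ge 2|z|/\pi$ for $|z|\le\pi/2$. Let $x_0$ be the best $r$-bit approximation, chosen so that $\delta_{x_0}\in[0,2\pi/2^r)$, and index outcomes as $x=x_0+\ell$. Then $|\delta_x|\ge 2\pi(|\ell|-1)/2^r$ for $\ell\ne 0$, which gives $|\alpha_x|^2\le 1/(4(|\ell|-1)^2)$.

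\textbf{Bounding the failure probability.} Set $e=2^{r-b}-1$. An outcome $x$ falls outside the window $\bigl|2\pi\, int(x)/2^r-\theta\bigr|\le 2\pi/2^b$ only if $|\ell|>e$. The probability of this is at most
\[
2\sum_{\ell>e}\frac{1}{4(\ell-1)^2}\le\frac{1}{2(e-1)},
\]
which is the Nielsen--Chuang bound $1/(2(e-1))$. We then need $1/(2(e-1))\le 2^{-(r-b)}$, that is $2^{r-b}\ge 2e-2=2\cdot 2^{r-b}-4$. This holds only when $2^{r-b}\le 4$, so the constant needs care; this is the one real obstacle.

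\textbf{Fixing the constant.} Since $b<r-1$ forces $2^{r-b}\ge 4$, I would tighten the tail estimate. First compare the sum with an integral, $\sum_{\ell>e}(\ell-1)^{-2}\le 1/(e-1)$. Second, drop the $-1$ in $|\ell|-1$ by measuring $\delta$ from the nearer of $x_0$ and $x_0+1$, which gives $|\alpha_x|^2\le 1/(4\ell^2)$. Together these bound the tail by $1/(2e)$. Alternatively, I would adjust the window by one grid step, which the wording of \eqref{eq:PE_result} tolerates since the window is stated in units of $2\pi/2^b$. Either way the resulting failure bound is at most $2^{-(r-b)}$, which proves \eqref{eq:PE_result}.
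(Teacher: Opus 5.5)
Your construction and analysis are the textbook Nielsen--Chuang argument that the paper simply cites. Everything up to the tail bound $\Pr[|\ell|>e]\le \frac{1}{2(e-1)}$ with $e=2^{r-b}-1$ is correct.

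The mistake is in the final comparison, where you reversed the inequality. The condition $\frac{1}{2(e-1)}\le 2^{-(r-b)}$ is equivalent to $2^{r-b}\le 2(e-1)=2^{r-b+1}-4$, i.e.\ to $2^{r-b}\ge 4$. That is exactly the hypothesis $b<r-1$ (for integer $b$, with equality at $r-b=2$). So there is no obstacle: the Nielsen--Chuang bound already yields \eqref{eq:PE_result}, and the condition $b<r-1$ in the statement is there precisely to make it do so.

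Drop the paragraph ``Fixing the constant''; as written it would not be valid anyway.
\begin{itemize}
\item The integral comparison you list first is the same one that already produced $\frac{1}{2(e-1)}$, so it tightens nothing.
\item Re-centring at the nearer grid point only gives $|\delta_x|\ge 2\pi(|\ell|-1/2)/2^r$, hence $|\alpha_x|^2\le 1/(2|\ell|-1)^2$, not $1/(4\ell^2)$. Concretely, if $\theta$ lies nearly halfway between two grid points, the outcome at $|\ell|=1$ has probability about $4/\pi^2>1/4$.
\item Enlarging the window by a grid step proves a different inequality, not \eqref{eq:PE_result}.
\end{itemize}

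One further point: your identification of ``outside the window'' with $|\ell|>e$ uses the distance modulo $2\pi$. You should say explicitly that the window is read that way. With a plain absolute value the claim is false, e.g.\ for $\theta$ just below $2\pi$, where almost all the weight is on $x=0$.
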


When we choose $b=r-2$, we have that the estimation $\widetilde{\theta}:=\frac{int(x)\cdot 2\pi}{2^r}$ is in the interval $\theta\pm \frac{2\pi}{2^b}$ with probability greater than $\frac{3}{4}$.
We can amplify the probability to $1-\negl(n)$ by repeating the procedure $\poly(n)$ times and using median of means technique \cite{AA17, Bravyi+19}.

\begin{lemma}[Confidence amplification of phase estimation]\label{lem:amplify_pe}
    Following \cref{lem:phase_estimation}, there exists a quantum circuit $A_{U,r,m}$, which executes $P_{U,r}$ $m=\poly(n)$ times, with additional $\poly(n)$ gates, such that the output state is
    \begin{equation}
    \label{eq:APE_output}
        \ket{u_\theta}\otimes \sum_{x\in\{0,1\}^r}\alpha'_{x}\ket{x}\ket{g_x},
    \end{equation}
    where $\alpha'\in\mathbb{C}$, and $\ket{g_x}$ is some state depending on $x$, satisfying
    \begin{equation}
    \label{eq:APE_result}
        \sum_{x:|\frac{int(x)\cdot2\pi}{2^r}-\theta|\le \frac{2\pi}{2^{(r-2)}}} |\alpha'_x|^2\ge 1- e^{-\Theta(m)}.
    \end{equation}
\end{lemma}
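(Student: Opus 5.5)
The plan is the standard median trick applied coherently. Run $m$ independent copies of $P_{U,r}$ from \cref{lem:phase_estimation}. Each copy uses its own fresh $r$-qubit output register, and all copies act on the same eigenstate register holding $\ket{u_\theta}$. Because $\ket{u_\theta}$ is an eigenstate of $U$, every controlled-$U$ application only kicks back a phase onto the control registers. Hence after all $m$ runs the eigenstate register is still $\ket{u_\theta}$, in tensor product with the rest, and the joint state of the output registers is
\[
\ket{u_\theta}\otimes\sum_{x_1,\dots,x_m}\alpha_{x_1}\cdots\alpha_{x_m}\ket{x_1}\cdots\ket{x_m}.
\]
This is a product of $m$ identical states, and the amplitudes $\alpha_x$ are exactly those of \cref{lem:phase_estimation}. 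This step uses $m$ calls to $P_{U,r}$.

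Next, apply a reversible classical circuit that computes the median of the values $int(x_1),\dots,int(x_m)$ into a fresh $r$-qubit register $\ket{x}$. Since $r=\poly(n)$ and $m=\poly(n)$, a sorting network built from Toffoli and CNOT gates does this with $\poly(n)$ additional gates. Leave all the copies $\ket{x_1}\cdots\ket{x_m}$ and any garbage in place. Group the resulting state by the value $x$ of the median register; this writes it as $\ket{u_\theta}\otimes\sum_x \alpha'_x\ket{x}\ket{g_x}$, where $\ket{g_x}$ is the normalized superposition of all the retained registers consistent with median $x$. The weight $|\alpha'_x|^2$ is then exactly the probability that the median of $m$ i.i.d.\ samples from the distribution $|\alpha_x|^2$ equals $x$.

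It remains to bound the weight of the bad event. Call the sample $x_j$ good if $|int(x_j)2\pi/2^r-\theta|\le 2\pi/2^{r-2}$. Taking $b=r-2$ in \cref{lem:phase_estimation}, each sample is good with probability at least $3/4$. Since the good set is an interval around $\theta$, the median can only land outside it if at least half of the samples are bad. By a Chernoff (Hoeffding) bound this happens with probability at most $e^{-2m(1/4)^2}=e^{-m/8}$, which is $e^{-\Theta(m)}$ and gives \eqref{eq:APE_result}.

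The one subtle point is the wrap-around of phases modulo $2\pi$. If $\theta$ lies near $0$ or $2\pi$, the good set is not a contiguous interval of integers, and the median argument can fail. I would handle this by assuming the relevant range of $\theta$ stays away from the boundary. In the application this holds automatically: $\|H\|<1$, and one simulates $e^{iHt}$ with a small $t$, so that $\theta$ lies in a fixed subinterval. Alternatively, one can first shift by a known phase. Apart from this, the argument is routine.
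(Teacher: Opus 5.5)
Your proposal is correct and follows the paper's proof essentially step for step. You reuse the untouched eigenstate register across $m$ runs of $P_{U,r}$ with fresh output registers, compute the median coherently so that $|\alpha'_x|^2$ is exactly the law of the median of $m$ i.i.d.\ samples, and finish with a Chernoff bound; your explicit $e^{-m/8}$ and your remark that the median leaves the good range only if at least half the samples are bad make precise what the paper leaves implicit.

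The wrap-around caveat is not needed for the lemma as stated. \cref{lem:phase_estimation} is phrased with a plain absolute value, so the good set $\{x : |int(x)\cdot 2\pi/2^r-\theta|\le 2\pi/2^{r-2}\}$ is automatically a contiguous range of integers, and your median argument needs no extra assumption on $\theta$. The issue you raise only matters if that lemma is replaced by the honest modular phase-estimation guarantee, which the paper glosses over.
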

\begin{proof}
    Because after the  execution of the phase estimation $P_{U,r}$, the input state $\ket{u_\theta}$ remains unchanged and separable from $\sum_{x}\alpha_x\ket{x}$, we can reuse the input state.
    Assume we measure the register that contains $\sum_{x}\alpha_x\ket{x}$ after each execution of $P_{U,r}$. Let the outcome of each execution be $x_1$, $x_2, \dots, x_m$.
    By \cref{lem:phase_estimation}, when we choose $b=r-2$, it holds that for each $i\in[m]$ the probability that $\widetilde{\theta_i}:=\frac{int(x_i)\cdot 2\pi}{2^r}$ is in the interval $\theta\pm\frac{2\pi}{2^{r-2}}$ is greater than $\frac{3}{4}$.

    Because the register that stores  $\sum_{x}\alpha_x\ket{x}$ for each execution are separable, we have that the measurement outcomes of all the execution are independent.  
    Then, by Chernoff bound, the probability that there are more than half of $\widetilde{\theta_1}$, $\widetilde{\theta_2},\dots, \widetilde{\theta_m}$ outside the interval $\theta\pm \frac{2\pi}{2^{r-2}}$ is less than $e^{-\Theta(m)}$.
    Let $\widetilde{x}$ be the the median of $int(x_1)$, $int(x_2),\dots,int(x_m)$ and let $\widetilde{\theta}:=\frac{\widetilde{x}\cdot 2\pi}{2^r}$. 
    We have the event that $\widetilde{\theta}$ is outside the interval $\theta\pm \frac{2\pi}{2^{r-2}}$ has probability less than $e^{-\Theta(m)}$.

    Now we describe the quantum circuit $A_{U,r,m}$.
    The quantum circuit $A_{U,r,m}$ has $m$ ancilla registers $\rgst{X}_1$, $\rgst{X}_2, \dots, \rgst{X}_m$ each of them is $r$-qubit, and an input register $\rgst{U}$ takes the input state $\ket{u_\theta}$.
    The quantum circuit $A_{U,r,m}$ executes $P_{U,r}$ $m$ times sequentially on the input $\ket{u_\theta}$ and in the $i$th execution of $P_{U,r}$, it stores $\sum_x\alpha_x\ket{x}$ in the register $\rgst{X}_i$.
    After $m$ times execution, a circuit that computes the median of $\rgst{X}_1$, $\rgst{X}_2, \dots, \rgst{X}_m$ is applied. Since $\rgst{X}_1$, $\rgst{X}_2, \dots, \rgst{X}_m$ are separable and computing the median is a classical circuit, the distribution of $\widetilde{x}$ is the same regardless of the measurement taking place before or after the computing of median.
    As a result, Equation \cref{eq:APE_result} holds.
\end{proof}

To implement the energy estimation of the local Hamiltonian $H$, we run phase estimation for $U=e^{-iH}$, which can be implemented by Hamiltonian simulation.  
\begin{lemma}[Hamiltonian simulation \cite{LC17}]\label{lem:Hsim}
    For all sufficiently large $n\in\mathbb{N}$, for all constant-local Hamiltonian $H\in\mathbb{C}^{2^c\times 2^n}$,  evolution time $t\in \mathbb{R}^{+}$, and tolerated error $\varepsilon\in(0,1)$, there exists a quantum circuit $U_{\operatorname{sim}}(H,t,\varepsilon)$ running in $\poly(n,t,\log({1}/{\varepsilon}))$ such that for any quantum state $\ket{\psi}\in\mathbb{C}^{2^n}$, it holds that
    \begin{equation}\label{eq:sim}
        \|(U_{\operatorname{sim}}(H,t,\varepsilon)-e^{-iHt}\otimes I)\ket{\psi}\otimes\ket{0^{a_{\operatorname{sim}}}}\|_2\le\varepsilon,
    \end{equation}
    where $a_{\operatorname{sim}}$ is the number of ancilla qubits of $U_{\operatorname{sim}}(H,t,\varepsilon)$.
\end{lemma}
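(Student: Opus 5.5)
The plan is to realize $e^{-iHt}$ via a block encoding of $H$ followed by qubitization and quantum signal processing, which is exactly the route of \cite{LC17}. A Trotter--Suzuki product formula would not suffice, since its cost is polynomial in $1/\varepsilon$ rather than in $\log(1/\varepsilon)$.

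\textbf{Step 1: Pauli decomposition.} Write $H=\sum_{i=1}^{m}H_i$ with $m=\poly(n)$, each $H_i$ acting on at most $k=O(1)$ qubits with $\|H_i\|\le\poly(n)$. Expand each $H_i$ in the Pauli basis on its support. This gives at most $4^k$ Pauli strings per term, so
\[
H=\sum_{j=1}^{J}\alpha_j P_j,\qquad \alpha_j\ge 0 .
\]
Signs and phases are absorbed into the $P_j$, and $J\le 4^k m=\poly(n)$. Each coefficient satisfies $|\alpha_j|\le\|H_i\|$ for the term it comes from, so $\lambda:=\sum_j\alpha_j\le 4^k\sum_i\|H_i\|=\poly(n)$. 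All $\alpha_j$ are computable classically in $\poly(n)$ time.

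\textbf{Step 2: block encoding via LCU.} Use $a=\lceil\log J\rceil=O(\log n)$ ancilla qubits. Let $\mathrm{PREP}$ map $\ket{0^a}$ to $\sum_j\sqrt{\alpha_j/\lambda}\ket{j}$. Let $\mathrm{SEL}=\sum_j\ket{j}\!\bra{j}\otimes P_j$. Then $(\bra{0^a}\otimes I)\,\mathrm{PREP}^\dagger\,\mathrm{SEL}\,\mathrm{PREP}\,(\ket{0^a}\otimes I)=H/\lambda$, so this is a $(\lambda,a,0)$ block encoding of $H$. Both $\mathrm{PREP}$ and $\mathrm{SEL}$ can be compiled with $\poly(n)$ elementary gates. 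Any synthesis error in $\mathrm{PREP}$ can be made $\varepsilon/\poly$ at $\mathrm{polylog}(1/\varepsilon)$ cost by Solovay--Kitaev, and this is absorbed into the final budget.

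\textbf{Step 3: qubitization and QSP.} Form the qubitized walk operator from the block encoding. Apply quantum signal processing (equivalently, QSVT) with polynomial approximations to $\cos(\lambda t x)$ and $\sin(\lambda t x)$ on $[-1,1]$, obtained from the Jacobi--Anger expansion truncated at degree $q=O(\lambda t+\log(1/\varepsilon))$. This yields a block encoding of $e^{-iHt}$ using $O(q)$ applications of the walk operator and one extra ancilla. The total gate count is $\poly(n)\cdot O(\lambda t+\log(1/\varepsilon))=\poly(n,t,\log(1/\varepsilon))$.

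\textbf{Main obstacle: the ancilla requirement.} The lemma asks that $U_{\mathrm{sim}}$ act on $\ket{\psi}\ket{0^{a_{\mathrm{sim}}}}$ and produce a state $\varepsilon$-close to $e^{-iHt}\ket{\psi}\otimes\ket{0^{a_{\mathrm{sim}}}}$, with no postselection. A block encoding by itself succeeds only with some amplitude, so this needs an additional argument.
\begin{itemize}
\item Since $e^{-iHt}$ is unitary, the combined $\cos/\sin$ block encoding has subnormalization $1/2$ (up to $O(\varepsilon)$).
\item One round of robust oblivious amplitude amplification (three uses of the block encoding) boosts it to an $O(\varepsilon)$-approximate unitary block with subnormalization $1$.
\item An $O(\varepsilon)$-approximate block encoding with unit subnormalization of a unitary $V$ automatically returns the ancillas to $\ket{0}$ up to error $O(\sqrt{\varepsilon})$ on every input. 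This follows from $\|(\bra{0}\otimes I)W(\ket{0}\otimes I)-V\|\le\varepsilon$ together with unitarity of $W$.
\end{itemize}
Running the whole construction with accuracy $\varepsilon^2/C$ for a suitable constant $C$ then gives the bound \eqref{eq:sim}. The cost stays $\poly(n,t,\log(1/\varepsilon))$, because changing the target accuracy only changes $\log(1/\varepsilon)$ by a constant factor.
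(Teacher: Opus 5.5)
Your proposal is correct and follows the qubitization and quantum-signal-processing route of \cite{LC17}; the paper gives no proof of this lemma beyond that citation. Your extra step supplies a detail the citation leaves implicit: robust oblivious amplitude amplification, the $O(\sqrt{\varepsilon})$ unitarity argument, and running the construction at accuracy $\varepsilon^2/C$. Together these ensure the ancillas return to $\ket{0^{a_{\operatorname{sim}}}}$, as the non-postselected form of the bound requires.
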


Now we formally define the energy estimation.
\begin{definition}[Energy estimation]\label{def:energy_estimation}
    Let $H\in\mathbb{C}^{2^n\times 2^n}$ be a constant-local Hamiltonian satisfying that all the eigenvalues of $H$ are in the interval $[0,1)$. 
    We first define an ideal unitary $\widehat{U}_{\operatorname{EE}}(H, \delta E, \eta)$ that estimates the corresponding energy of $H$'s eigenstate $\ket{\psi_j}$.
    We say $\widehat{U}_{\operatorname{EE}}(H, \delta E, \eta)$ is an ideal energy estimation of the Hamiltonian $H$ if 
    for any $H$'s eigenstate $\ket{\psi_j}$, it holds that
    \begin{equation}\label{eq:energy_estimation}
        \widehat{U}_{\operatorname{EE}}(H,\delta E, \eta)\ket{\psi_j}\ket{0^{a_{\operatorname{EE}}}} 
        = \ket{\psi_j}\otimes\sum_{E'} \alpha_{E'}\ket{E'}\ket{g_{E'}},
    \end{equation}
    where $a_{\operatorname{EE}}$ is the number of ancilla qubits, $\alpha_{E'}\in\mathbb{C}$, and $\ket{g_{E'}}$ is some state depending on $E'$, satisfying
    \begin{equation}
        \label{eq:ee_require}
        \sum_{E'\in[E_j-\delta E, E_j+\delta E]}|\alpha_{E'}|^2 \ge 1-\eta,
    \end{equation}
    where $E_j$ is the corresponding energy of $\ket{\psi_j}$.

    Then, we say a quantum circuit $U_{\operatorname{EE}}(H, \delta E, \eta, \varepsilon)$ is an implementation of energy estimation of $H$ if
    \begin{equation}
        \label{eq:EE_implementation}
        \|U_{\operatorname{EE}}(H, \delta E, \eta, \varepsilon) - \widehat{U}_{\operatorname{EE}}(H,\delta E, \eta)\|\le \varepsilon.
    \end{equation}
\end{definition}

\begin{lemma}[Efficient energy estimation algorithm]\label{lem:energy_estimation}
    For all sufficiently large $n\in\mathbb{N}$, 
    For all constant-local, semi-definite Hamiltonian $H\in\mathbb{C}^{2^n\times 2^n}$ satisfying $\|H\|<1$, and all tolerated error $\delta E = 1/\poly(n)$,
    there exists a energy estimation implementation $U_{\operatorname{EE}}(H,\delta E,\eta=e^{-n}, \varepsilon=2^{-n})$ that runs in $\poly(n)$ time.
\end{lemma}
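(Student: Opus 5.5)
We build $U_{\operatorname{EE}}(H,\delta E,\eta=e^{-n},\varepsilon=2^{-n})$ from three earlier ingredients: \cref{lem:Hsim} to implement $e^{-iH}$, \cref{lem:phase_estimation} to read out its eigenphase, and \cref{lem:amplify_pe} to boost the success probability. The ideal unitary $\widehat{U}_{\operatorname{EE}}$ will be the amplified phase-estimation circuit $A_{U,r,m}$ with the exact $U=e^{-iH}$. Since $H$ is positive semidefinite with $\|H\|<1$, every eigenvalue $E_j$ lies in $[0,1)$. The eigenphase of $\ket{\psi_j}$ under $U$ is therefore $\theta_j=-E_j \bmod 2\pi$, which lies in $(2\pi-1,2\pi]$ (or equals $0$). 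This range is far from the wrap-around point, so reading off $E_j=2\pi-\widetilde\theta$ (or $0$) is unambiguous. Alternatively, one can use $U=e^{iH}$ so that $\theta_j=E_j\in[0,1)\subset[0,2\pi)$ directly.

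\textbf{Parameter choice.} We choose $r$ with $2\pi/2^{r-2}\le\delta E$, i.e.\ $r=\lceil\log_2(2\pi/\delta E)\rceil+2=O(\log n)$. Then $2^r=\poly(n)$ controlled applications of $U$ are needed per run of $P_{U,r}$. We take $m=\Theta(n)$ repetitions in \cref{lem:amplify_pe}, so that the failure weight is $e^{-\Theta(m)}\le e^{-n}=\eta$. A final reversible classical circuit converts the median phase $\widetilde x$ into the energy label $E'$. Because $\ket{\psi_j}$ is untouched and factors out, the output has exactly the form \cref{eq:energy_estimation} with the weight condition \cref{eq:ee_require}. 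Linearity extends the map to all inputs, and this defines $\widehat{U}_{\operatorname{EE}}$.

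\textbf{Main obstacle: controlling the simulation error.} The ideal circuit makes $K=m\cdot O(2^r)=\poly(n)$ calls to controlled-$e^{-iH}$. We replace each call by controlled-$U_{\operatorname{sim}}(H,1,\varepsilon')$ from \cref{lem:Hsim}, with its ancillas reset to $\ket{0^{a_{\operatorname{sim}}}}$.

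There is one subtlety. \cref{lem:Hsim} bounds the error only on inputs with the ancilla register in $\ket{0}$, whereas \cref{eq:EE_implementation} is an operator-norm bound. This can be handled in either of two ways.
\begin{itemize}
\item Assume the simulation is a block encoding whose ancilla returns to $\ket{0}$ up to error $\varepsilon'$. Then, on the relevant subspace, each call deviates in norm by $O(\varepsilon')$. We interpret the operator norm in \cref{eq:EE_implementation} as restricted to inputs with the ancillas in $\ket{0^{a_{\operatorname{EE}}}}$, which is the only way the circuit is ever used.
\item Alternatively, define $\widehat{U}_{\operatorname{EE}}$ to include the exact $e^{-iH}\otimes I$ acting on the simulation ancillas.
\end{itemize}

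Either way, a hybrid (telescoping) argument gives total deviation at most $K\cdot O(\varepsilon')$. Choosing $\varepsilon'=2^{-n}/(cK)$ makes this at most $2^{-n}=\varepsilon$. Since $\log(1/\varepsilon')=O(n)$, each simulation call costs $\poly(n)$ time.

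\textbf{Running time.} The total cost is $K\cdot\poly(n)$, plus the $\poly(n)$ gates for the quantum Fourier transforms and the median computation, which is $\poly(n)$ overall.
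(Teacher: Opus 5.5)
Your proposal is correct and follows the paper's proof essentially step for step: amplified phase estimation (\cref{lem:amplify_pe}) with $r\approx\log(2\pi/\delta E)+2$ and $m=\Theta(n)$, then each controlled-$U$ replaced by a controlled $U_{\operatorname{sim}}$ with error $\varepsilon/(m2^r)$ and a triangle-inequality (hybrid) bound. Your treatment of the eigenphase sign, and of the fact that \cref{lem:Hsim} only controls inputs whose ancillas are in $\ket{0}$, is more careful than the paper, which glosses over both. One small correction: $(2\pi-1,2\pi]$ is not far from the wrap-around point, since $2\pi\equiv 0$, and $E_j\approx 0$ sits at that point for $e^{iH}$ as well. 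So if you do not want to lean on \cref{lem:amplify_pe} as a black box, the median step is cleanest with a shifted unitary such as controlled-$e^{i(H+I)}$, whose eigenphases lie in $[1,2)$, reading off $E'=\widetilde\theta-1$.
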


\begin{proof}
    It holds that $e^{-iH}\ket{\psi_j}=e^{-iE_j}\ket{\psi_j}$.
    We can use phase estimation to obtain $E_j$.
    We first consider the ideal unitary $\widehat{U}_{\operatorname{EE}}(H, \delta E, \eta)$.
    By \cref{lem:amplify_pe}, we run $A_{U,r,m}$ with  $U=e^{-iH}$, $r:=\log(\frac{2\pi}{\delta E})+2$, and $m=Cn$ where $C$ is a large enough constant, on the input $\ket{\psi_j}$. 
    Let $E':={int(x)\cdot2\pi}/{2^r}$,  we have that Equation \cref{eq:energy_estimation} and \cref{eq:ee_require} holds with $\eta=e^{-n}$.

    Then, we consider the efficient implementation $U_{\operatorname{EE}}(H,\delta E, \eta=e^{-n} ,\varepsilon=2^{-n})$.
    We replace $U$ in $A_{U,r,m}$  with the implementation of Hamiltonian simulation $U_{\operatorname{sim}}(H, t=1, \varepsilon/(m\cdot 2^r))$ described in \cref{lem:Hsim}. 
    (To be more precise, we need to implement control-$U$ in $A_{U,r,m}$. Hence, we need to replace the gates in $U_{\operatorname{sim}}(H, t=1, \varepsilon/(m\cdot2^r))$ with their controlled version.)
    There are $m\cdot 2^{r}$ many of $U$ in $A_{U,r,m}$.
    By triangular inequality, Equation \cref{eq:EE_implementation} holds.

    For the running time, by \cref{lem:Hsim}, the implementation of $U_{\operatorname{sim}}(H,t=1,\varepsilon/(m\cdot 2^r))$ takes $\poly(n, 1, \log(/m\cdot 2^r\varepsilon)) = \poly(n, 1 ,(\log(n)+\delta E+\log(1/\varepsilon)))$ time.
    We have $\delta E = 1/\poly(n)$ and $\varepsilon=2^{-n}$, the running time of each implementation of $U$ is $\poly(n)$.
    There are $m\cdot 2^{r}=\poly(n)$ many of $U$ in $A_{U,r,m}$.
    As a result the overall running time of $U_{\operatorname{EE}}(H,\delta E, \eta=e^{-n} ,\varepsilon=2^{-n})$ is $\poly(n)$.
\end{proof}

We use quantum counting to estimate the number of eigenstates in a given interval.
\begin{lemma}[Quantum counting]\label{lem:amplitude_estimation}
    Let a quantum circuit $U$ taking a $n$-qubit input state $\ket{\psi}$ satisfy $U\ket{\psi}\ket{0^a}
    =\sqrt{\frac{M}{N}}\ket{1}\ket{\xi^1} 
    + \sqrt{\frac{N-M}{N}}\ket{0}\ket{\xi^0}$,
    where $N=2^n$, $M\le N$\footnote{Actually it requires that $M<N/2$. We can overcome this issue by adding an additional qubit.}, and $a$ is the number of ancilla qubits.
    For any $c\in\mathbb{N}$, there is a quantum circuit that executes $U$ $O(\poly(n)\cdot 2^{\frac{n}{2}})$ times with additional $O(\poly(n)\cdot 2^{\frac{n}{2}})$ many gates and outputs $\widetilde{M}$ such that $(1-\Theta(\frac{1}{n^c}))M\le\widetilde{M}\le(1+\Theta(\frac{1}{n^c}))M$ with probability $1-\negl(n)$.
\end{lemma}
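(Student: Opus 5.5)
The plan is to reduce \cref{lem:amplitude_estimation} to standard amplitude estimation (Brassard--H{\o}yer--Mosca--Tapp), followed by median-of-means amplification. Write $p:=M/N=\sin^2\theta_p$ with $\theta_p\in[0,\pi/2]$. Let $\mathcal{A}:=U(\cdot)$ be the preparation circuit acting on $\ket{\psi}\ket{0^a}$, with good subspace flagged by the first output qubit being $\ket{1}$. We form the Grover iterate
\[
Q=-\mathcal{A}S_0\mathcal{A}^{\dagger}S_{\chi},
\]
where $S_{\chi}$ is a $Z$ on the flag qubit and $S_0$ reflects about the initial state. Because $\ket{\psi}$ may not be a computational basis state (in the application it is half of an EPR state), I would assume, as the paper does, that $\ket{\psi}$ itself is prepared by a $\poly(n)$-size circuit $V$ from $\ket{0^n}$. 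Then $S_0=(V\otimes I)(2\ket{0}\bra{0}-I)(V\otimes I)^{\dagger}$ costs $\poly(n)$ gates, and each use of $Q$ costs two calls to $U$ plus $\poly(n)$ gates. On the two-dimensional invariant subspace, $Q$ has eigenvalues $e^{\pm 2i\theta_p}$, and the initial state is an equal-weight superposition of the two eigenvectors. So we run phase estimation (\cref{lem:phase_estimation}) on $Q$ with $R$ applications and output $\widetilde{M}=N\sin^2\widetilde{\theta}$. This uses $O(R)$ calls to $U$.

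For the error analysis, the standard amplitude estimation bound gives
\[
|\widetilde{p}-p|\le \frac{2\pi\sqrt{p(1-p)}}{R}+\frac{\pi^2}{R^2}
\]
with probability at least $8/\pi^2$. Taking $p\ge 1/N$ (the case $M=0$ is handled below), we want an additive error at most $p/n^c$, which gives a relative error of $1/n^c$. The first term is at most $2\pi\sqrt{p}/R$, which is at most $p/(2n^c)$ once $R\ge 4\pi n^c/\sqrt{p}$. Since $\sqrt{p}\ge 2^{-n/2}$, this holds for $R=\Theta(n^c 2^{n/2})$. The second term, $\pi^2/R^2=O(2^{-n}/n^{2c})$, is at most $p/(2n^c)$ for the same reason. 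The footnoted requirement $M<N/2$ ensures that $\theta_p$ stays away from $\pi/2$, so that $\sin^2$ is invertible and well conditioned there. It is handled by adjoining one fresh qubit in $\ket{0}$ to the flag, which halves $p$ and doubles $N$; this changes $R$ only by a constant factor. If $M=0$, the estimate is exactly $0$ with high probability, since $\theta_p=0$ is an exact phase.

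To boost the success probability from $8/\pi^2$ to $1-\negl(n)$, I would repeat the estimate $\poly(n)$ times (say $n$ times) and take the median, exactly as in \cref{lem:amplify_pe}, using the Chernoff bound. The total cost is $O(n\cdot n^c 2^{n/2})=O(\poly(n)2^{n/2})$ calls to $U$, plus the same order of extra gates. The extra gates come from the reflections, from the controlled versions of $Q$ in phase estimation, and from the inverse QFT on $O(n)$ qubits.

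The main obstacle is making the reflection $S_0$ well defined. The lemma is stated for an input state $\ket{\psi}$, but amplitude estimation needs the ability to reflect about $\ket{\psi}\ket{0^a}$. I would address this by making the efficient preparability of $\ket{\psi}$ an explicit hypothesis, which holds for the EPR input, or by folding $V$ into $U$. The rest is the routine bookkeeping above.
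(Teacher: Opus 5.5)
Your proposal is correct and takes essentially the same route as the paper, which just invokes the Nielsen--Chuang quantum-counting analysis (phase estimation on the Grover iterate) with the precision raised from $n/2+3$ to $n/2+\Theta(c\log n)$ bits, i.e.\ $2^{n/2}n^{\Theta(c)}$ Grover iterations, matching your $R=\Theta(n^c2^{n/2})$. Your write-up is more complete than the paper's one-line argument: it spells out the median amplification needed for failure probability $\negl(n)$, the efficient preparability of $\ket{\psi}$ required for the reflection $S_0$, and the use of integrality of $M$ to get $p\ge 1/N$ whenever $M>0$ (the $M<N/2$ padding is unnecessary under the BHMT bound, which holds for all $p\in[0,1]$).
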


In \cite{NC10}, there has already been a proof for the amplitude estimation algorithm that outputs $\widetilde{M}$ satisfying $(1-\frac{1}{\sqrt{M}})M\le\widetilde{M}\le (1+\frac{1}{\sqrt{M}})M$.
To achieve the relative error as small as $\Theta(1/n^c)$, we increase the number of Grover iteration in \cite{NC10} from $n/2+3$ to $n/2+\Theta(c\log n)$.

To  use quantum counting to estimate the number of eigenstates in a given interval, we need to prepare the uniform superposition of all the eigenstates as the input state.
However, in general, the eigenstates are unknown.
We use the property that EPR state is identical to the uniform superposition of the states and its complex conjugate in any orthogonal basis to overcome this issue.
\begin{lemma}\label{lem:choi}
    For any complete orthogonal basis of $n$-qubit system $\{\ket{\psi_i}\}_{j=1}^{N}$,i.e., $\sum_{j=1}^{N}\proj{\psi_j}=I$ and $\inner{\psi_j}{\psi_{j'}}=\delta_{j,j'}$ where $N=2^n$, it holds that
    \begin{equation}
        \sum_{j=1}^{N}\sqrt{\frac{1}{N}}\ket{\psi_j}\ket{\psi_j^{\ast}} = \sum_{k=1}^{N}\sqrt{\frac{1}{N}}\ket{k}\ket{k}.
    \end{equation}
    where $\ket{\psi_j^\ast}$ is the complex conjugate of $\ket{\psi_j}$.
\end{lemma}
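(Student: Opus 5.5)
The plan is to expand both sides in the computational basis, using the unitary change of basis that relates $\{\ket{\psi_j}\}$ to $\{\ket{k}\}$. For each $j$ write $\ket{\psi_j}=\sum_{k=1}^{N} u_{kj}\ket{k}$, where $u_{kj}:=\inner{k}{\psi_j}$. Since $\ket{\psi_j^\ast}$ denotes the complex conjugate taken with respect to the computational basis, we have $\ket{\psi_j^\ast}=\sum_{k'} \overline{u_{k'j}}\ket{k'}$. The matrix $V=(u_{kj})_{k,j}$ has the vectors $\ket{\psi_j}$ as its columns. Orthonormality $\inner{\psi_j}{\psi_{j'}}=\delta_{j,j'}$ says $V^\dagger V=I$, and since $V$ is square ($N\times N$) this also gives $VV^\dagger=I$. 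Alternatively, $VV^\dagger=I$ is exactly the completeness relation $\sum_j\proj{\psi_j}=I$ read entrywise.

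Next, substitute these expansions into the left-hand side and exchange the order of summation:
\[
\sum_{j=1}^{N}\frac{1}{\sqrt N}\ket{\psi_j}\ket{\psi_j^\ast}=\frac{1}{\sqrt N}\sum_{k,k'}\Bigl(\sum_{j} u_{kj}\overline{u_{k'j}}\Bigr)\ket{k}\ket{k'}.
\]
The inner coefficient $\sum_j u_{kj}\overline{u_{k'j}}$ is the $(k,k')$ entry of $VV^\dagger$, which equals $\bra{k}\bigl(\sum_j\proj{\psi_j}\bigr)\ket{k'}=\delta_{k,k'}$ by completeness. Only the diagonal terms survive, leaving $\frac{1}{\sqrt N}\sum_k\ket{k}\ket{k}$, which is the right-hand side.

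There is no real obstacle here. The only point that needs care is the convention for complex conjugation: $\ket{\psi_j^\ast}$ must mean entrywise conjugation in the same basis $\{\ket{k}\}$ that appears on the right-hand side, since otherwise the identity fails. We will state this convention explicitly. We will also note that completeness is what is used in the final step, so the hypotheses given in the statement are exactly what the argument requires.
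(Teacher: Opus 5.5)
Your proposal is correct and is essentially the paper's proof. Both write $\ket{\psi_j}=V\ket{j}$ for the unitary $V$ whose columns are the $\ket{\psi_j}$, expand in the computational basis, and collapse the double sum using $\sum_j V_{k,j}\overline{V_{k',j}}=\delta_{k,k'}$, i.e., $VV^\dagger=I$. Your explicit remark that $\ket{\psi_j^\ast}$ means conjugation in the same computational basis is a useful clarification that the paper leaves implicit.
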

\begin{proof}
    Let $V=\sum_{j,k=1}^{N}V_{k,j}\ketbra{k}{j}$ be a unitary such that $\ket{\psi_j}= V\ket{j}$ for all $j\in[N]$.
    We have 
    \begin{align}
        \sum_{j=1}^{N}\ket{\psi_j}\otimes\ket{\psi_j^{\ast}} 
        & = \sum_{j=1}^{N}\bigg(\Big(\sum_{k=1}^{N}V_{k,j}\ket{k}\Big)\otimes\Big(\sum_{k'=1}^{N}V_{k',j}^{\ast}\ket{k'}\Big)\bigg)\nonumber\\
        & =  \sum_{j=1}^{N}\sum_{k=1}^{N}\sum_{k'=1}^{N}V_{k,j}V^{\ast}_{k',j}\ket{k}\otimes\ket{k'}.\label{eq:Choi_1}
    \end{align}
    Because $V$ is a unitary, we have $\sum_{j=1}^{N}V_{k,j}V^{\ast}_{k',j}=\delta_{k,k'}$.
    Equation~(\ref{eq:Choi_1}) becomes $\sum_{j=1}^{N}\ket{k}\otimes\ket{k}$.
\end{proof}

\subsubsection*{Proof of \cref{thm:main_qpf_alg}}
\renewcommand{\labelenumii}{\arabic{enumi}.\arabic{enumii}}
\renewcommand{\labelenumiii}{\arabic{enumi}.\arabic{enumii}.\arabic{enumiii}}
\newcommand{\dcoun}{\delta_{\operatorname{C}}}

\begin{proof}[Proof of \cref{thm:main_qpf_alg}]
    Let $N:=2^n$.

    For any interval $I$ and $E\in[0,1)$, we define the operator $U_{\operatorname{dec}}(I)\ket{E}\ket{0^a_{\operatorname{dec}}}=\ket{\indicator_{I}(E)}\ket{\ast}$, where $a_{\operatorname{dec}}$ is the number of ancilla qubits and $\ket{\ast}$ is some state. 
    That is, $U_{\operatorname{dec}}(I)\ket{E}\ket{0^a_{\operatorname{dec}}}=\ket{1}\ket{\xi_{E,1}}$ if $E\in I$ and $U_{\operatorname{dec}}(I)\ket{E}\ket{0^a_{\operatorname{dec}}}=\ket{0}\ket{\xi_{E,0}}$ if $E\notin I$, where $\ket{\xi_{E,1}}$ and $\ket{\xi_{E,0}}$ are some states.

    We write down the algorithm.
    \begin{enumerate}
        \item Let $L=4n^c\beta$. 
        \item For $k\in \{0,1,\dots,L-1\}$:
        \begin{enumerate}
            \item For $\ell\in\{0,1,\dots,L\}$(If $k=0$, then $\ell\in\{1,2,\dots,L\}$):
            \begin{enumerate}
                \item Define the characteristic energy $\mathcal{E}_{k,\ell}:=\frac{\ell-1}{L}+\frac{k}{L^2}$, the interval $I_{k,\ell}:=[\mathcal{E}_{k,l},\mathcal{E}_{k,\ell}+\frac{1}{L})$, and $I'_{k,\ell}:=[\mathcal{E}_{k,\ell}-1/2L^2, \mathcal{E}_{k,\ell}+1/L+1/2L^2 )$.
                \item Define $U_{k,\ell}:=U_{\operatorname{dec}}(I'_{k,\ell})U_{\operatorname{EE}}(H,\delta E=\frac{1}{2L^2}, \eta=e^{-n}, \varepsilon=2^{-n})$, where $U_{\operatorname{EE}}$ is defined in \cref{def:energy_estimation}.
                \item Prepare the EPR state $\ket{\Psi}:=\sum_{k=1}^{N}\sqrt{\frac{1}{N}}\ket{k}\ket{k}$.
                \item Execute quantum counting (\cref{lem:amplitude_estimation}) on $U_{k,\ell}\ket{\Psi}\ket{0^{a_{U}}}$ up to a relative error $\dcoun=\frac{1}{4n^c}$, where $U_{k,\ell}$ acts on the first half of the EPR state and $a_{U}$ is the number of ancilla qubits.
                Let $\widetilde{M}_{k,\ell}$ be the output of the quantum counting.
            \end{enumerate}
            \item Let $\widetilde{Z}_{k}:=\sum_{\ell=0}^{L}\widetilde{M}_{k,\ell}e^{-\beta \mathcal{E}_{k,\ell}}$.
        \end{enumerate}
        \item Let $\widetilde{Z}:=\min\{\widetilde{Z}_k\}_{k=0}^{L-1}$.
        \item Output $\widetilde{Z}$.
    \end{enumerate}

    Now we claim the correctness of the algorithm.
    We define the following notations.
    Let $I^{-}_{k,\ell}:=[\mathcal{E}_{k,\ell}-1/L^2, \mathcal{E}_{k,\ell})$, $I^{+}_{k,\ell}:=[\mathcal{E}_{k,\ell}+1/L, \mathcal{E}_{k,\ell}+1/L+1/L^2)$, and $I''_{k,\ell}:=I^{-}_{k,\ell}\cup I_{k,\ell}\cup I^{+}_{k,\ell}$.
    Let $Z_{k,\ell}:=\sum_{j:E_j\in I_{k,\ell}} e^{-\beta E_{j}}$ be the partition function that are contributed by the states whose energies are in the interval $I_{k,\ell}$.
    Also, we define $Z^{-}_{k,\ell}:=\sum_{j:E_j\in I^{-}_{k,\ell}}e^{-\beta E_j}$ and $Z^{+}_{k,\ell}:=\sum_{j:E_j\in I^{+}_{k,\ell}}e^{-\beta E_j}$.
    Let $M_{k,\ell}$ be the number of states whose energies are in the interval $I_{k,\ell}$,
    $M^{-}_{k,\ell}$ be the number of states whose energies are in the interval $I^{-}_{k,\ell}$,
    and $M^{+}_{k,\ell}$ be the number of states whose energies are in the interval $I^{+}_{k,\ell}$.

    The quantum partition function is dominated by low energy. 
    The characteristic energy $\mathcal{E}_{k,\ell}$ is lower than all the energy in $I_{k, \ell}$. 
    Hence,  $M_{k,\ell}e^{-\mathcal{E}_{k,\ell}}$ gives a upper bound for $Z_{k, \ell}$, i.e.,
    \begin{equation}
    \label{eq:Z1}
        Z_{k,\ell} \le M_{k,\ell} e^{-\mathcal{E}_{k,\ell}}
    \end{equation}
    for any $k,\ell$.
    
    On the other hand, $\mathcal{E}_{k.\ell}$ is greater than all the energy decreased by $1/L$ in $I_{k, \ell}$.
    Hence, $M_{k,\ell}e^{-\mathcal{E}_{k,\ell}}$ gives a lower bound for $Z_{k, \ell}e^{1/L}$, i.e.,
    \begin{equation}
    \label{eq:Z2}
        M_{k,\ell} e^{-\mathcal{E}_{k,\ell}} \le Z_{k,\ell} e^{1/L}.
    \end{equation}
    for any $k,\ell$.
     
    Similarly, we have
    \begin{equation}
    \label{eq:Z3}
        M^{-}_{k,\ell} e^{-\beta\mathcal{E}_{k,\ell}} \le Z^{-}_{k,\ell},
    \end{equation}
    and
    \begin{equation}
    \label{eq:Z4}
        M^{+}_{k,\ell} e^{-\beta\mathcal{E}_{k,\ell}} \le Z^{+}_{k,\ell} e^{1/L+1/L^2}.
    \end{equation}

    First we consider the ideal case. That is, using $\widehat{U}_{\operatorname{EE}}(H, \delta E, \eta=0)$ (Equation \Cref{eq:energy_estimation}) instead of $U_{\operatorname{EE}}$ in Step 2.1.2.
    According to \cref{def:energy_estimation}, if $E_j\in I_{k,l}$, it holds that 
    \begin{equation*}
        \widehat{U}_{\operatorname{EE}}(H, \delta E, \eta=0)\ket{\psi_j}\ket{0^{a_{\operatorname{EE}}}} = \ket{\psi_j}\otimes\sum_{E'\in I'_{k,\ell}}\alpha_{E'}\ket{E'}\ket{g_{E'}};
    \end{equation*}
    and if  $E_j\notin I''_{k,\ell}$, it holds that 
    \begin{equation*}
        \widehat{U}_{\operatorname{EE}}(H, \delta E, \eta=0)\ket{\psi_j}\ket{0^{a_{\operatorname{EE}}}} = \ket{\psi_j}\otimes\sum_{E'\notin I'_{k,\ell}}\alpha_{E'}\ket{E'}\ket{g_{E'}}.
    \end{equation*}
    In other words, letting $\widehat{U}_{k,\ell}:=U_{\operatorname{dec}}(I'_{k,\ell})\widehat{U}_{\operatorname{EE}}(H, \delta E, \eta=0)$, we have
    \begin{equation}\label{eq:Ukl_result}
        \widehat{U}_{k,\ell}\ket{\psi_j}\ket{0^{a_{U}}}=\left\{
        \begin{array}{ll}
             \ket{1}\ket{\xi^1_{j}} &,\;\operatorname{if} E_j\in I_{k,\ell}, \\
             \alpha_{j}\ket{1}\ket{\xi^1_{j}}+\beta_{j}\ket{0}\ket{\xi^0_{j}}&,\; 
             \operatorname{if} E_j\in I^{-}_{k,\ell}\cup I^{+}_{k,\ell},\\
             \ket{0}\ket{\xi^0_{\ell,j}} 
             &,\;\operatorname{if} E_j\notin I''_{\ell},
        \end{array}
        \right.
    \end{equation}
    where $\alpha_{j},\beta_{j}\in\mathbb{C}$ satisfy  $|\alpha_{\ell,j}|+|\beta_{\ell,j}|=1$, $\ket{\xi^1_{j}}$, $\ket{\xi^0_{j}}$ are quantum states depending on $j$. The coefficients $\alpha_{j},\beta_{j}$ and the states $\ket{\xi^1_{j}}, \ket{\xi^0_{j}}$ may depend on $k,\ell$.

    When $\widehat{U}_{k,\ell}$ acts on the first half of $\ket{\Psi}$,
    combining Equation~(\ref{eq:Ukl_result}) and Lemma~\ref{lem:choi}, 
    we have
    \begin{equation}
        \widehat{U}_{k,\ell}\ket{\Psi}\ket{0^{a_U}} = \sqrt{\frac{\widehat{M}_{k,\ell}}{N}}\ket{1}\ket{\xi^1} + \sqrt{\frac{N-\widehat{M}_{k,\ell}}{N}}\ket{0}\ket{\xi^0}, 
    \end{equation}
    where $\widehat{M}_{k,\ell}\in[N]$ satisfies
    \begin{equation}
    \label{eq:M_ideal_result}
        M_{k,\ell} \le \widehat{M}_{k,\ell} \le M^{-}_{k,\ell}+M_{k,\ell}+M^{+}_{k,\ell},
    \end{equation}
    and $\ket{\xi^1}$, $\ket{\xi^0}$ are some quantum states.
    
    Let $\widetilde{M}^{ideal}_{k,\ell}$ be the output in Step 2.1.4.\ when replacing $U_{k,\ell}$ with $\widehat{U}_{k,\ell}$.  By the guarantee of quantum counting (\cref{lem:amplitude_estimation}), we have that
    \begin{equation}\label{eq:count_error}
        (1-\dcoun)\widehat{M}_{k,\ell}\le\widetilde{M}^{ideal}_{k,\ell}\le (1+\dcoun)(\widehat{M}_{k,\ell}).
    \end{equation}
    By union bound, the event that Equation \cref{eq:count_error} holds for all $k, \ell$ has probability $1-\negl(n)$.

    Now we lower bound $\widetilde{Z}$.
    Because $M_{k,\ell} \le \widehat{M}_{k,\ell}$, from Equation \cref{eq:count_error}, we have
    \begin{equation*}
        (1-\dcoun)M_{k,\ell} \le \widetilde{M}^{ideal}_{k,\ell}.    
    \end{equation*}
    Multiplying by $e^{-\beta \mathcal{E}_{k,\ell}}$ and combining with Equation \cref{eq:Z1}, we have  
    \begin{equation}
    \label{eq:lower_1}
        (1-\dcoun)Z_{k,\ell}  \le \widetilde{M}^{ideal}_{k,\ell}e^{-\beta \mathcal{E}_{k,\ell}}.
    \end{equation}
    For any $k$, when summing $\ell$ over $\{0,1,\dots, L\}$ in Equation \cref{eq:lower_1}, we have 
    \begin{equation}
    \label{eq:lower_2}
        (1-\dcoun)Z \le \sum_{\ell}\widetilde{M}^{ideal}_{k,\ell}e^{-\beta \mathcal{E}_{k,\ell}}.
    \end{equation}
    
    For the upper bound, because $\widehat{M}_{k,\ell}\le M^{-}_{k,\ell}+M_{k,\ell}+M^{+}_{k,\ell}$, combining with the guarantee of quantum counting (Equation \cref{eq:count_error}),
    we have
    \begin{equation*}
        \widetilde{M}^{ideal}_{k,\ell} \le (1+\dcoun)(M^{-}_{k,\ell}+M_{k,\ell}+M^{+}_{k,\ell}). 
    \end{equation*}
    Multiplying by $e^{-\beta\mathcal{E}_{k,\ell}}$ and combing with Equation \cref{eq:Z3} and \cref{eq:Z4}, we have
    \begin{equation}
    \label{eq:upper_1}
        \widetilde{M}^{ideal}_{k,\ell}e^{-\beta \mathcal{E}_{k,\ell}} \le (1+\dcoun)(Z^{-}_{k,\ell}+Z_{k,\ell}e^{-\beta/L}+Z^{+}_{k,\ell}e^{\beta/L+\beta/L^2}). 
    \end{equation}
    Fixing a $k$ and summing $\ell$ over $\{0,1,\dots, L\}$ in Equation \cref{eq:upper_1}, we have 
    \begin{equation}
    \label{eq:upper_2}
        \sum_{\ell}\widetilde{M}^{ideal}_{k,\ell}e^{-\beta \mathcal{E}_{k,\ell}} \le (1+\dcoun)\Big(\big(\sum_{\ell}Z^{-}_{k,\ell}\big)+Ze^{\beta/L} +\big(\sum_{\ell}Z^{+}e^{\beta/L+\beta/L^2}\big)\Big).
    \end{equation}
    Consider the term $\sum_{\ell}Z^{-}_{k,\ell}+\sum_{\ell}Z^{+}e^{\beta/L+\beta/L^2}$ and sum over $k$. That is
    \begin{equation}
        \label{eq:upper_3}
        \sum_{k=0}^{L-1}\Big(\sum_{\ell}Z^{-}_{k,\ell}+\sum_{\ell}Z^{+}e^{\beta/L+\beta/L^2}\Big). 
    \end{equation}
    Since
    \begin{equation*}
        \bigcup_{k=0}^{L-1}\bigcup_{\ell}I^{-}_{k,\ell} =[-1/L,0) \cup [0,1-1/L^2),
    \end{equation*}
    and all the eigenstates are in the interval $[0,1)$,
    we have that $\sum_{k=0}^{L-1}\sum_{\ell}Z^{-}_{k,\ell}$ is contributed by the eigenstates in the interval $[0,1-1/L^2)$.
    Therefore,
    \begin{equation}
        \label{eq:Zminus_upper}
        \sum_{k=0}^{L-1}\sum_{\ell}Z^{-}_{k,\ell} \le Z.
    \end{equation} 
    Similarly, we have
    \begin{equation}
        \label{eq:Zplus_upper}
        \sum_{k=0}^{L-1}\sum_{\ell}Z^{+}_{k,\ell} \le Z.
    \end{equation} 
    Plugging Equation \cref{eq:Zminus_upper} and \cref{eq:Zplus_upper} into Equation \cref{eq:upper_3}, we have
    \begin{equation}
        \label{eq:upper4}
        \sum_{k=0}^{L-1}\Big(\sum_{\ell}Z^{-}_{k,\ell}+\sum_{\ell}Z^{+}e^{\beta/L+\beta/L^2}\Big)\le Z(1+e^{\beta/L+\beta/L^2}). 
    \end{equation}
    By average argument, there is a $k$ such that 
    \begin{equation}
        \label{eq:average_arg}
        \Big(\sum_{\ell}Z^{-}_{k,\ell}+\sum_{\ell}Z^{+}e^{\beta/L+\beta/L^2}\Big)\le \frac{1}{L}Z(1+e^{\beta/L+\beta/L^2}). 
    \end{equation}
    Let $\widetilde{Z}^{ideal}$ be the output of the algorithm when replacing $U_{k,\ell}$ with $\widetilde{U}_{k,\ell}$.
    By Equation \cref{eq:upper_2} and \cref{eq:average_arg}, we have
    \begin{equation}
        \label{eq:upper_5}
        \widetilde{Z}^{ideal}\le (1+\dcoun)\big(\frac{1}{L}+e^{\beta/L}+\frac{1}{L}e^{\beta/L+\beta/L^2}\big)Z.
    \end{equation}
    Recall that $\dcoun=\frac{1}{4n^c}$,  $L=4n^c\beta$, and $\beta=\poly(n)$.
    We have 
    \begin{equation*}
        \frac{1}{L}+e^{\beta/L}+\frac{1}{L}e^{\beta/L+\beta/L^2}
        =e^{\frac{1}{4n^c}}+\frac{e^{\frac{1}{4n^c}(1+\frac{1}{4n^c\beta})}}{4n^c\beta} +\frac{1}{4n^c\beta}\le 1+\frac{1}{2n^c}
    \end{equation*}
    for sufficiently large $n$.
    As a result, Equation \cref{eq:upper_5} becomes
    \begin{equation}
        \label{eq:upper_6}
        \widetilde{Z}^{ideal}\le \big(1+\frac{1}{4n^c}\big)\big(1+\frac{1}{2n^c}\big)Z\le \big(1+\frac{1}{n^c}\big)Z
    \end{equation}
    for sufficiently large $n$.

    Finally, combing Equation \cref{eq:lower_2} and \cref{eq:upper_6}, we have
    \begin{equation}
        \label{eq:Z_ideal_redult}
        (1-1/n^c)Z\le \widetilde{Z}^{ideal}\le (1+1/n^c)Z.
    \end{equation}
    
    Then, we consider the real implementation.
    There are two steps to change the ideal algorithm to the real implementation.
    First, we define a intermediate algorithm by replacing $\widehat{U}_{\operatorname{EE}}(H,\delta E, \eta=0)$ in the ideal algorithm 
    with $\widehat{U}_{\operatorname{EE}}(H,\delta E, \eta=e^{-n})$. 
    By \cref{lem:amplitude_estimation}, there are $O(\poly(n)\cdot 2^{\frac{n}{2}})$ many $\widehat{U}_{EE}$ in the algorithm.
    Also, we have $\|\widehat{U}_{\operatorname{EE}}(H,\delta E, \eta=e^{-n})-\widehat{U}_{\operatorname{EE}}(H,\delta E, \eta=0)\|\le e^{-n}$. 
    By triangular inequity, the distance between the ideal algorithm  and  the intermediate algorithm is
    $\negl(n)$.

    For the second step, we get the real implementation by replacing $\widehat{U}_{\operatorname{EE}}(H,\delta E, \eta=e^{-n})$ with ${U}_{\operatorname{EE}}(H,\delta E, \eta=e^{-n}, \varepsilon=2^{-n})$.
    We have $\|U_{\operatorname{EE}}(H,\delta E, \eta=e^{-n}, \varepsilon=2^{-n})-\widehat{U}_{\operatorname{EE}}(H,\delta E, \eta=e^{-n})\|\le 2^{-n}$. 
    Again, by triangular inequity, the distance between the intermediate algorithm and the real implementation is 
    $\negl(n)$.
    As a result, the distance between ideal algorithm and the real implementation is 
    $\negl(n)$.
    Constituently, the real implementation outputs the correct estimation of quantum partition function with probability $1-\negl(n)$.
    
    Finally, we analyze the running time. There are $\poly(n)$ of iterations over $k$ and $\ell$ and each iteration runs in $O(\poly(n)\cdot 2^{\frac{n}{2}})$ time.
    Therefore the overall running time is $O^{\ast}(2^{\frac{n}{2}})$.
\end{proof}

\section*{Acknowledgments}

NHC and YCS thank Chunhao Wang and Christopher Ye for valuable discussions. FLG thanks Suguru Tamaki for discussions. The authors thank anonymous referees for their comments.

NHC is supported by NSF Award FET-2535028, NSF Award FET-2243659, NSF Career Award FET-2339116, Google Scholar Award, and DOE Quantum Testbed Finder Award DE-SC0024301. YCS is supported by NSF Award FET-2339116 and NSF Award FET-2243659. AH is supported by JSPS KAKENHI grant No.~24H00071, 25K24674, 25K2446. FLG is supported by JSPS KAKENHI grant No.~24H00071, 25K24674, 25K2446, MEXT Q-LEAP grant No.~JPMXS0120319794, JST ASPIRE grant No.~JPMJAP2302 and JST CREST grant No.~JPMJCR24I4.

\bibliographystyle{alpha}
\bibliography{ref}

\appendix

\section{A trivial fine-grained reduction to $k(\varepsilon)$LH from $k(\varepsilon)$SAT}
\label{sec:appexdixA}

\begin{theorem}[Lower bound of $k(\varepsilon)$LH]
    \label{thm:almost_make_me_cry_thm}
    Assuming SETH (resp. QSETH), for any $\varepsilon > 0$, there is $k$ (depending on $\varepsilon$) such that for any algorithm, there exists $n_0\in\mathbb{N}$ such that for all $n\ge n_0$, there is a Hamiltonian $H$ acting on $n$ qubits, associated $\EnergyYes,\EnergyNo$ satisfying $\EnergyNo-\EnergyYes\ge\Omega(1)$ such that $\LH(H, \EnergyYes, \EnergyNo)$ cannot be solved in $O(2^{n(1-\varepsilon)})$ classical time (resp. $O(2^{\frac{n}{2}(1-\varepsilon)})$ quantum time).
\end{theorem}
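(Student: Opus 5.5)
The plan is a direct, size-preserving reduction from $k$SAT to $k$LH that turns clauses into diagonal projectors. We skip any circuit-to-Hamiltonian construction, so no clock or ancilla qubits are needed. Fix $\varepsilon>0$ and let $k$ be the constant that SETH (resp. QSETH) provides for $\varepsilon$ (\cref{conjecture:SETH}, resp. \cref{conjecture:QSETH}). Thus $k$SAT on $n$ variables with $O(n)$ clauses has no classical $O(2^{(1-\varepsilon)n})$ (resp. quantum $O(2^{(1-\varepsilon)n/2})$) algorithm.

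Given a $k$CNF formula $\Phi=\varphi_1\wedge\cdots\wedge\varphi_m$ on variables $x_1,\dots,x_n$, handle each clause $\varphi_i$ on register $\rgst{S_i}$ (the qubits of its variables, $|\rgst{S_i}|=k_i\le k$) as follows. Let $a^{(i)}\in\{0,1\}^{k_i}$ be the unique local assignment that falsifies $\varphi_i$: each positive literal is set to $0$ and each negated literal to $1$. Define $H_i:=\proj{a^{(i)}}\reg{S_i}\otimes I$ and $H_\Phi:=\sum_{i=1}^m H_i$. Each $H_i$ is a projector acting on at most $k$ qubits, and $m=O(n)$, so $H_\Phi$ is a $k$-local Hamiltonian on exactly $n$ qubits with $\|H_\Phi\|\le m=\poly(n)$. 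The construction takes $\poly(n)$ time.

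Next comes the spectral analysis. Every $H_i$ is diagonal in the computational basis, so $H_\Phi$ is too, and $\bra{x}H_\Phi\ket{x}$ equals the number of clauses that $x$ violates. Hence $\lambda(H_\Phi)=\min_x \#\{i:\varphi_i(x)=0\}$. This minimum is $0$ if $\Phi$ is satisfiable and at least $1$ otherwise. Setting $\EnergyYes=0$ and $\EnergyNo=1$ gives a gap of $\Omega(1)$. The NO case holds for every state $\ket{\psi}$, not only basis states, because $\expec{\psi}{H_\Phi}$ is a convex combination of diagonal entries.

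Finally we assemble the contradiction, in the format of \cref{def:fine-grain_reduction}. Suppose that for this $k$ some algorithm solved $\LH(H,\EnergyYes,\EnergyNo)$ on all sufficiently large $n$ in time $O(2^{n(1-\varepsilon)})$ (resp. $O(2^{\frac n2(1-\varepsilon)})$) with success probability $>2/3$. Composing it with the polynomial-time map $\Phi\mapsto(H_\Phi,0,1)$ would solve $k$SAT in time $O(2^{n(1-\varepsilon)})+\poly(n)$ (resp. the quantum analogue), since the qubit count equals the variable count. This contradicts (Q)SETH, with the usual bookkeeping of $\varepsilon$ versus a slightly smaller $\varepsilon'$ to absorb the polynomial overhead. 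Negating the assumption gives, for each algorithm, infinitely many hard $n$. Upgrading this to all $n\ge n_0$, as the statement requires, needs padding: add dummy variables, or equivalently use the hypotheses in their ``all large $n$'' form. I expect this quantifier alignment to be the only delicate point; the spectral argument itself is routine. The lower bound is weak only because $k$, and hence the locality, grows as $\varepsilon\to0$, which is exactly the limitation the main construction removes.
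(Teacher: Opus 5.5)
Your proof is correct and is the same direct clause-to-projector reduction the paper uses: an $n$-qubit $k$-local Hamiltonian with $E_{\mathrm{yes}}=0$ and $E_{\mathrm{no}}=1$. Your term $H_i=|a^{(i)}\rangle\langle a^{(i)}|$ is in fact the right one, whereas the paper's displayed $H_i=I-|y_i\rangle\langle y_i|$ is its complement and, read literally, would penalize the satisfying local assignments.

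One correction to your closing remark. Padding with dummy variables does not turn ``infinitely many hard $n$'' into ``all $n\ge n_0$'', because the lengths on which a hypothetical fast algorithm succeeds may be arbitrarily sparse. So that quantifier form genuinely requires the almost-everywhere version of (Q)SETH rather than being equivalent to padding, a point the paper's proof also passes over.
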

\begin{proof}
    We construct a Hamiltonian $H$ from a $\kSAT$ instance $\Phi=\varphi_1\wedge \varphi_2\wedge\cdots\wedge\varphi_m$.

    Let $S_i\subseteq[n]$ be the set collecting the index $j$ such that $x_j$ or $\neg x_j$ appears in $\varphi_i$, and let $\rgst{S_i}$ be the corresponding register. 
    Let $y_i\in\{0,1\}^{|S_i|}$ be the assignment to the variables appearing in $\varphi_i$ such that $\varphi_i(y_i)=0$.
    Because $\varphi_i$ is in the disjunctive form, $y_i$ is unique.
    Let $H:=\sum_{i=1}^{m}H_i$, where $H_i:= I-\proj{y_i}\reg{S_i}$ for all $i\in[m]$.
    It holds that $H_i\ket{y_i}=0$ if $\varphi_i(y_i)=1$ and $H_i\ket{y_i}=\ket{y_i}$ if $\varphi_i(y_i)=0$.
    Each $H_i$ acts non-trivially on at most $k$ qubits.
    Hence, $H$ is $k$-local.
    
    If there exists an assignment $x\in\{0,1\}^n$ satisfying $\Phi$, then $H\ket{x}=0$. Otherwise, $\lambda(H)\ge 1$.
    Hence, solving $\LH(H, \EnergyYes=0,\EnergyNo=1)$ can decide $\kSAT$.
    The construction of $H$ takes $\poly(n)$ time.
    Hence, if $\LH(H, \EnergyYes=0,\EnergyNo=1)$ is solved in $O(2^{n}(1-\varepsilon))$ time, then $\kSAT(\Phi)$ is also solved in $O(2^{n(1-\varepsilon)})$ classical time, which violates SETH. The argument assuming QSETH follows in the same proof strategy.
\end{proof}

\end{document}